\newcommand{\gm}{\gamma}
\newcommand{\tx}{{\widetilde x} }
\newcommand{\ty}{{\widetilde y} }
\newcommand{\tz}{{\widetilde z} }
\newcommand\bx{\mathbf{x}}
\newcommand\by{\mathbf{y}}
\newcommand\bz{\mathbf{z}}
\newcommand\bn{\mathbf{n}}
\newcommand\bbm{\mathbf{m}}
\newcommand\bk{\mathbf{k}}
\newcommand\bau{{\textbf B}_1}
\newcommand{\p}{\partial}
\newcommand{\Og}{\Omega}
\newcommand{\fl}[2]{\frac{#1}{#2}}
\newcommand{\nn}{\nonumber}
\newcommand{\bt}{\beta}
\newcommand{\Dt}{\Delta}
\newcommand{\tbx}{\widetilde{\bf x} }
\newcommand{\tby}{\widetilde{\bf y} }
\newcommand{\be}{\begin{equation}}
\newcommand{\ee}{\end{equation}}
\newcommand{\ba}{\begin{array}}
\newcommand{\ea}{\end{array}}
\def\bea{\begin{eqnarray}}
\def\eea{\end{eqnarray}}
\def \beas{\begin{eqnarray*}}
\def \eeas{\end{eqnarray*}}
\newtheorem{remark}{Remark}[section]
\newtheorem{exmp}{Example}[section]
\newtheorem{lemma}{Lemma}[section]
\newtheorem{proof}{Proof}[section]
\def\hatgap{2pt}
\def\subdown{-2pt}
\newcommand\reallywidehat[2][]{%
\renewcommand\stackalignment{l}%
\stackon[\hatgap]{#2}{%
\stretchto{%
    \scalerel*[\widthof{$#2$}]{\kern-.6pt\bigwedge\kern-.6pt}%
    {\rule[-\textheight/2]{1ex}{\textheight}}
}{0.5ex}
_{\smash{\belowbaseline[\subdown]{\scriptstyle#1}}}%
}}
\begin{document}

\begin{frontmatter}

\title{A robust and efficient numerical method to compute the dynamics of the rotating two-component 
dipolar Bose-Einstein condensates}

 \author[iecl,rouen]{Qinglin Tang}
 \ead{tqltql2010@gmail.com}
 
 \address[iecl]{Institut Elie Cartan de Lorraine, Universit\'e de Lorraine, Inria Nancy-Grand Est,
F-54506 Vandoeuvre-l\`es-Nancy Cedex, France}
 \address[rouen]{Laboratoire de Math\'ematiques Rapha\"el Salem, Universit\'e de Rouen,
Technop\^{o}le du Madrillet, 76801 Saint-Etienne-du-Rouvray, France}

\author[irmar]{Yong Zhang \corref{5}}
\ead{sunny5zhang@gmail.com}
\address[irmar]{Universit\'e de Rennes 1, IRMAR, Campus de Beaulieu, 35042 Rennes C\'edex, France}

\author[wpi]{Norbert J. Mauser}
\address[wpi]{Wolfgang Pauli Institute c/o Fak. Mathematik, University Wien, Oskar-Morgenstern-Platz 1, 1090 Vienna, Austria}
\ead{norbert.mauser@univie.ac.at}

 \cortext[5]{Corresponding author.}

\begin{abstract}
In this paper, we propose a robust and efficient numerical method to compute the dynamics of  the rotating two-component  
dipolar Bose-Einstein condensates (BEC). 
Using the rotating Lagrangian coordinates transform \cite{BMTZ2013}, we reformulate the original coupled Gross-Pitaevskii equations (CGPE) into new
equations where the rotating term vanishes and the potential becomes time-dependent. A time-splitting Fourier pseudospectral method is proposed to simulate the new equations 
where the nonlocal Dipole-Dipole Interactions (DDI) are computed by a newly-developed Gaussian-sum (GauSum) solver \cite{EMZ2015} which helps 
achieve spectral accuracy in space within $O(N\log N)$ operations ($N$ is the total number of grid points).
The new method is spectrally accurate in space and second order accurate in time, and the accuracies are confirmed numerically. 
Dynamical properties of some physical quantities,  including the total mass, energy, center of mass and angular momentum expectation, are presented and confirmed numerically. Interesting dynamics phenomena that are peculiar to the rotating two-component dipolar BECs, such as dynamics of center of mass, 
quantized vortex lattices dynamics and the collapse dynamics of 3D cases, are presented.\
\end{abstract}

\begin{keyword}
two-component dipolar BEC, dynamics, Gaussian-sum method, rotating Lagrangian coordinates, time splitting Fourier spectral method, collapse dynamics
\end{keyword}

\end{frontmatter}

\tableofcontents

\section{Introduction}

The  Bose-Einstein condensation (BEC) provides an incredible glimpse into the macroscopic quantum world 
and has been  extensively studied 
 since its first  experimental creation in 1995 \cite{AEMWC1995, BSTH1995, DMADDKK1995}. 
 A subsequent achievement of quantum vortices in rotating BECs \cite{ARVK2001, Madison2000,F2009} broadens the attention to explore vortex states and  their dynamics  associated with superfluidity. 
At early stage,  it was apparent that  the isotropic $s$-wave short-range interatomic interactions govern most of the observed phenomena   \cite{PS2003}.  
However,  recent successful realisation of BECs in the degenerate gas of dipolar bosons \cite{Gri2005, Lu2011, Aik2012} 
have shown that the properties of BECs also depend on the anisotropic  $d$-wave long-range  dipole-dipole interactions (DDI), and  has spurred new impetus in the  study  of dipolar BECs.
Due to the presence of anisotropic DDI, vortices in rotating dipolar BECs exhibit novel properties and richer phenomena \cite{ CRS2005, KC2007, LMSLP2009, ZZ2005}. 
On the other hand, thanks to the development of trapping techniques, 
multi-component condensates are also realised  \cite{HMEWC1998, JCC2001,KTU2003} 
and provide an ideal system for studying phase transitions and coexistence of different phases \cite{Adh2014,GMS2010,Young2012}. 
Far from being a trivial extension of the single-component BEC, the physics of multi-component system admits novel and fundamentally different scenarios such as the domain walls, 
vortons and square vortex lattices \cite{JCC2001, KTU2003, WandEtAl2016}. 
As the simplest case, the two-component BECs provides a good opportunity to investigate the properties of multi-component condensates.

Very recently, several studies related to the vortices of rotating two-component dipolar BECs under different trapping potentials have been investigated in the physics community 
\cite{Gha2014,  WandEtAl2016,Young2012, ZW2016,ZAG2013}. 
At temperatures $T$ much smaller than the critical  temperature $T_c$,  the properties of rotating two-component 
dipolar BECs  are well described by the
macroscopic  complex-valued wave function $\Psi=(\psi_1(\bx,t), \psi_2(\bx,t))^T$ whose evolution
is governed by the celebrating three-dimensional (3D) coupled Gross--Pitaevskii equations (CGPE) with DDI term.  
Moreover,  the 3D CGPE can be reduced to an effective two-dimensional (2D) equation if the external potential is highly strong
 in $z-$direction \cite{CRLB2010, DipJCP}. In a unified way,  the $d-$dimensional ($d=2\ {\rm or}\ 3$) dimensionless 
 CGPE with DDI term reads as \cite{ZW2016,WandEtAl2016,XLS2011,ZAG2013,BC2013}:
\bea\label{DipGPEComp1}
 i\p_t \psi_j({\bx}, t) &=& \left[-\fl{1}{2}\nabla^2 + V_j({\bf x}) -\Omega  L_z+\sum_{k=1}^{2} \left( \beta_{jk} |\psi_k|^2 +
\lambda_{jk}\, \Phi_k(\bx,t)\right) \right]\psi_j(\bx,t),\\
\label{nonlocal_inicon}
\Phi_j(\bx,t)&=&U_{\rm dip}\ast  |\psi_j|^2, \qquad
\psi_j(\bx,t=0)=\psi_j^0(\bx), \qquad j=1,\; 2,\quad \bx\in{\mathbb R}^d, \quad t\ge0.
\eea
Here,  $t$ denotes time, ${\bf x}=(x, y, z)^T \in {\mathbb R}^3$ and/or
${\bf x}=(x, y)^T \in {\mathbb R}^2$ is the
Cartesian coordinate vector.   The constant
$\beta_{jk}$ describes the strength of the short-range interactions in a condensate (positive/negative for repulsive/attractive interaction),  $L_z = -i (x \partial_y - y\partial_x) = -i \partial_{\theta}$ 
is the z-component of the angular momentum and $\Omega$ represents the rotating frequency.
$V_j(\bx)$ ($j=1,2$) is a given real-valued external trapping potential determined by the type of system under investigation.  In most BEC experiments,
a harmonic potential is chosen to trap the condensates, i.e. for $j=1,2$\
\be\label{harm_poten}
V_j(\bx) = \fl{1}{2}\left\{\begin{array}{ll}
 \gm_{x,j}^2x^2 + \gm_{y,j}^2y^2, & d = 2,\\[0.3em]
\gm_{x,j}^2x^2 + \gm_{y,j}^2y^2 + \gm_{z,j}^2z^2, \ \ &d = 3,
\end{array}\right.
\ee
where $\gm_{v,j}$ ($v=x,y,z$) are  dimensionless constants representing the
trapping frequencies in $v$-direction. 
Moreover, $\lambda_{ij}$ ($i,j=1,2$)  is a constant characterizing the strength of DDI and
$U_{\rm dip}(\bx)$ is the long-range DDI potential.  In 3D, $U_{\rm dip}(\bx)$ reads as
\be\label{DDI-3D}
U_{\rm dip}(\bx)= \fl{3}{4\pi |\bx|^3}\left[ 1-\fl{3 (\bx\cdot\bn)^2}{|\bx|^2} \right]
=-\delta(\bx)-3\,\partial_{\bn\bn}   \left( \fl{1}{4\pi|\bx|} \right),    \quad \ \ \bx\in{\mathbb R}^3,
\ee
with $\bn = (n_1, n_2, n_3)^T$, a given unit vector i.e. $|\bn(t)|=\sqrt{n_1^2+n_2^2+n_3^2}=1$, 
representing the dipole axis (or dipole moment), $\partial_\bn=\bn\cdot \nabla$ and $\partial_{\bn\bn}=\partial_\bn(\partial_\bn)$.  
While in 2D, it is defined as \cite{BAC2012, CRLB2010}
\be\label{DDI-2D}
U_{\rm dip}(\bx)=-\fl{3}{2} \left(\partial_{\bn_{\perp}\bn_{\perp}}-n_3^2 \nabla_{\perp}^2 \right) \left( \fl{1}{2\pi|\bx|} \right),    
\quad \ \ \bx\in{\mathbb R}^2,
\ee
where $\nabla_\perp=(\partial_x, \partial_y)^T$,    $\bn_\perp=(n_1,n_2)^T$, $\partial_{\bn_\perp}=\bn_\perp \cdot \nabla_\perp$ and
$\partial_{\bn_{\perp}\bn_{\perp}}=\partial_{\bn_{\perp}}(\partial_{\bn_{\perp}})$.
In fact, for smooth densities, the DDI potential can be reformulated via the Coulomb potential whose convolution kernel is $U_{\rm cou}(\bx) = \frac{\;1\;}{2^{d-1}|\bx|}$. To be precise, the 3D DDI potential \eqref{DDI-3D} is reformulated as follows
\bea\label{DDI2Cou3D}
 \Phi_j(\bx) = -\rho_j - 3\; \partial_{\bn}\partial_{\bn}\left( \frac{1}{4\pi|\bx|}  \ast \rho_j\right)=-\rho_j - 3 \; \frac{1}{4\pi|\bx|}  \ast (\partial_{\bn}\partial_{\bn}\rho_j),
 \quad \bx\in \mathbb R^3,\eea
while the 2D DDI \eqref{DDI-2D} is rewritten as 
\bea\label{DDI2Cou2D}
\Phi_j(\bx) 
=-\fl{3}{2} \; \fl{1}{2\pi|\bx|}\ast [ \left(\partial_{\bn_{\perp}\bn_{\perp}}-n_3^2 \nabla_{\perp}^2 \right) \rho_j],\quad \bx \in \mathbb R^2.
\eea

The CGPE in (\ref{DipGPEComp1})--(\ref{nonlocal_inicon}) conserve two important quantities:  the
{\it mass} (or {\it normalization}) of the wave function 
\bea\label{mass}
 \mathcal{N}_j (t):=\int_{{\mathbb R}^d}|\psi_j({\bx}, t)|^2 d\bx, \quad \ j=1,2, \  t\geq 0,
\eea
and the {\it energy per particle}
\bea\label{energy}
\mathcal{E}(\Psi(\cdot, t))&=&\int_{{\mathbb R}^d}\bigg[\sum_{j=1}^{2}\left(\fl{1}{2}|\nabla\psi_j|^2 + V_j({\bf x})|\psi_j|^2+\fl{\bt_{jj}}{2}
|\psi_{j}|^4+\fl{\lambda_{jj}}{2}\Phi_j\, |\psi_j|^2 -\Omega \psi_j^{*} L_z\psi_j     \right)  \\ \nn
&&\qquad   +	 \frac{1}{2} (\beta_{12}+\beta_{21}) |\psi_1|^2 |\psi_2|^2	 +\frac{1}{4}~(\lambda_{12}+\lambda_{21})( \Phi_{1}|\psi_2|^2 +  \Phi_{2}|\psi_1|^2)\bigg]  d{\bf x} 	\\ \nn
&\equiv& {\mathcal E} (\Psi(\cdot, 0)), \qquad\qquad t\geq 0.
\eea
It is easy to check that  the mass of each component is also conserved, i.e.
\bea\label{norm_compnents}
\mathcal{N}_j (t):=\|\psi_j(\cdot, t)\|^2 := \int_{{\mathbb R}^d} |\psi_j({\bx}, t)|^2 d{\bf x}
\equiv \|\psi_j(\cdot, 0)\|^2, \qquad j=1,2,\quad  t\geq 0.
\eea

There have been extensive mathematical and numerical studies on the single-component dipolar BEC, and we refer the reader to \cite{DipJCP,BAC2012,BC2013,BTZ2015,CMS2008,HMS2010,Bar2008,YY2000,YY2001} for an incomplete list.
For the rotating two-component BEC without DDI, dynamics and stationary states have been studied in \cite{Zhang2007,Wang2007} and x
\cite{WangPhD,Wang2009JSC,LXG} respectively. 
Recently, there are growing interests in the physics community for studying the properties of (non)-rotating 
two-component BEC with DDI \cite{Adh2014,Gha2014, GMS2010,SKU2009,WandEtAl2016,XLS2011,Young2012, ZW2016,ZAG2013}.  However, up to now, there are quite limited numerical/mathematical studies 
 on the rotating two-component BEC with DDI based on the CGPE (\ref{DipGPEComp1})--(\ref{nonlocal_inicon}).   
 In this paper, we aim to contribute to the numerical and mathematical studies of the rotating two-component dipolar BECs.

To compute the dynamics, the main difficulties lie in the nonlocal DDI evaluation and proper treatment of the rotation term.
As is shown before, the DDI can be computed via Coulomb potential. On bounded rectangular domain with Dirichlet boundary condition, 
the Discrete Sine Transform (DST) method applies directly\cite{DipJCP,BC2013}. However, the DST method requires a quite large computation domain in order to achieve a satisfactory accuracy. 
In 2014, Jiang \textit{et al.} \cite{BaoJiangLeslie}  proposed an NonUniform Fast Fourier Transform (NUFFT) solver by adopting the polar/spherical coordinates in the Fourier domain, 
and we refer to \cite{BJTZ2015, BTZ2015} for extensions and applications in the context of Nolinear Schr\"{o}dinger equation (NLSE). 
Recently, using an accurate Gaussian-summation approximation of the convolution kernel, Zhang \textit{et al.} \cite{EMZ2015} introduced a even more efficient and accurate method,
which we shall refer to as GauSum solver hereafter.  Both NUFFT and GauSum solver are fast algorithms with a complexity of $O(N\log N)$ where $N$ is the total number of grid points. 
Compared with the NUFFT solver, the GauSum solver is 3-5 times faster, thus it is an ideal candidate for applications \cite{ATZ2015}.
For the rotation term, Bao \textit{et al.}  \cite{BMTZ2013} developed a rotating Lagrangian coordinates transformation method to reformulate the rotating term into 
a time-dependent trapping potential, and this method allows for the implementation of high order time marching numerical schemes\cite{Ming2014,Besse2015,XC2016}.

\

The main objectives of this paper are  threefold.

\begin{enumerate}

\item  Using the rotating Lagrangian coordinates transform \cite{BMTZ2013}, we reformulate the original CGPE into new equations without rotating term.
Then we develop a robust and efficient numerical method to compute dynamics of the new equations by incorporating the GauSum solver \cite{EMZ2015}, which is designed to compute the nonlocal DDI, 
into an adapted version of the time-splitting Fourier pseudospectral method. Detailed numerical results are reported to confirm the spectral accuracy in space and second order temporal  accuracy
of the proposed method  in 2D and 3D respectively. 

\

\item Develop the dynamical laws for the mass and energy, the angular momentum expectation and center of mass, together with some proofs.
An analytical solution with special initial data is also presented.

\

\item  Apply our method to study the dynamics of center of mass, quantized vortex lattices and non-rotating dipolar BECs under different setups.
In particular, phase separation and collapse dynamics are observed numerically for the 3D cases.

\end{enumerate}

\vspace{0.2cm}

 The rest of the paper is organized as follows. In Section 2, we  present a brief review of the Gaussian-sum method. In Section 3, we derive some 
 dynamical laws for some physical quantities that are usually considered for the standard GPE. We then
propose an efficient and robust time splitting Fourier pseudospectral numerical method for the dynamics simulation. Detailed accuracy tests are presented in Section 4 to confirm  the spatial and temporal accuracy of our method, and some interesting numerical results are also reported. 
Finally, a conclusion is drawn in Section 5.

\section{The DDI evaluation by Gaussian-sum method}
\setcounter{equation}{0}
\setcounter{table}{0}
In the CGPE \eqref{DipGPEComp1}--\eqref{nonlocal_inicon}, due to presence of the confining potential, the density $\rho(\bx):= |\psi(\bx)|^2$ is usually smooth and decays exponentially fast.
As is shown by \eqref{DDI2Cou3D} and \eqref{DDI2Cou2D}, the DDI computation boils down to Coulomb potential evaluation. 
Therefore, in this section, we shall only give a brief self-contained review of the GauSum method \cite{EMZ2015} for Coulomb potential. All subscripts in the section are omitted for brevity.

\

To start with, we first truncate the whole space to a bounded domain, e.g. a square box ${\textbf B}_L := [-L,L]^d$, then 
 rescale it to a unit box ${\textbf B}_1$. Using an smooth approximation of $U_{\rm cou}$ (see $U_{\textrm{GS}}$ in
\eqref{GS-Gene}), the Coulomb potential is split into two integrals, i.e. the  \textit{long-range regular integral} 
and the \textit{short-range singular integral}. To be precise, 
\bea\label{key_form00}
\Phi(\bx) &\approx& \int_{{\textbf B}_1}U_{\rm cou}(\bx-\by)\; \rho(\by) {d} \by =\int_{{\textbf B}_2}U_{\rm cou}(\by)\; \rho(\bx-\by) {d} \by \\
  &=& \int_{{\textbf B}_2}U_{\textrm{GS}}(\by)\; \rho(\bx-\by) {d} \by +  \int_{ {\mathcal B}_{\delta} } \big( U_{\rm cou}(\by)-U_{\textrm{GS}}(\by) \big)\; \rho(\bx-\by) {d}\by + I_{\delta} \\
\label{key_form2} &:= & I_1(\bx)  + I_2(\bx) + I_{\delta} , \quad \qquad \bx \in {\textbf B}_1.
\eea
The remainder integral $ I_{\delta}$ is given explicitly  as 
\bea\label{remInt}
 I_{\delta}=  \int_{{\textbf B}_{2}\setminus {\mathcal B}_{\delta} } \big( U_{\rm cou}(\by)-U_{GS}(\by) \big)\; \rho(\bx-\by) {d}\by,
\eea
 where $\mathcal B_{\delta}:= \{\bx \big | |\bx|\leq \delta\}$  is a very small ball centered at the origin
 and
\bea
\label{GS-Gene} U_{\textrm{GS}}(\by)= U_{\textrm{GS}}(|\by|):=\sum_{q= 0}^Q w_q\, e^{-\tau_q^2 |\by|^2}, \quad Q\in \mathbb N^{+}.
\eea 
Here, $U_{\textrm{GS}}$ is an very accurate approximation of  $U_{\rm cou}$ within the interval $[\delta,2]$, i.e.
\bea
\| U_{\rm cou}(r)-U_{\textrm{GS}}(r)\|_{L^{\infty}([\delta,2])} \leq \varepsilon_{0},  \quad \varepsilon_{0} \in [10^{-16},10^{-14}].
\eea
It can be proved that $I_\delta$ is negligible and we omitted it in computation. 

To compute the \textit{regular integral} $I_1$,  plugging $U_{\textrm{GS}}$ \eqref{GS-Gene}, we have 
\bea
 I_1(\bx)  = \sum_{q = 0}^Q w_q \int_{{\textbf B}_2} 
 e^{-\tau_q^2 |\by|^2} \rho(\bx-\by) {d} \by, \qquad \bx \in \bau.
\eea
The density $\rho(\bx-\by), \bx \in {\textbf B}_1, \by \in  {\textbf B}_2$ is well approximated by Fourier series as follows
\bea\label{FourSeriB3}
\rho(\bz) \approx \sum_{\bk} \widehat{\rho}_\bk\;  \prod_{j = 1}^d  e^{\frac{\;2\pi i \;k_j}  {6} (z^{(j)} +3)}, \quad \quad \bz = (z^{(1)},\hdots,z^{(d)}) \in {\textbf B}_3.
\eea
Careful calculations leads to
\bea
I_1(\bx) & = & 
 \sum_{\bk}   \widehat{\rho}_\bk \left( \sum_{q= 0}^Q w_q    G_\bk^q \right ) \prod_{j = 1}^d  e^{\frac{\;2\pi i  \;\;k_j}  {b_j-a_j} (x^{(j)} - a_j)},
\eea
where 
\bea\label{tensor}
G_\bk^q&=& \prod_{j = 1}^d  \int_{-2}^2 e^{-\tau_q^2 |y^{(j)}|^2}\, e^{\frac{-2\pi i  k_j \; y^{(j)}}  {b_j-a_j} } {d } y^{(j)},
\eea
can be pre-computed once for all if the computation grid remains unchanged.

For the \textit{near-field correction integral} $I_2$,  
the density function $\rho_\bx(\by) :=\rho(\bx-\by)$ is approximated by a low-order Taylor expansion within $\mathcal B_{\delta}$ as follows
\bea\label{taylor}
\rho_{\bx}(\by) \approx \mathrm P_\bx(\by)= 
\rho_{\bx}(\textbf{0}) + \sum_{j=1}^d \frac{\partial \rho_{\bx}(\textbf{0})}{\partial y_j} y_j
+ \frac{1}{2}\sum_{j,k=1}^d \frac{\partial^2 \rho_{\bx}(\textbf{0})}{\partial y_j \partial y_k} y_j\, y_k +
\frac{1}{6}\sum_{j,k,\ell=1}^d \frac{\partial^3 \rho_{\bx}(\textbf{0})}{\partial y_j \partial y_k \partial y_{\ell}} y_j\, y_k\, y_{\ell}.\eea 
We then integrate it in spherical/polar coordinates. The computation boils down to a multiplication 
of the Laplacian $\Delta \rho$ since the contributions of the odd derivatives in \eqref{taylor} and 
off-diagonal components of the Hessian vanish. Derivatives of $\rho$ are computed via its Fourier series. 
For more details, we refer the reader to \cite{ATZ2015,EMZ2015}.

The GauSum method achieves a spectral accuracy and is essentially as efficient as FFT algorithms within $O(N \log N)$ arithmetic operations.
The algorithm has been implemented for DDI \cite{EMZ2015} and applied in the studies of fractional Schr\"{o}dinger equations \cite{ATZ2015}.

\section{Dynamics properties and the numerical method}

\setcounter{equation}{0}
\setcounter{table}{0}
In this section, we first present analogous dynamical laws for 
some commonly used quantities in classical rotating CGPE.
Then, we extend the rotating Lagrangian coordinate transform
proposed for the classical GPE in \cite{BMTZ2013}. In the rotating Lagrangian coordinates, 
the rotation term vanishes, instead the potential becomes time-dependent.
For the new equation, we shall propose a time-splitting Fourier spectral method 
incorporated with the GauSum solver to compute the dynamics.

\subsection{Dynamical properties}
\label{sec: dyn_prop}

Here we study the dynamical properties of  the mass, energy, angular momentum expectation and center of 
mass. The dynamical laws can be used as benchmarks to test the numerical methods and are briefly listed here.  
For details, one can prove in an analogous way to the one component \cite{TangPhD,BMTZ2013} or two-component without DDI \cite{BC2013}.

\noindent
\textbf{Mass and energy.} 
The CGPE (\ref{DipGPEComp1})-(\ref{nonlocal_inicon}) 
conserves  the mass (\ref{mass}) and energy (\ref{energy}), i.e.

\be\label{MassEnerg}
\mathcal{N}_1(t)=\mathcal{N}_1(t=0), \quad  \mathcal{N}_2(t)=\mathcal{N}_2(t=0),\quad  
\mathcal{N}(t):=(\mathcal{N}_{1}+\mathcal{N}_{2})(t) = \mathcal{N}(t=0),\quad  \mathcal{E}(t)=\mathcal{E}(t=0)\quad
\ee

\

\noindent\textbf{Angular momentum expectation.}   The {\it angular momentum expectation} for each component 
and the total angular momentum are defined repectively as
\be\label{AME}
\langle L_z\rangle_j(t) =  \int_{{\Bbb R}^d} \psi^*_j(\bx,t)  L_z \psi_j(\bx,t)\,d\bx,\quad j=1,2, \quad 
\langle L_z\rangle(t) = \langle L_z\rangle_1(t) + \langle L_z\rangle_2(t), \quad t\geq 0.
\ee

\begin{lemma}
\label{lawsDyn_AME}
If $V_j(\bx)$ reads as the harmonic potential, we have for  $j=1,2$ and $k_j=3-j$
\be\label{AME_Law}
\fl{d }{d t}\langle L_z\rangle_j(t)=w_j^{-}\int_{\mathbb{R}^{d}} xy|\psi_j|^2 d\bx
+\int_{\mathbb{R}^{d}} |\psi_j|^2 (y\p_x-x\p_y) \Big(\beta_{jk_{j}}|\psi_{k_j}|^2
+\sum_{k=1}^{2}\lambda_{jk}\Phi_k(\bx,t)\Big) d\bx, 
\ee
Moreover, if additionally $\beta_{12} = \beta_{21}$, we have
\be
\label{AME_Law2}
\fl{d }{d t}\langle L_z\rangle(t)
=\sum_{j=1}^{2}w_j^{-} \int_{\mathbb{R}^{d}} xy|\psi_j|^2 d\bx
+\sum_{j,k=1}^{2}\lambda_{jk}\int_{\mathbb{R}^{d}} |\psi_j|^2 (y\p_x-x\p_y) \Phi_k(\bx,t)  d\bx.
\ee
Here $w_j^{-}=\gm_{x,j}^2-\gm_{y,j}^2$. 
This implies that the total {\it  angular momentum expectation }
$\langle L_z\rangle(t)$ is conserved, i.e. 
\be
\langle L_z\rangle(t) =\langle L_z\rangle(0),\qquad t\ge0,  
\ee
if  $\gm_{x,j}=\gm_{y,j}$ and 
one of the following condition holds:
(i).  $\lambda_{11}=\lambda_{12}=\lambda_{21}=\lambda_{22}=0.$
(ii). $\lambda_{12}=\lambda_{21}$ and   
the dipole axises parallel to the $z$-axis, i.e, $\bn_1=\bn_2=(0,0,1)^T.$
Moreover, the {\it angular momentum expectation} for each component is also conserved, i.e.
\be
\langle L_z\rangle_j(t) =\langle L_z\rangle_j(0),\qquad j=1, 2, \quad  t\ge0,  
\ee
if additionally provide 
$\lambda_{12}=\lambda_{21}=\beta_{12}=\beta_{21}=0.$
\end{lemma}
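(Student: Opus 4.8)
The plan is to differentiate the definition \eqref{AME} in time, insert the CGPE \eqref{DipGPEComp1}, and reduce the derivative to a commutator expectation. Writing the $j$-th Hamiltonian as $H_j=-\tfrac12\nabla^2+V_j-\Omega L_z+g_j$ with $g_j:=\sum_{k=1}^{2}\big(\beta_{jk}|\psi_k|^2+\lambda_{jk}\Phi_k\big)$ a real (time- and $\psi$-dependent) multiplication operator --- so $H_j$ is self-adjoint at each fixed $t$ --- the usual manipulation ($\p_t\psi_j=-iH_j\psi_j$ together with self-adjointness of $H_j$ and $L_z$) gives $\tfrac{d}{dt}\langle L_z\rangle_j(t)=\int_{\mathbb{R}^d}\psi_j^{*}\,i[H_j,L_z]\,\psi_j\,d\bx$, an identity that survives the nonlinearity precisely because $H_j$ is self-adjoint for each $t$. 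Since $-\tfrac12\nabla^2$ and $-\Omega L_z$ are rotationally invariant they commute with $L_z$, so only $V_j$ and $g_j$ contribute; for multiplication by a function $h$ one has $i[h,L_z]=(y\p_x-x\p_y)h$, which applied to the harmonic $V_j$ gives $(\gm_{x,j}^2-\gm_{y,j}^2)xy=w_j^{-}xy$ and applied to $g_j$ gives $(y\p_x-x\p_y)g_j$. This reproduces \eqref{AME_Law} up to the single extra term $\beta_{jj}\int_{\mathbb{R}^d}\rho_j(y\p_x-x\p_y)\rho_j\,d\bx=\tfrac{\beta_{jj}}{2}\int_{\mathbb{R}^d}(y\p_x-x\p_y)(\rho_j^2)\,d\bx$ (with $\rho_j:=|\psi_j|^2$), which vanishes because $y\p_x-x\p_y$ is a divergence-free rotational field acting on a rapidly decaying function; this establishes \eqref{AME_Law}. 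Summing \eqref{AME_Law} over $j=1,2$, the only terms not already of the claimed form are the cross short-range contributions $\beta_{12}\int\rho_1(y\p_x-x\p_y)\rho_2\,d\bx+\beta_{21}\int\rho_2(y\p_x-x\p_y)\rho_1\,d\bx$, which under $\beta_{12}=\beta_{21}$ merge into $\beta_{12}\int_{\mathbb{R}^d}(y\p_x-x\p_y)(\rho_1\rho_2)\,d\bx=0$ by the same argument; what is left is exactly \eqref{AME_Law2}.

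For the conservation statements, note that $\gm_{x,j}=\gm_{y,j}$ for $j=1,2$ makes every $w_j^{-}$ vanish, so by \eqref{AME_Law2} it suffices to show $\sum_{j,k=1}^{2}\lambda_{jk}\int_{\mathbb{R}^d}\rho_j\,\mathcal{L}\Phi_k\,d\bx=0$, where $\mathcal{L}:=y\p_x-x\p_y$. Case (i) is immediate. For case (ii), the key observation is that with $\bn_1=\bn_2=(0,0,1)^T$ the kernel $U_{\rm dip}$ in \eqref{DDI-3D} --- resp. \eqref{DDI-2D} --- depends only on $|\bx|$ and the $z$-coordinate, hence is invariant under rotations about the $z$-axis; viewing $U_{\rm dip}(\bx-\by)$ as a function of the pair $(\bx,\by)$, whose diagonal rotation has infinitesimal generator $\mathcal{L}_{\bx}+\mathcal{L}_{\by}$, this invariance reads $\mathcal{L}_{\bx}U_{\rm dip}(\bx-\by)=-\mathcal{L}_{\by}U_{\rm dip}(\bx-\by)$. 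Integrating by parts in $\by$ (legitimate by the decay of $\rho_k$ and the anti-self-adjointness of $\mathcal{L}$) turns this into $\mathcal{L}\Phi_k=U_{\rm dip}\ast(\mathcal{L}\rho_k)$, and combining it with the symmetry of convolution against the even kernel $U_{\rm dip}$ and, once more, the anti-self-adjointness of $\mathcal{L}$ yields the antisymmetry relation $\int_{\mathbb{R}^d}\rho_j\,\mathcal{L}\Phi_k\,d\bx=-\int_{\mathbb{R}^d}\rho_k\,\mathcal{L}\Phi_j\,d\bx$. Hence every diagonal term ($j=k$) is zero and the off-diagonal pair collapses to $(\lambda_{12}-\lambda_{21})\int_{\mathbb{R}^d}\rho_1\,\mathcal{L}\Phi_2\,d\bx$, which vanishes when $\lambda_{12}=\lambda_{21}$; therefore $\tfrac{d}{dt}\langle L_z\rangle\equiv0$.

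Finally, for the per-component conservation I would impose in addition $\lambda_{12}=\lambda_{21}=\beta_{12}=\beta_{21}=0$, so that \eqref{AME_Law} collapses to $\tfrac{d}{dt}\langle L_z\rangle_j=w_j^{-}\int_{\mathbb{R}^d}xy\,\rho_j\,d\bx+\lambda_{jj}\int_{\mathbb{R}^d}\rho_j\,\mathcal{L}\Phi_j\,d\bx$; with $\gm_{x,j}=\gm_{y,j}$ the first term is $0$, and the second is $0$ either because $\lambda_{jj}=0$ (case (i)) or by the diagonal case of the antisymmetry relation above (case (ii)). I expect the one genuinely non-mechanical step to be the treatment of the nonlocal DDI term in case (ii) --- deriving $\mathcal{L}\Phi_k=U_{\rm dip}\ast(\mathcal{L}\rho_k)$ from the rotational symmetry of the dipolar kernel and then extracting the antisymmetry $\int\rho_j\,\mathcal{L}\Phi_k=-\int\rho_k\,\mathcal{L}\Phi_j$; the commutator computations and the several integrations by parts are routine and are all justified by the smoothness and fast decay of $\psi_j$ guaranteed by the confining potential.
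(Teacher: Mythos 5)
Your argument is correct, and its overall architecture matches the paper's: everything reduces to the antisymmetry $\int_{\mathbb{R}^d}\rho_j\,(y\p_x-x\p_y)\Phi_k\,d\bx=-\int_{\mathbb{R}^d}\rho_k\,(y\p_x-x\p_y)\Phi_j\,d\bx$ of the DDI pairing when $\bn=(0,0,1)^T$, which kills the diagonal terms and, under $\lambda_{12}=\lambda_{21}$, the off-diagonal pair. The difference is in how that key identity is established. The paper passes to Fourier space via Plancherel, observes that $\widehat{U}_{\rm dip}(\xi)$ is cylindrically (resp.\ polar) symmetric when $\bn=(0,0,1)^T$ so that $\p_{\theta_\xi}$ commutes with multiplication by $\widehat{U}_{\rm dip}$, and then integrates by parts in $\theta_\xi$; you stay in physical space, using the invariance of the kernel under rotations about the $z$-axis to write $\mathcal{L}_{\bx}U_{\rm dip}(\bx-\by)=-\mathcal{L}_{\by}U_{\rm dip}(\bx-\by)$, hence $\mathcal{L}\Phi_k=U_{\rm dip}\ast(\mathcal{L}\rho_k)$, and then the evenness of $U_{\rm dip}$ plus anti-self-adjointness of $\mathcal{L}$. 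The two are equivalent (evenness of the real kernel corresponds to $\widehat{U}_{\rm dip}$ being real, and axial symmetry is preserved by the Fourier transform), and both are formal to the same degree regarding the singular kernel; the Fourier route has the minor advantage that $\widehat{U}_{\rm dip}$ is given in closed form so its axial symmetry is visible by inspection, while your route avoids Plancherel entirely. You also supply the Ehrenfest/commutator derivation of \eqref{AME_Law} and the vanishing of the $\beta_{jj}$ and symmetric cross-$\beta$ terms via $\int\mathcal{L}(fg)=0$, which the paper omits and delegates to the cited references; that part is standard and correctly executed.
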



\begin{proof}
Let us take a close look at the DDI term in (\ref{AME_Law2}). Using the Plancherel's formula, we have
\bea
\langle\rho_{j},  (y\p_x-x\p_y) \Phi_k  \rangle &:= &\int_{\mathbb{R}^{d}} |\psi_j|^2 (-\p_{\theta} \Phi_k)  d\bx = 
\frac{1}{(2\pi)^d}\langle\widehat \rho_{j},  \reallywidehat{-\p_{\theta}\Phi_k}  \rangle \\
&=& \frac{1}{(2\pi)^d} \langle\widehat \rho_{j},  -\p_{\theta_{\xi}} \widehat{U_{\rm dip}} \widehat{\rho_{k}} \rangle  
\eea
where the Fourier transform is defined as $\widehat f (\xi) = \int_{\mathbb R^{d}} f(\bx) e^{-i \xi \cdot \bx} d \bx$, 
\be
 \widehat{U}_{\rm dip}(\xi)=\left\{
 \begin{array}{lr}
 -1+\fl{3(\bn\cdot\xi)^2}{|\xi|^2}, & d=3, \\
 \fl{2[(\bn_{\perp}\cdot\xi)^2-n_3^2|\xi|^2]}{2|\xi|}, & d=2,
 \end{array}
 \right.
 \ee
and $\theta, \theta_{\xi}$ are the azimuth angle in physical/Fourier space respectively. 
For $\bn=(0,0,1)^T$,  it is easy to see that $ \widehat{U}_{\rm dip}(\xi)$ is cylindrical/polar symmetric in 3D and 2D respectively, and we have
 \bea
 \label{proof1}
\langle \rho_{j},-\p_{\theta} \Phi_{k}\rangle &=& \frac{1}{(2\pi)^d} \langle\widehat \rho_{j},  -\p_{\theta_{\xi}} ( \widehat{U_{\rm dip}} \widehat{\rho_{k}}) \rangle = \frac{1}{(2\pi)^d}\langle   \widehat{U_{\rm dip}}  \widehat\rho_{j},  -\p_{\theta_{\xi}}\widehat{\rho_{k}} \rangle  \\
&=& \langle  \Phi_{j},-\p_{\theta} \rho_{k}\rangle  = \langle \p_{\theta} \Phi_{j}, \rho_{k}\rangle\\
&=&-\langle \rho_{k},-\p_{\theta} \Phi_{j}\rangle.
 \eea
 The proof is then completed due to the above anti-symmetric property  in the index $(j,k)$.

\end{proof}

\noindent\textbf{Center of mass.} The (total) center of mass is defined  as 
\be
\label{CoM}
 \bx_{c,j}(t) = \int_{{\mathbb R}^d} \bx\, |\psi_j(\bx,t)|^2 d\bx, \quad j=1,\ 2,\qquad \qquad 
 \bx_c(t) = \bx_{c,1}(t)+ \bx_{c,2}(t).      \quad    t\ge0.
\ee

\begin{lemma}
\label{lawsDyn_COM}
If $V_j(\bx)$ reads as the harmonic potential, we have for  $j=1,2$ and $k_j=3-j$
\bea\label{CoMLaw1}
&&\ddot{\bx}_{c,j} -2\Og J_d \dot{\bx}_{c,j}+(\Lambda_{d,j} + \Og^2 J_d^2) \bx_{c,j}=\int_{{\mathbb R}^d} 
\Big(\beta_{j,k_{j}} |\psi_{k_j}|^2+\lambda_{j,k_{j}}\Phi_{k_{j}}\Big)\nabla|\psi_j|^2\,d\bx, \\
\label{CoMLaw1_ini}
&& \bx^0_{c,j} =\int_{{\mathbb R}^d} \bx\, |\psi_j^0(\bx)|^2 d\bx, \qquad\qquad
\dot{\bx}^0_{c,j}=\int_{{\mathbb R}^d} {\rm Im} (\bar{\psi}_j^0\nabla\psi_j^0) d\bx+\Og J_{d}\bx^0_{c,j}, 
\eea
where,  
\be
\label{matrix}
J_d=
\left\{
\begin{array}{c}
 \left(\begin{array}{cc}
0 & 1  \\
-1 & 0
\end{array}\right), \\[1.5em]
\left(\begin{array}{cc}
J_2 & {\bf 0} \\
{\bf 0} & {\bf 0} 
\end{array}\right), 
\end{array}
\right.
\qquad\qquad
\Lambda_{d,j}=
\left\{
\begin{array}{cl}
 \left(\begin{array}{cc}
\gm^2_{x,j} & 0  \\
0 & \gm^2_{y,j}
\end{array}\right), & \qquad d=2, \\[1.5em]
\left(\begin{array}{cc}
\Lambda_{j,2} & {\bf 0} \\
{\bf 0} &  \gm^2_{z,j}
\end{array}\right),  & \qquad d=3.
\end{array}
\right.
\ee
%
Moreover, if $V_1(\bx)=V_2(\bx),$ $\beta_{12}=\beta_{21}$ and $\lambda_{12}=\lambda_{21},$ we have
\bea\label{CoMLaw2}
&&\ddot{\bx}_c -2\Og J_d \dot{\bx}_c+(\Lambda_{d,1} + \Og^2 J_d^2) \bx_c={\bf 0}, \\[0.2em]
\label{CoMLaw2_ini}
&& \bx^0_{c} =\bx^0_{c,1} + \bx^0_{c,2}, \qquad\qquad
\dot{\bx}^0_{c}= \dot{\bx}^0_{c,1}+ \dot{\bx}^0_{c,2}.
\eea

\end{lemma}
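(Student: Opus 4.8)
The plan is a two-step Ehrenfest-type calculation carried out for each component separately, followed by summation over $j=1,2$ and an appeal to the symmetry of the convolution. Throughout write $\rho_j=|\psi_j|^2$ and $\mathbf J_j=\mathrm{Im}(\bar\psi_j\nabla\psi_j)$, and assume (guaranteed here by the harmonic confinement) that $\psi_j,\rho_j,\Phi_j$ decay fast enough that every integration by parts below discards its boundary term. First I would compute $\dot{\bx}_{c,j}$: multiplying \eqref{DipGPEComp1} by $\bar\psi_j$ and subtracting the complex conjugate, and using that $V_j$ and the nonlinear/nonlocal terms are real while $L_z=-i(x\p_y-y\p_x)$, gives the continuity equation $\p_t\rho_j=-\nabla\cdot\mathbf J_j+\Og\,(x\p_y-y\p_x)\rho_j$. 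Multiplying by $\bx$, integrating over $\mathbb R^d$, integrating the divergence by parts, and evaluating the rotational part componentwise (using $\int x^2\p_y\rho_j=0$, $\int xy\,\p_x\rho_j=-\int y\rho_j$, etc.) yields $\dot{\bx}_{c,j}=\int_{\mathbb R^d}\mathbf J_j\,d\bx+\Og J_d\bx_{c,j}$ with $J_d$ as in \eqref{matrix}; evaluated at $t=0$ this is precisely the formula for $\dot{\bx}^0_{c,j}$ in \eqref{CoMLaw1_ini}, while $\bx^0_{c,j}$ is the definition of $\bx_{c,j}$ at $t=0$.

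Next I would differentiate once more. Put $\mathbf P_j:=\int\mathbf J_j\,d\bx=\langle\psi_j,-i\nabla\psi_j\rangle$ and let $H_j$ be the (instantaneously self-adjoint) operator on the right of \eqref{DipGPEComp1}. Applying the commutator identity $\frac{d}{dt}\langle\psi_j,A\psi_j\rangle=\langle\psi_j,i[H_j,A]\psi_j\rangle$ with $A=-i\nabla$ gives $\dot{\mathbf P}_j=\langle\psi_j,[H_j,\nabla]\psi_j\rangle$. Using $[\nabla^2,\nabla]=0$, $[W,\nabla]=-\nabla W$ for a real multiplication operator $W$, and $[L_z,\p_x]=i\p_y$, $[L_z,\p_y]=-i\p_x$, $[L_z,\p_z]=0$, one gets, with $W_j:=\sum_{k=1}^2(\beta_{jk}|\psi_k|^2+\lambda_{jk}\Phi_k)$,
\[
\dot{\mathbf P}_j=-\int_{\mathbb R^d}\rho_j\nabla V_j\,d\bx-\int_{\mathbb R^d}\rho_j\nabla W_j\,d\bx+\Og J_d\mathbf P_j .
\]
Then $\ddot{\bx}_{c,j}=\dot{\mathbf P}_j+\Og J_d\dot{\bx}_{c,j}$; substituting the line above together with $\mathbf P_j=\dot{\bx}_{c,j}-\Og J_d\bx_{c,j}$, and using $\nabla V_j=\Lambda_{d,j}\bx$ for the harmonic trap so that $\int\rho_j\nabla V_j=\Lambda_{d,j}\bx_{c,j}$, the $\Og$-terms collapse to $-2\Og J_d\dot{\bx}_{c,j}+\Og^2J_d^2\bx_{c,j}$, leaving $\ddot{\bx}_{c,j}-2\Og J_d\dot{\bx}_{c,j}+(\Lambda_{d,j}+\Og^2J_d^2)\bx_{c,j}=-\int_{\mathbb R^d}\rho_j\nabla W_j\,d\bx$.

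It remains to simplify the source. The self-interaction pieces of $W_j$ drop out: $\int\rho_j\nabla(\beta_{jj}\rho_j)=\frac{\beta_{jj}}2\int\nabla(\rho_j^2)=0$, and $\int\rho_j\nabla(\lambda_{jj}\Phi_j)=0$ because $\Phi_j=U_{\rm dip}\ast\rho_j$ with $U_{\rm dip}$ even (a function of $|\bx|$ and $(\bx\cdot\bn)^2$), so that $\iint\rho_j(\bx)\nabla U_{\rm dip}(\bx-\by)\rho_j(\by)\,d\bx\,d\by$ is antisymmetric under $\bx\leftrightarrow\by$. Integrating the remaining cross terms by parts turns $-\int\rho_j\nabla(\beta_{j,k_j}\rho_{k_j}+\lambda_{j,k_j}\Phi_{k_j})$ into $\int(\beta_{j,k_j}|\psi_{k_j}|^2+\lambda_{j,k_j}\Phi_{k_j})\nabla|\psi_j|^2$, which is \eqref{CoMLaw1}; \eqref{CoMLaw1_ini} follows from the definition and the first-derivative formula above. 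Adding the $j=1,2$ equations and imposing $V_1=V_2$ (hence $\Lambda_{d,1}=\Lambda_{d,2}$), $\beta_{12}=\beta_{21}$, $\lambda_{12}=\lambda_{21}$, the combined right-hand side becomes $\beta_{12}\int\nabla(\rho_1\rho_2)\,d\bx+\lambda_{12}\big[\int\rho_1\nabla(U_{\rm dip}\ast\rho_2)+\int\rho_2\nabla(U_{\rm dip}\ast\rho_1)\big]\,d\bx$: the first integral vanishes as a perfect gradient and the bracket vanishes by the same $\bx\leftrightarrow\by$ antisymmetry, giving \eqref{CoMLaw2}, and \eqref{CoMLaw2_ini} then follows by linearity.

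The step I expect to be the main obstacle is the careful bookkeeping of the rotation term, which enters both the continuity equation and $\dot{\mathbf P}_j$: one must check that $\Og J_d\mathbf P_j$ together with the relation $\mathbf P_j=\dot{\bx}_{c,j}-\Og J_d\bx_{c,j}$ reproduces exactly the combination $\tfrac{d^2}{dt^2}-2\Og J_d\tfrac{d}{dt}+\Og^2J_d^2$ acting on $\bx_{c,j}$, with the matrices $J_d,\Lambda_{d,j}$ of \eqref{matrix} appearing with the right signs in both $d=2$ and $d=3$. A secondary but essential point is the convolution-symmetry cancellation of the dipolar self- and cross-terms, which relies on $U_{\rm dip}$ being even and on $\Phi_k$ decaying fast enough to justify dropping boundary terms in the integrations by parts — both of which hold because of the harmonic trapping.
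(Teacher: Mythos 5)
Your argument is correct, and it is essentially the standard route: the paper itself omits the proof of this lemma and defers to \cite{TangPhD,BMTZ2013,BC2013}, where exactly this Ehrenfest-type computation is carried out — the continuity equation giving $\dot{\bx}_{c,j}=\int \mathbf{J}_j\,d\bx+\Og J_d\bx_{c,j}$, the momentum identity for $\ddot{\bx}_{c,j}$, and the convolution-antisymmetry cancellation of the self- and symmetric cross-interaction terms. All the delicate points you flag (the collapse of the rotation terms into $-2\Og J_d\,d/dt+\Og^2J_d^2$ and the evenness of $U_{\rm dip}$) check out as you describe.
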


\noindent
{\bf An analytical solution under special initial data.} An interesting application
of the dynamical law (\ref{CoMLaw2}) for the total center of mass is that under some circumstances
we can construct an analytical solution to the  CGPE. Precisely speaking,
suppose the initial condition $\psi_j^0$ in (\ref{nonlocal_inicon}) is chosen as 
\be
\psi_{j}^0(\bx)=\phi_j^s(\bx-\bx_0),  \qquad \bx\in \mathbb{R}^d,
\ee
where $\bx_0\in \mathbb{R}^d$ is a given point and $\phi_j^s$ ($j=1,2$) is 
a stationary state of the CGPE, i.e.
\bea
\label{stationary}
&&\mu_j^s\phi_j^s=\left[-\fl{1}{2}\nabla^2+V_j(\bx)-\Og L_z+\sum_{k=1}^2\left(\beta_{jk}|\phi_k^s|^2+
\lambda_{jk} U_{\rm dip}\ast|\phi_k^s|^2\right)\right] \phi_j^s, \\
&& \int_{\mathbb{R}^d} |\phi_j^s|^2\,d\bx=1, \;\; j = 1,2.
\eea
where  $\mu_j^s\in \mathbb{R}$ ($j=1,2$) are the chemical potentials. With this initial value, and suppose $V_1(\bx)=V_2(\bx)$,
the exact solution of the  CGPE with harmonic potential can be constructed as
\be
\label{exact_sol}
\psi_j(\bx, t)=\phi_j^s(\bx-\bx_c(t)) e^{-i\mu_j^s t} e^{i w(\bx, t)},\quad \bx\in \mathbb{R}^d, \quad t\ge 0,
\ee
where $w(\bx, t)$ is linear in $\bx$, i.e.
\be
w(\bx, t)={\bf c}_1(t)\cdot \bx  + {\bf c}_2(t), \quad \bx\in \mathbb{R}^d, \quad t\ge 0,
\ee
with some functions ${\bf c}_1(t),\,{\bf c}_2(t)$, and $\bx(t)$ satisfying the ODE (\ref{CoMLaw2}) with initial condition
\be
\bx_c^0  =\bx^0,  \qquad\qquad
\dot{\bx}_{c}^0=-\Og J_{d}\bx^0.
\ee

\subsection{Numerical method}

\subsubsection{CGPE under rotating Lagrangian coordinates}
In this section, we first introduce a rotating Lagrangian coordinate  and then reformulate the CGPE
(\ref{DipGPEComp1})--(\ref{nonlocal_inicon}) in the new coordinate system.    For any time $t\geq 0$, let
${\bf A}_d(t)$ be an orthogonal rotational matrix in $\mathbb{R}^d$ defined as \cite{BMTZ2013, Ming2014}
\bea\label{Amatrix}
{\bf A}_d(t)=\left(\begin{array}{cc}
\cos(\Omega t) & \sin(\Omega t) \\
-\sin(\Omega t) & \cos(\Omega t)
 \end{array}\right),  \quad  {\rm if} \ d = 2, \qquad 
{\bf A}_d(t)=\left(\begin{array}{cc}
          {\bf A}_2(t) & {\bf 0} \\
          {\bf 0}  & 1
         \end{array}\right), \quad\ \  {\rm if} \ d = 3. 
\eea
It is easy to verify that ${\bf A}^{-1}_d(t) ={\bf A}^T_d(t)$ for any $t\ge0$ and $A(0) = I$  
with $I$ the identity matrix.  
For $\forall \ t\ge0$,  the
 {\it rotating Lagrangian coordinates} $\tbx$ is defined as 
\bea\label{transform}
\tbx={\bf A}^{-1}_d(t) \bx={\bf A}^T_d(t)\bx \quad \Longleftrightarrow \quad \bx= {\bf A}_d(t){\tbx},   \quad\  \bx\in {\mathbb R}^d,
\quad t\geq 0.
\eea
Denotes  the wave function  in the new coordinates as  $\phi_j(\tbx, t):$   
\bea\label{transform79}
\phi_j(\tbx, t):=\psi_j(\bx, t)= \psi_j\left({\bf A}_d(t){\tbx},t\right),  \quad j=1,2  \qquad \tbx\in {\mathbb R}^d, \quad t\geq0.
\eea
By simple calculation, we have
\beas\label{operator1}
&&i\p_t\phi_j(\tbx,t) =i\p_t\psi_j(\bx, t)  +i \nabla_{\bx}\psi_j(\bx,t)\cdot\left(\dot {\bf A}_d(t) \tbx\right)
= i\p_t\psi_j(\bx,t)+ \Og L_{z}\psi_j(\bx,t),\qquad\quad\\
\label{operator2}
&&\nabla_{\tbx}\phi_j(\tbx,t) = {A^{-1}_{d}}(t)\nabla_{\bx}\psi_j(\bx,t) , \quad \nabla^2_{\tbx}\phi_j(\tbx, t)
= \nabla^2_{\bx}\psi_j(\bx,t),\quad \ \bx\in{\mathbb R}^d, \quad t\geq0.
\eeas
 Substituting the above derivatives  into (\ref{DipGPEComp1})- (\ref{nonlocal_inicon})   leads to  the  following
$d$-dimensional CGPE in the rotating Lagrangian coordinates $\tbx$, for $j=1,2$
\bea\label{DipGPERot}
&&i\fl{\p \phi_j(\widetilde{\bx}, t)}{\p t} = \left[-\fl{1}{2}\nabla^2 +\mathcal{W}_j(\widetilde{\bx}, t)+\sum_{k=1}^{2}
\big( \bt_{jk}|\phi_k|^2 + \lambda_{jk}\widetilde{\Phi}_k\big) \right]\phi_j, \quad \ \tbx\in{\mathbb R}^d, \quad t >0,\\
&&\label{DDI_iniRot}
\widetilde{\Phi}_k(\tbx,t)=\widetilde{U}_{\rm dip}\ast |\phi_k|^2, \qquad\qquad
\phi_j(\tbx, 0) := \phi_j^0(\tbx) = \psi_{j}^0(\bx), \quad\qquad  \tbx=\bx\in{\Bbb R}^d.
\eea
Here, $\mathcal{W}_j(\tbx,t) = V_j({\bf A}_d(t)\tbx)$ ($j = 1, 2$) and 
the DDI kernel $\widetilde{U}_{\rm dip}(\tbx,t)$ reads as
\be\label{rot-DDI}
\widetilde{U}_{\rm dip}(\tbx,t)=
\left\{\begin{array}{lr}
   -\delta(\tbx)-3\,\partial_{\bbm(t)\bbm(t)}   \left( \fl{1}{4\pi|\tbx|} \right), &  d=3,    \\[0.5em]
   -\fl{3}{2} \left(\partial_{\bbm_{\perp}(t)\bbm_{\perp}(t)}-m_3^2\ \nabla_{\perp}^2 \right) \left( \fl{1}{2\pi|\tbx|} \right), &  d=2,
\end{array}\right.
\ee
with $\bbm(t)\in {\mathbb R}^3$  defined as  $\bbm(t)= {\bf A}^{-1}_d(t) \bn=\big(m_1(t), m_2(t),  m_3(t)\big)^T$
and $\bbm_{\perp}(t):=\big(m_1(t), m_2(t)\big)^T.$

In rotating Lagrangian coordinates,
the energy associated with the CGPE (\ref{DipGPERot})--(\ref{DDI_iniRot}) is defined as
\bea\label{Energy_rot}\nn
\widetilde{\mathcal E}(t)
&=&\sum_{j=1}^{2}\int_{{\mathbb R}^d}\bigg[\fl{1}{2}|\nabla\phi_j|^2
+W_j(\tbx, t)|\phi_j|^2 + \sum_{k=1}^{2}\bigg( \fl{\bt_{jk}}{2}|\phi_k|^2 +\fl{\lambda_{jk}}{2}\Phi_k \bigg)|\phi_j|^2 \bigg] d\tbx   \\ \nn
&&- \sum_{j=1}^{2}\int_{{\mathbb R}^d}\int_0^t \bigg[\p_\tau\mathcal{W}_j(\tbx, \tau)d\tau + 
 \sum_{k=1}^{2}\fl{\lambda_{jk}}{2} (\p_\tau \widetilde{U}_{\rm dip} )\ast |\phi_k|^2 \bigg] |\phi_j|^2 d\tbx\\ 
  \label{Energy_Rot}
&=:& \widetilde{\mathcal E}_{\rm kin} (t) + \widetilde{\mathcal E}_{\rm pot} (t) + \widetilde{\mathcal E}_{\rm short} (t) 
+ \widetilde{\mathcal E}_{\rm dip} (t)+ \widetilde{\mathcal E}_{\rm extra} (t),
\eea
where
\bea
\nn
&& \widetilde{\mathcal E}_{\rm kin} (t) =\fl{1}{2}\int_{{\mathbb R}^d}\Big[|\nabla\phi_1|^2+|\nabla\phi_2|^2\Big]d\tbx,\qquad 
 \widetilde{\mathcal E}_{\rm pot} (t) =\int_{{\mathbb R}^d}\Big[W_1(\tbx, t)|\phi_1|^2+W_2(\tbx, t)|\phi_2|^2\Big]d\tbx, \\ \nn
&& \widetilde{\mathcal E}_{\rm short} (t) =\fl{1}{2}\sum_{j,k=1}^{2}\bt_{jk}\int_{{\mathbb R}^d}|\phi_j|^2|\phi_k|^2d\tbx,\quad
 \widetilde{\mathcal E}_{\rm dip} (t)=\fl{1}{2}\sum_{j,k=1}^{2}\lambda_{jk}\int_{{\mathbb R}^d}\Phi_k|\phi_j|^2d\tbx, \\ \nn
&& \widetilde{\mathcal E}_{\rm extra} (t) =- \sum_{j=1}^{2}\int_{{\mathbb R}^d}\int_0^t \bigg[\p_\tau\mathcal{W}_j(\tbx, \tau)d\tau + 
 \sum_{k=1}^{2}\fl{\lambda_{jk}}{2} (\p_\tau \widetilde{U}_{\rm dip} )\ast |\phi_k|^2 \bigg] |\phi_j|^2 d\tbx,
\eea
and 
\be\label{pt-rot-DDI}
\p_t \widetilde{U}_{\rm dip}(\tbx,t)= -3
\left\{\begin{array}{lr}
   2\,\partial_{\dot{\bbm}(t)\bbm(t)}   \left( \fl{1}{4\pi|\tbx|} \right), &  d=3,    \\[0.5em]
   \partial_{\dot{\bbm}_{\perp}(t)\bbm_{\perp}(t)} \left( \fl{1}{2\pi|\tbx|} \right), &  d=2,
\end{array}\right.
\ee
\begin{remark}
 If $V_j(\bx)$ is a harmonic  potential as defined in (\ref{harm_poten}),
then $W_j(\tbx, t)$ has the form
\be\label{whos}
\mathcal{W}_j(\tbx, t) = \fl{w_j^+}{4}
(\tx^2 + \ty^2) + \fl{w_j^-}{4}\left[(\tx^2-\ty^2)\cos(2\Og t) + 2\tx\ty\sin(2\Og t)\right] +
\left\{\begin{array}{ll} 0
& d= 2, \\
\fl{1}{2}\gm_{z,j}^2\tz^2, & d= 3,\\
\end{array}\right.
\ee
where $w_j^+=\gm_{x,j}^2+\gm_{y,j}^2$ and $w_j^-=\gm_{x,j}^2-\gm_{y,j}^2$. Therefore, when the external 
potential is either box-potential or harmonic potential which are radially symmetric in two 
dimensions  (2D)  or cylindrically symmetric in three dimensions (3D), i.e. $\gm_{x, j}  = \gm_{y, j} :=\gm_{r,j}$, 
the potential $W_j(\tbx, t)$ becomes  time-independent.
\end{remark}

Compared to (\ref{DipGPEComp1})--(\ref{nonlocal_inicon}), 
the rotating term now vanishes in the new CGPE (\ref{DipGPERot})--(\ref{DDI_iniRot}). Instead,
the trapping potential and DDI kernel now become time-dependent. The absence of rotating term now
allows us to   develop an efficient  method to solve  (\ref{DipGPERot})--(\ref{DDI_iniRot}).

\subsubsection{Time splitting Fourier pseudospectral method}

Here we shall consider the new equation  (\ref{DipGPERot})--(\ref{DDI_iniRot}).
Due to the trapping potential, the wave functions decay exponentially at the far field. 
Therefore, in practical computation,  it suffices to truncate the problem  into a large enough bounded
computational domain 
$ \mathcal{ D} =[a, b]\times[c, e] \times[f, g]$
if $d=3$,  or $ \mathcal{ D} = [a, b]\times[c, e] $ if $d=2$.
From $t=t_n$
to $t=t_{n+1}:=t_n+\Delta t$, the CGPE will be solved in two steps, i.e. for $j=1,2$
one first solves
\bea\label{step1}
i\p_t\phi_j(\tbx, t)= -\fl{1}{2}\nabla^2\phi_j(\tbx, t),\quad    \tbx\in{\mathcal D},\quad t_n\leq t\leq t_{n+1},
\eea
with periodic boundary conditions on the boundary $\p \mathcal{D}$
for a time step of length $\Delta t$, then solves 
\bea
\label{2stepA}
i\fl{\p \phi_j(\widetilde{\bx}, t)}{\p t}& = &\left[\mathcal{W}_j(\widetilde{\bx}, t)+\sum_{k=1}^{2}
\big( \bt_{jk}|\phi_k|^2 + \lambda_{jk}\widetilde{\Phi}_k\big) \right]\phi_j
\qquad \tbx\in \mathcal{D},
\quad t_n\leq t\leq t_{n+1}, \qquad  \\
\label{2stepB}
\widetilde{\Phi}_k(\tbx,t)&=&\big(\widetilde{U}_{\rm dip}\ast \widetilde{\rho}_k\big) (\tbx, t), \qquad\qquad\quad
\qquad k=1,2,\  \qquad\tbx\in \mathcal{D}, \quad t_n\leq t\leq t_{n+1},
\eea
for the same time step. Here, $\widetilde{\rho}_k(\tbx,t)=|\phi_k(\tbx,t)|^2$ if $\tbx\in \mathcal{D}$ and  $\widetilde{\rho}_k(\tbx,t)=0$
otherewise. The linear subproblem (\ref{step1}) will be discretised in space by the Fourier pseudospectral
method  and integrated in time exactly in the phase space, while the nonlinear subproblem
(\ref{2stepA})-(\ref{2stepB})  preserves the density point-wisely, i.e. $|\phi_j(\tbx,t)|^2\equiv|\phi_j(\tbx,t=t_n)|^2=|\phi_j^n(\tbx)|^2$, and it  
can be integrated exactly as
\bea
\label{solu_step2}
\phi_j(\bx,t)&=&\exp\left\{ -i \left[  P_j(\bx,t)+ \sum_{k=1}^{2} 
\big( \bt_{jk}|\phi^n_k|^2(t-t_n) + \lambda_{jk}\, {\varphi}_{k}(\tbx, t) \big)
  \right]\right\}, \\
\label{solu_step2_conv}
{\varphi}_{k}(\tbx, t) &=&  \int_{\mathbb{R}^d}  \widetilde{ \mathcal{K}}(\tbx-\tby,t)\, \rho_{k}(\tby,t_n) d\tby,
\qquad
\quad \tbx\in \mathcal{D}, \quad t_n\leq t\leq t_{n+1},
\eea
where the time-dependent kernel $\widetilde{ \mathcal{K}}(\tbx,t)$ has the form
\be
\widetilde{ \mathcal{K}}(\tbx,t)=\int_{t_n}^{t} \widetilde{U}_{{\rm dip}}(\tbx, \tau)d\tau=
\left\{
\begin{array}{ll}
   -\delta(\tbx) (t-t_n)-3 \widetilde{L}_3(t)(\fl{1}{4\pi|\tbx|}),    &\quad    {\rm 3D\ \ DDI },\\[0.5em]
 -\fl{3}{2} \widetilde{L}_2(t)(\fl{1}{2\pi|\tbx|}) ,& \quad   {\rm 2D\ \ DDI }.
\end{array}
\right.
\ee
Here, the differential operator $\widetilde{L}_3(t)=\int_{t_n}^{t}\p_{\bbm(\tau)\bbm(\tau)} d\tau$
and  $\widetilde{L}_2(t)=\int_{t_n}^{t}\p_{\bbm_{\perp}(\tau)\bbm_{\perp}(\tau)}  -m_{3}^{2}\nabla_{\perp}^{2}\;  d\tau $
actually can be integrated analytically and has explicit expressions, one can refer to {\em section 4.1} in \cite{BMTZ2013}
for details.  The GauSum solver is then applied to evaluate
the nonlocal potential ${\varphi}(\tbx, t) $ (\ref{solu_step2_conv}). In addition, 
\be
\label{int_poten}
P_j(\tbx,t)=\int_{t_n}^{t}\mathcal{W}_j(\tbx,\tau)d\tau.
\ee
\begin{remark}
If $V_j(\bx)$  ($j = 1, 2$) is a harmonic potential as defined in (\ref{harm_poten}), i.e. $\mathcal{W}_j$ reads as (\ref{whos}),  then the integral in
(\ref{int_poten}) can  be evaluated analytically, i.e.
\bea
\int_{t_n}^t\mathcal{W}_j(\tbx, \tau)d\tau &=& \fl{w_j^+(\tx^2+\ty^2)}{4}(t - t_n) +  \fl{w_j^-}{8\Og}\Big[(\tx^2-\ty^2)\big(\sin(2\Og t) - \sin(2\Og t_n)\big) \\
&&-2\tx\ty\big(\cos(2\Og t)-\cos(2\Og t_n)\big)\Big]+
\left\{\begin{array}{ll} 0, &d = 2, \\
\fl{1}{2}\gm_{z,j}^2\tz^2(t-t_n),  \quad &d = 3,
\end{array}\right.
\eea
For a general potential $V_j(\bx)$, if the integral in (\ref{int_poten}) can not be found analytically,
numerical quadratures such as Trapezoidal rule or Simpson's rule can be used to approximate it 
\cite{Bao2006, BMTZ2013}.
\end{remark}

To simplify the presentation, we will only present the scheme for the 3D case.
As for the 2D case, one can modify the algorithm straightforward.
 Let $L$, $M$, $N$ be even positive integers, choose $h_\tx=\fl{b-a}{L}$, $h_\ty=\fl{d-c}{M}$ and $h_\tz=\fl{f-e}{N}$
 as the spatial mesh sizes in $\tx$-, $\ty$-, and $\tz$- directions, respectively.  Define the index and grid points sets as
\beas
{\mathcal T}_{LMN} &=& \left\{(l, k, m)\,|\,0\leq l\leq L, \ 0\leq k\leq M, \ 0\leq m\leq N\right\}, \\
\widetilde{\mathcal T}_{LMN} &=& \left\{(p, q, r)\,|\,-\frac{L}{2}\leq p\leq \frac{L}{2}-1,
\ -\frac{M}{2}\leq q\leq \frac{M}{2}-1, \ -\frac{N}{2}\leq r\leq \frac{N}{2}-1\right\},\\
{\mathcal G}_{\tx\ty\tz} &=& \left\{ (\tx_l, \ty_k, \tz_m) =: (l \,h_\tx+a,\ k \,h_\ty+c,\  m \,h_\tz+e ),\ (l, k, m) \in  {\mathcal T}_{LMN}  \right\}.
\eeas
Define the functions
\[W_{pqr}(\tx,\ty,\tz)=e^{i \mu_p^\tx(\tx-a)}\,e^{i \mu_q^\ty(\ty-c)}\,e^{i \mu_r^\tz(\tz-e)},
\quad (p, q,r)\in\widetilde{\mathcal T}_{LMN},
\]
with
\[  \mu_p^\tx = \fl{2\pi p}{b-a}, \;\; \mu_q^\ty = \fl{2\pi q}{d-c},\;\;
\mu_r^\tz = \fl{2\pi r}{f-e},   \quad (p, q, r)\in \widetilde{\mathcal T}_{LMN}.\]
Let $ f_{j,lkm}^n$ ( $j=1,2$, $f_j=\phi_j$, ${\varphi}_j$ or $P_j$) be the numerical approximation of $f_j(\tx_l, \ty_k, \tz_m, t_n)$ for
$ (l, k,m) \in {\mathcal T}_{LMN}, \; n\ge0$ and denote $\bm{\phi}_j^n$ as the solution vector
at time $t=t_n$ with components $\big\{\phi_{j,lkm}^n, \ (l,k,m)\in {\mathcal T}_{LMN}\big\}$.
Taking the initial data as
$\phi_{j,lkm}^0=\phi_j^0(\tx_l, \ty_k, \tz_m)$ for $(l,k,m)\in {\mathcal T}_{LMN}$,  a second-order {\sl Time Splitting Fourier Pseudopectral} (TSFP) method to solve the 
CGPE (\ref{DipGPERot})--(\ref{DDI_iniRot})  reads as follows:
\bea
\phi_{j,lkm}^{(1)}
\label{tssp1}
&=&\sum_{p=-L/2}^{L/2-1}\sum_{q=-M/2}^{M/2-1}\sum_{r=-N/2}^{N/2-1}
	e^{-\fl{i\Dt t}{4} \left[ (\mu_p^\tx)^2+ (\mu_q^\ty)^2+(\mu_r^\tz)^2\right]}
	\widehat{(\bm {\phi}_j^n)}_{pqr}\; W_{pqr}(\tx_l,\ty_k,\tz_m), \\[0.5em]
\phi_{j,lkm}^{(2)}
\label{tssp2}
&=&\phi_{j,lkm}^{(1)}\exp\left\{ -i  \left[\Dt t \sum_{s=1,2}\Big( \beta_{js} |\phi^{(1)}_{s,lkm}|^2
+\lambda_{js} {\varphi}^{n+1}_{s,lkm}\Big) + P^{n+1}_{j,lkm} \right] \right\},   \\[0.5em]
\phi_{j,lkm}^{n+1}
\label{tssp3}
&=&\sum_{p=-L/2}^{L/2-1}\sum_{q=-M/2}^{M/2-1}\sum_{r=-N/2}^{N/2-1}e^{-\fl{i\Dt t}{4} \left[ (\mu_p^\tx)^2+ 
(\mu_q^\ty)^2+(\mu_r^\tz)^2\right]} \widehat{(\bm{\phi}_j^{(2)})}_{pqr}\; W_{pqr}(\tx_l,\ty_k,\tz_m).
\eea
Here, $\widehat{(\bm{\phi}_j^n)}_{pqr}$ and $ \widehat{(\bm {\phi}_j^{(2)})}_{pqr}$ are
the discrete Fourier transform coefficients of
the vectors $\bm{\phi}_j^n$ and $\bm{\phi}_j^{(2)}$,
respectively.  We refer this method as {\sc TS2-GauSum}. 
This scheme is explicit, efficient, simple to implement, unconditional stable and can 
be extended to high-order time-splitting schemes easily.

\section{Numerical results}
In this section, we first test the accuracy of the {\sc TS2-GauSum}
method  for computing the dynamics of rotating two-component dipolar BEC. 
Then, we apply our method to investigate 
some interesting phenomena, such as the dynamics of dipolar BEC with 
tunable (time-dependent) dipole axis, collapse properties of a dipolar BEC.

\subsection{Test of the accuracy}
Here, we first test the spatial and temporal accuracy of our method in both 2D and 3D.
To demonstrate the results, we first define the following error function
\be\label{error_def}
e^{h,\Delta t}_{\Psi}(t)=: \|\Psi(\bx, t_n)-\Psi^{n}_{h,\Delta t}(\bx)\|_{l^2}=
\sqrt{\sum_{j=1}^{2}\|\psi_{j}(\bx, t_n)-\psi_{j,h,\Delta t}^{n}(\bx)\|^2_{l^2}},
\ee
where  $\|\cdot\|_{l^2}$ denotes the discrete $l^2$ norm, $\psi^{n}_{j,h,\Delta t}$
is the numerical approximation of $\psi_{j}(\bx, t_n)$ obtained by the {\sc TS2-GauSum} method (\ref{tssp1})-(\ref{tssp3}) with 
time step $\Delta t$ and mesh size   $h_v=h$ ($v=\tx,\ty$ in 2D and $v=\tx, \ty, \tz$ in 3D). 
The dipole axis $\bn$ and interaction parameters are chosen as 
  \be
 \label{para_2D_accuracy}
 \bn=(1,0,0)^T,\qquad
 \left( \begin{array}{cc}
  \beta_{11}& \beta_{12} \\
\beta_{21} & \beta_{22}
 \end{array}
  \right) = \beta  \left( \begin{array}{cc}
 1 & 0.8 \\
 0.8 & 1.2
 \end{array} \right),\qquad
  \left( \begin{array}{cc}
  \lambda_{11}&  \lambda_{12} \\
\lambda_{21} &  \lambda_{22}
 \end{array} \right)=\fl{1}{20}
 \left( \begin{array}{cc}
  \beta_{11}& \beta_{12} \\
\beta_{21} & \beta_{22}
 \end{array} \right). 
  \ee
Moreover,  we take the computational domain $\mathcal{D}=[-12,12]^2$(2D)/$[-8,8]^3$(3D) and potential  
$ V_1(\bx)=\fl{|\bx|^2}{2}$. The potential $V_2(\bx)$ and initial data $\psi^0_j(\bx)$ are chosen  
respectively as
\be
 \label{accuracy_init_set}
V_2(\bx)= \left\{\begin{array}{c}
(x^2+ y^2)/2, \\[0.5em]
(x^2+1.21 y^2+z^2)/2, 
 \end{array}
 \right.\qquad
  \psi_j^{0}(\bx)=\left\{\begin{array}{lr}
\sqrt[4]{\fl{ 2 }{\pi^2 } }\  e^{-\fl{(3-j) x^2+j y^2}{2}}, & d=2,\\[0.5em]
\sqrt[4]{\fl{ 2 }{\pi^3 } }\ e^{-\fl{(3-j) x^2+j y^2+z^2}{2}}, & d=3,
 \end{array}
 \right.\qquad
 j=1, 2.
  \ee
 For comparisons, the ``exact" solution $\Psi(\bx,t)$ is obtained numerically via the {\sc TS2-GauSum}
 method on $\mathcal{D}$ with a very small mesh size $h=h_0=\fl{1}{16}$ and time step $\Delta t=\Delta t_0=0.0001$. 
 Table \ref{tab:accuracy_2D_dyn} lists the spatial  errors  $e^{h,\Delta t_0}_{\Psi}(t)$ 
 and temporal errors  $e^{h_0,\Delta t}_{\Psi}(t)$  at time $t=0.4$ for the 2D CGPE with $\Og=0.4$ and different $\beta$,
while Tab. \ref{tab:accuracy_3D_dyn} lists those at time $t=0.1$ for the 3D case  with $\Og=0.2$ and different $\beta$.   From Tabs. \ref{tab:accuracy_2D_dyn}-\ref{tab:accuracy_3D_dyn}, we can conclude that the {\sc TS2-GauSum} method 
 is spectrally accurate in space and second order accurate in time.

\begin{table}[h!]
\tabcolsep 0pt  \caption{Spatial and temporal discretization  errors at time $t=0.4$  for the  2D CGPE with $\Og=0.4$ and  different $\beta$ .}
\label{tab:accuracy_2D_dyn}
\begin{center}\vspace{-1em}
\def\temptablewidth{1\textwidth}
{\rule{\temptablewidth}{1pt}}
\begin{tabularx}{\temptablewidth}{@{\extracolsep{\fill}}p{1.35cm}|cccccc}
$e^{h,\Delta t_0}_{\Psi}$& $h=1$  & $h/2$ & $h/4$ & $h/8$    \\[0.01em] 
\hline
$\bt = 2$      &  1.0863E-01 &2.9827E-03 &2.8843E-07 &1.0490E-11 \\  [0.25em]
$\bt = 10$     &  3.8018E-01 &4.2192E-02 &7.4791E-05 &1.4662E-11  \\[0.25em]
\end{tabularx}
{\rule{\temptablewidth}{1pt}}
\begin{tabularx}{\temptablewidth}{@{\extracolsep{\fill}}p{1.35cm}|ccccccc}
$e^{h_0,\Delta t}_{\Psi}$  & $\Delta t=0.01$ & $\Delta t/2$ & $\Delta t/4$ & $\Delta t/8$   \\
\hline
$\bt = 2$      & 2.4167E-05& 6.0376E-06 &1.5075E-06 &3.7504E-07 \\[0.25em]
$\bt = 10$     & 2.2051E-04& 5.5049E-05 &1.3742E-05 &3.4187E-06 \\[0.25em]

\end{tabularx}
{\rule{\temptablewidth}{1pt}}
\end{center}
\end{table}

  \begin{table}[h!]
\tabcolsep 0pt  \caption{Spatial and temporal  discretization  errors  at time $t=0.1$  for the  3D CGPE  with $\Og=0.2$ and  different $\beta$ .}
\label{tab:accuracy_3D_dyn}
\begin{center}\vspace{-1em}
\def\temptablewidth{1\textwidth}
{\rule{\temptablewidth}{1pt}}
\begin{tabularx}{\temptablewidth}{@{\extracolsep{\fill}}p{1.35cm}|cccccc}
$e^{h,\Delta t_0}_{\Psi}$& $h=1$  & $h/2$ & $h/4$ & $h/8$    \\[0.01em] 
\hline
$\bt = 2$      &  1.51E-02  &  1.82E-04 & 1.92E-08 & 6.60E-13  \\  [0.25em]
$\bt = 10$     &  2.60E-02  & 9.25E-04 & 8.70E-07 & 7.25E-13  \\[0.25em]
\end{tabularx}
{\rule{\temptablewidth}{1pt}}
\begin{tabularx}{\temptablewidth}{@{\extracolsep{\fill}}p{1.35cm}|ccccccc}
$e^{h_0,\Delta t}_{\Psi}$  & $\Delta t=0.01$ & $\Delta t/2$ & $\Delta t/4$ & $\Delta t/8$   \\
\hline
$\bt = 2$      & 6.14E-06   & 1.53E-06  & 3.83E-07  &  9.52E-08 \\  [0.25em]
$\bt = 10$     &  7.62E-05   & 1.90E-05 & 4.75E-06 &1.18E-06  \\[0.25em]
\end{tabularx}
{\rule{\temptablewidth}{1pt}}
\end{center}
\end{table}

\subsection{Dynamics of the center of mass}
\label{sec:DynCoM}
In this subsection, we study the dynamics  of the center of mass by directly simulating the CGPE (\ref{DipGPERot})-(\ref{DDI_iniRot})
via the {\sc TS2-GauSum} method (\ref{tssp1})--(\ref{tssp3}).  To this end, we take $d=2$, dipole axis  $\bn=(1,0,0)^T$ 
 and initial data (\ref{nonlocal_inicon}) 
  \be\label{ini-num-2}
 \psi_1^{0}(\bx)=\phi(\bx-\bx_0),\quad
  \psi_2^{0}(\bx)=\phi(\bx+\bx_0),\quad  {\rm with} \quad  \phi(\bx)=\fl{(x+i y)}{\sqrt{2\pi}} e^{-\fl{x^2+y^2}{2}},\,\,\, \bx_0=(1,1)^T.
 \ee
The computational domain, mesh size and time step are respectively 
take as  $\mathcal{D}=[-16,16]^2$,  $h_\tx=h_\ty=\fl{1}{8}$ and $\Delta t=0.001$.
 The trapping potentials are chosen as the harmonic ones  (\ref{harm_poten})  and  the following 6 cases
 are studied ( $j=1,2,$\ $k_j=3- j$)

\begin{itemize}
\item Case 1:   $\Omega=0.5$,  
$\beta_{11}=\fl{\beta_{22}}{2}=50$, 
$\lambda_{jj}=\fl{\beta_{jj}}{10}$, 
$\beta_{jk_j}=\fl{2\lambda_{jk_j}}{5}=2,$ 
$\gm_{x,j}=\gm_{y,j}=1.$
%

\item Case 2:  $\Omega=0.5$,    
$\beta_{jj}=10\lambda_{jj}=50$, 
$\beta_{jk_j}=\fl{2\lambda_{jk_j}}{5}=2,$  
  $\gm_{x,1}=\gm_{y,2}=1.1, \,\gm_{y,1}=\gm_{x, 2} =1.$
%
%

\item Case 3: $\Omega=0.5$,  
$\beta_{jj}=10\lambda_{jj}=50$, 
$\beta_{jk_j}=2\lambda_{jk_j}=2,$  $\gm_{x,1}=\gm_{y,1}=1$, $\gm_{x,2}=\gm_{y,2}=1.2.$
\item Case 4-6: same parameters as in Case 1-3,  except only change as $\Og=1$,  $\Og=\pi$ and $\Og=1$, respectively.

\end{itemize}

\medskip

Figures \ref{fig:traj-case1-3}-\ref{fig:traj-case4-6} show the dynamics of  the center of mass $\bx_{c,j}(t)~ (j = 1,2)$ and its trajectory in the Cartesian coordinates
for Case 1-6.
From  Figs.~\ref{fig:traj-case1-3}-\ref{fig:traj-case4-6} and additional results not shown here for brevity, we can conclude that: (i) When $\Og<\min\{\gm_{x,j},\gm_{y,j}\}$, then the center of mass of the $j$-th component $\bx_{c,j}$  always moves within a bounded domain (cf. Fig.~\ref{fig:traj-case1-3}). Otherwise, it may move helically outward (cf. Fig. \ref{fig:traj-case4-6}).  (ii) If $V_1(\bx)=V_2(\bx)$ with $\gm_{x,j}=\gm_{y,j}$, $\beta_{12}=\beta_{21}$ and $\lambda_{12}=\lambda_{21}$, the total center of mass $\bx_{c}$ moves periodically with a period depending on both the rotating frequency and trapping frequency.  In addition, the dynamics of $\bx_{c}$ does not depend on the interaction parameters $\lambda_{ij}$ and $\beta_{ij}$ ($i,j=1,2$), which is consistent with (\ref{CoMLaw2})-(\ref{CoMLaw2_ini}).
(iii) If  $\beta_{12}=\beta_{21}$ and $\lambda_{12}=\lambda_{21}$ and 
each trapping potential is symmetric but $V_1(\bx)\ne V_2(\bx)$, the  interaction between two components 
affects the motion of $\bx_{c,j}$ and hence $\bx_{c}$.  Unlike the single-component case where the center of mass always moves periodically, here $\bx_{c,j}$ moves quasi-periodically (cf. Fig. \ref{fig:traj-case1-3} (c) and
\ref{fig:traj-case4-6} (f)).  (iv) If the trapping potentials are not symmetric, the dynamics of center of mass becomes more
complicated. Interactions between the two components will affect the dynamics pattern of center of mass significantly. 
 
\begin{figure}[h!]
\centerline{(a)
\psfig{figure=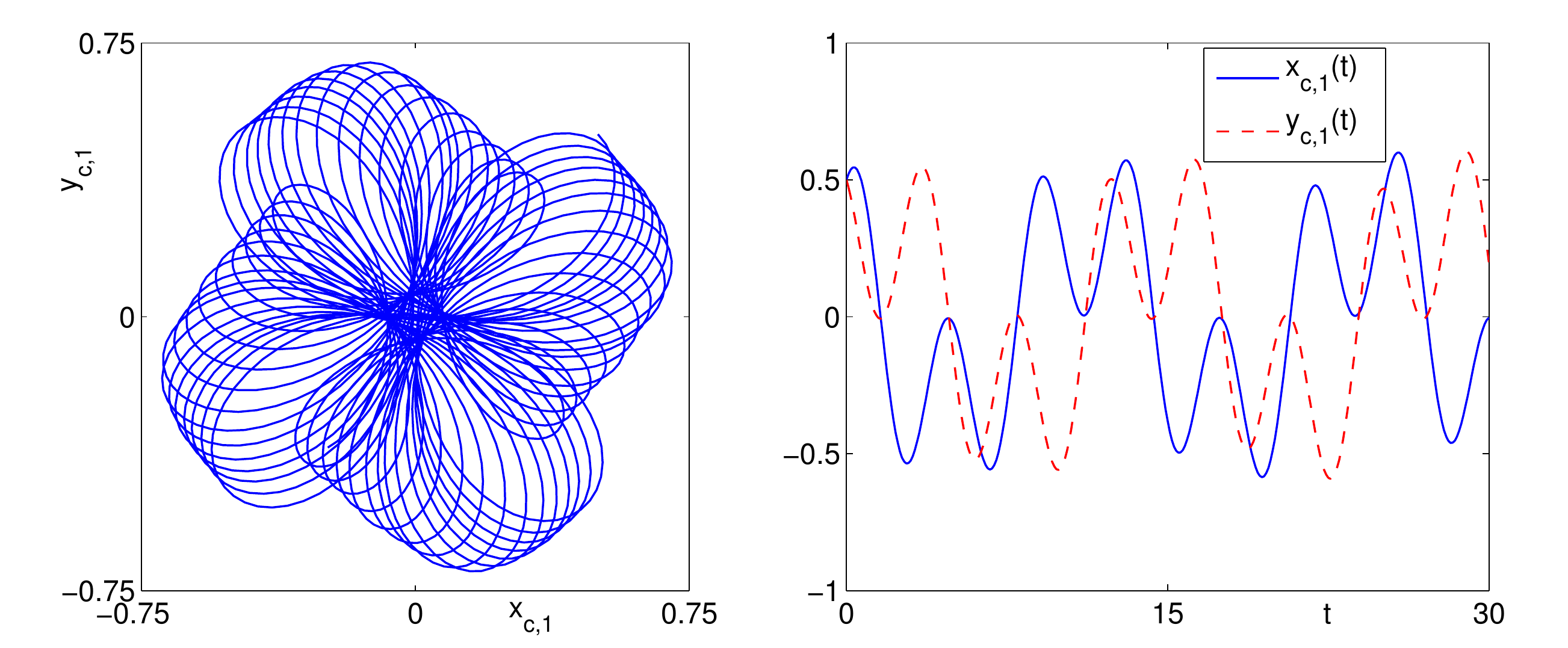,height=3.4cm,width=8.4cm,angle=0}\;
\psfig{figure=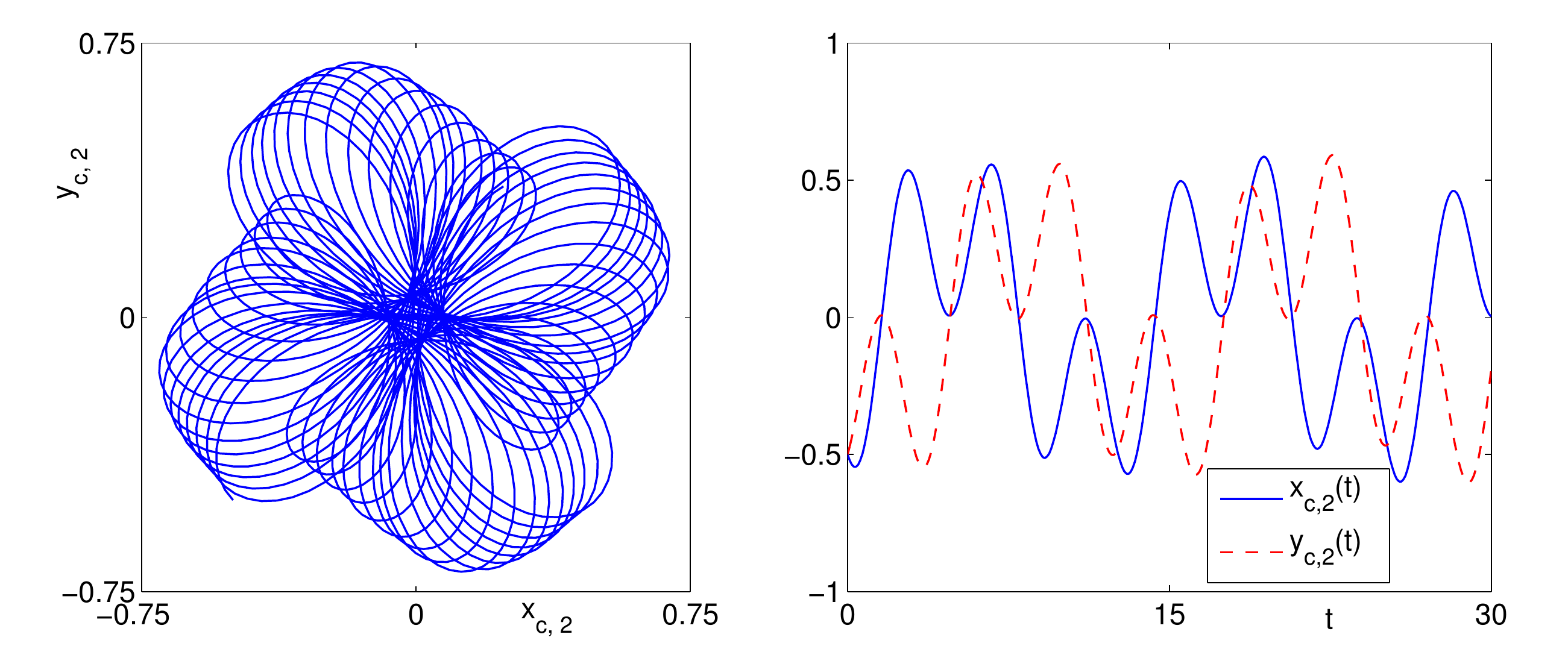,height=3.4cm,width=8.4cm,angle=0}
}
\centerline{(b)
\psfig{figure=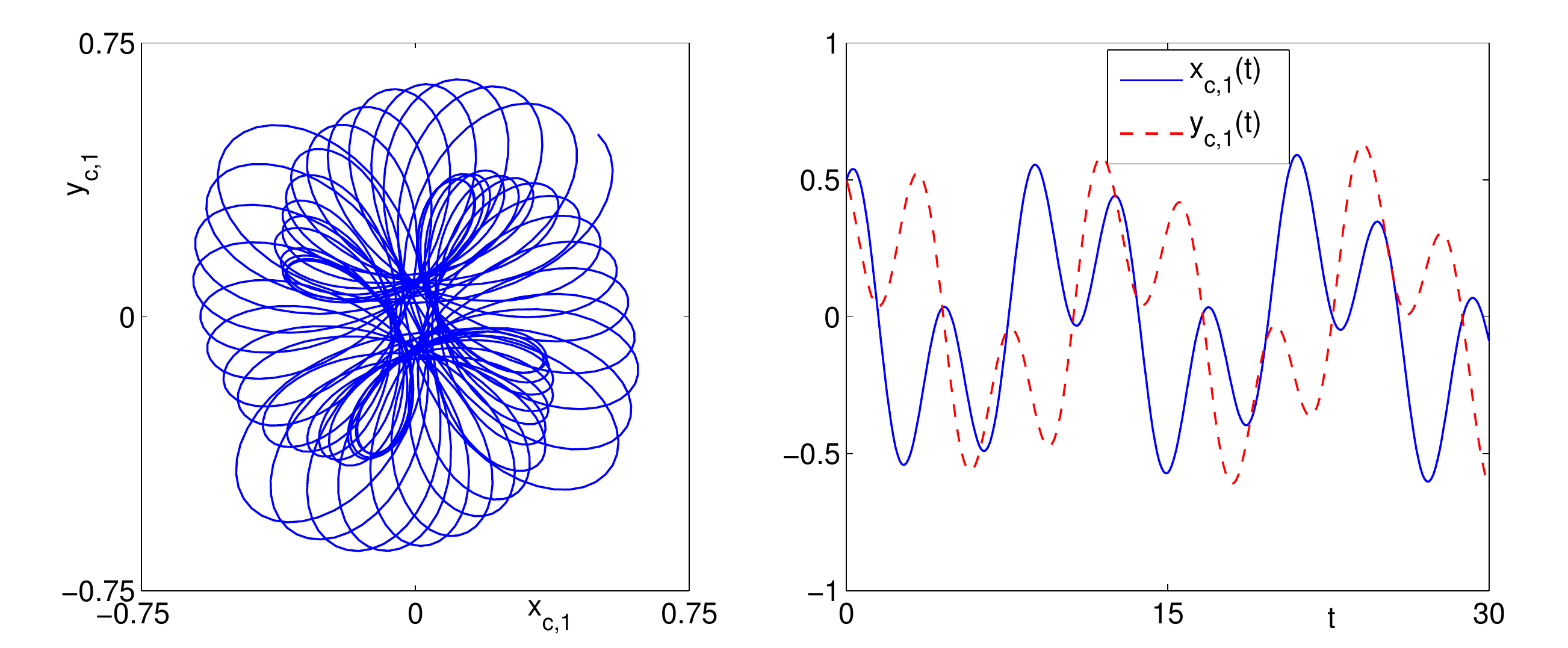,height=3.4cm,width=8.4cm,angle=0}\;
\psfig{figure=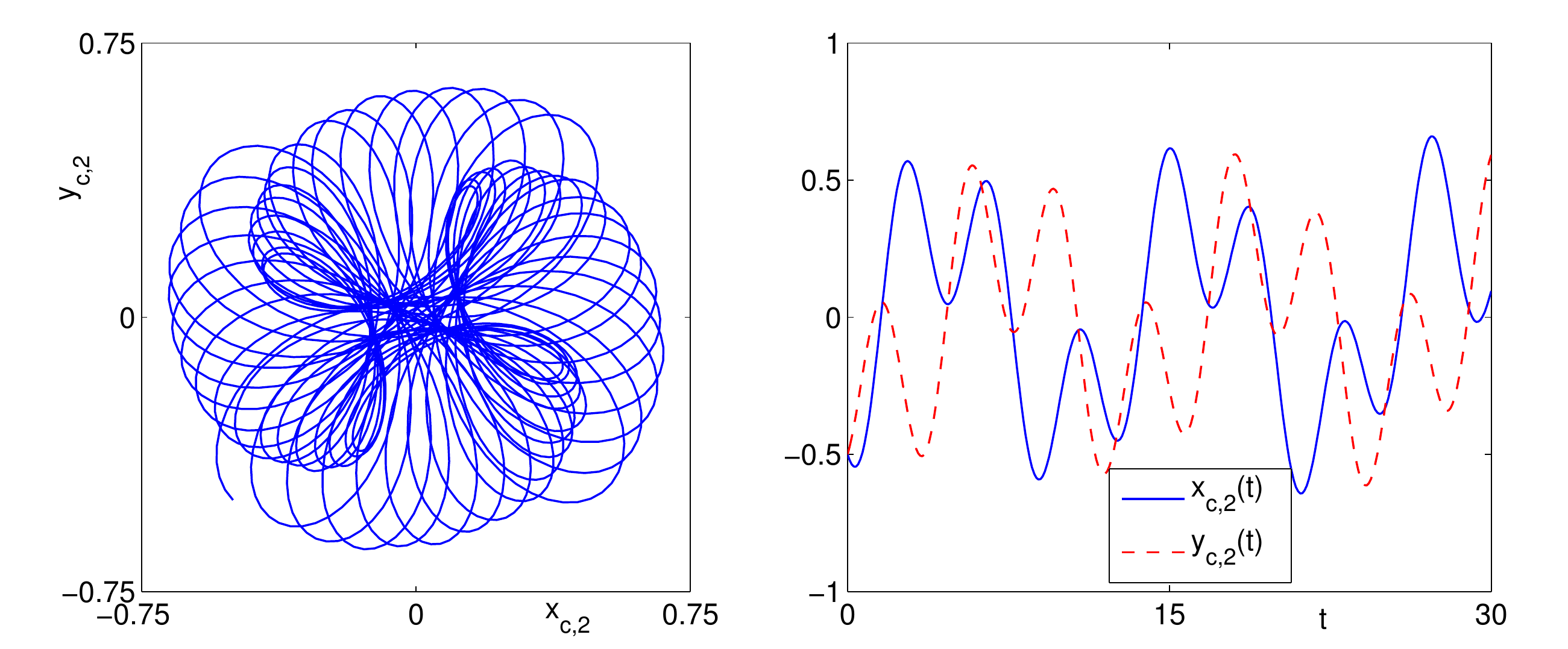,height=3.4cm,width=8.4cm,angle=0}
}
\centerline{(c)
\psfig{figure=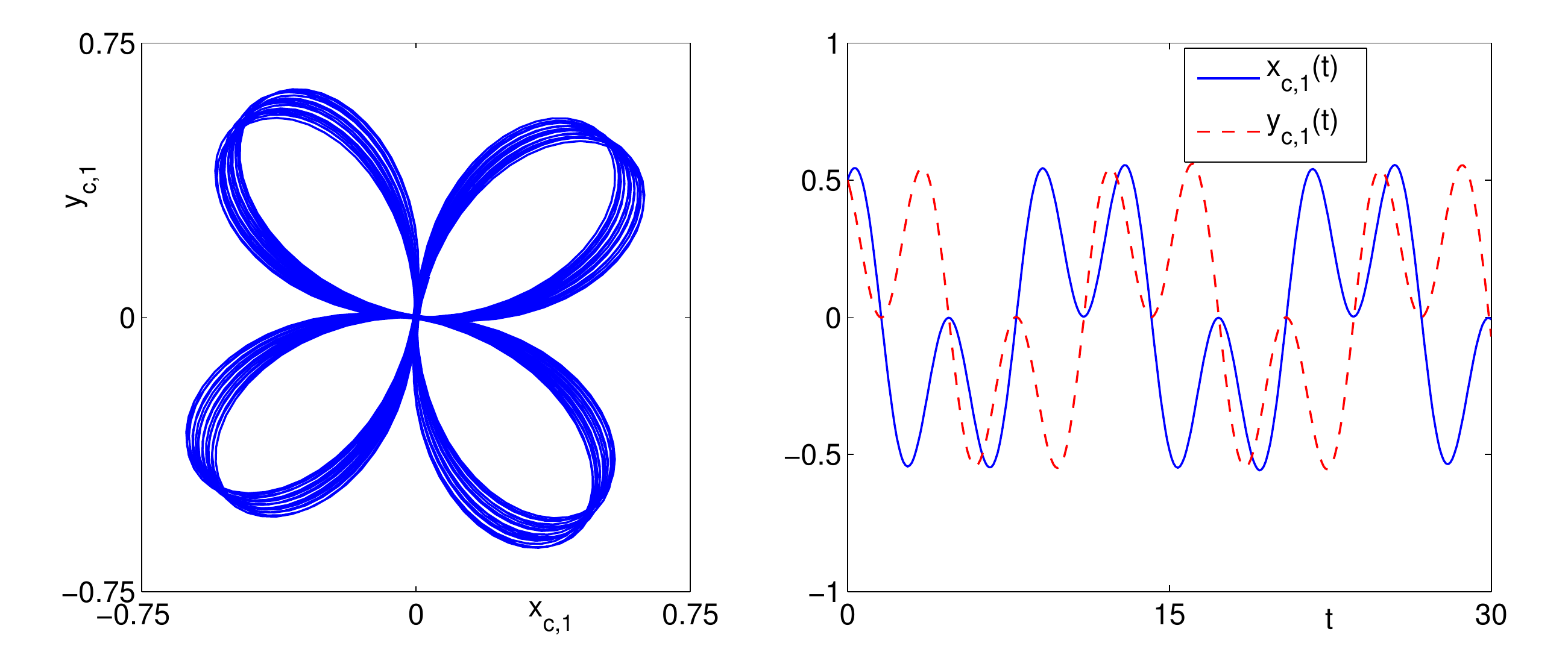,height=3.4cm,width=8.4cm,angle=0}\;
\psfig{figure=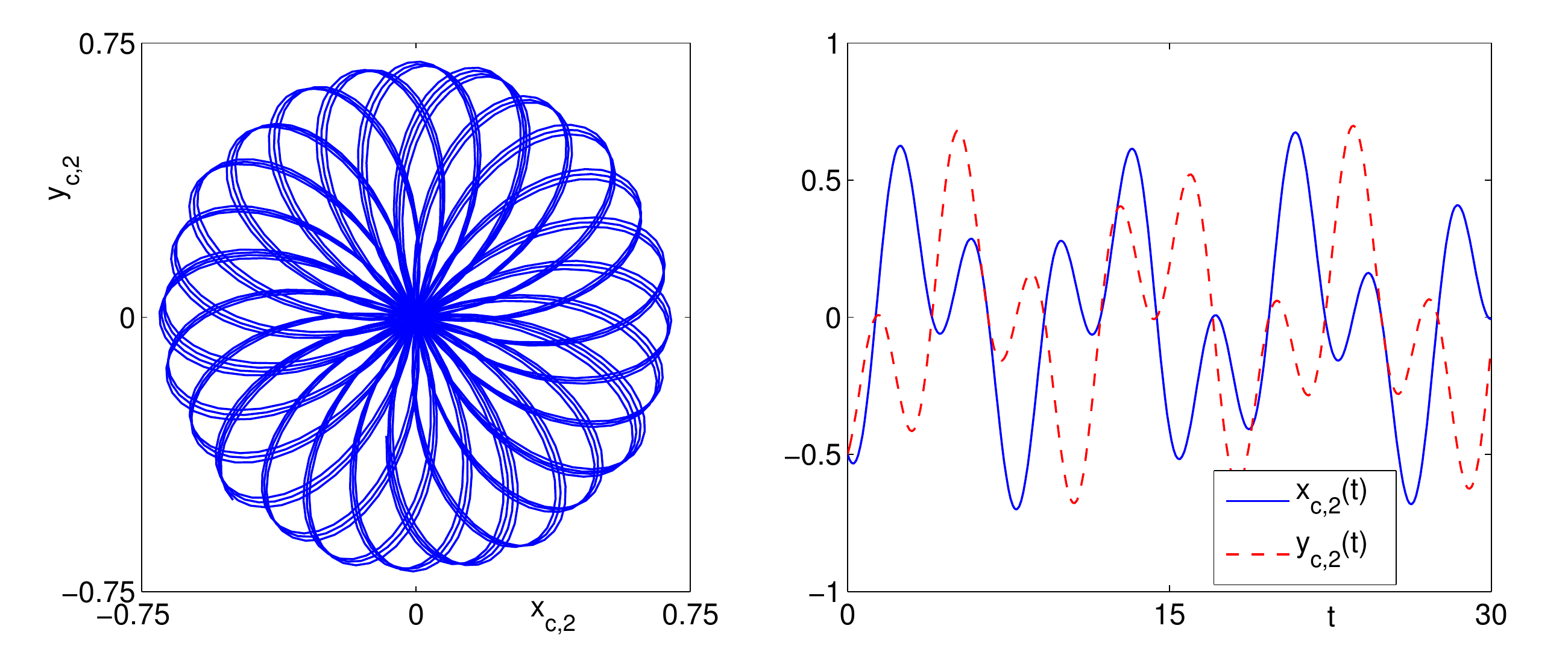,height=3.4cm,width=8.4cm,angle=0}
}
\caption{The dynamics of center of mass and  trajectory for $0\le t\le200$  for case 1-3 in section \ref{sec:DynCoM}: The first two columns for component one and the last two for component two.}
\label{fig:traj-case1-3}
\end{figure}


\begin{figure}[h!]
\centerline{(d)
\psfig{figure=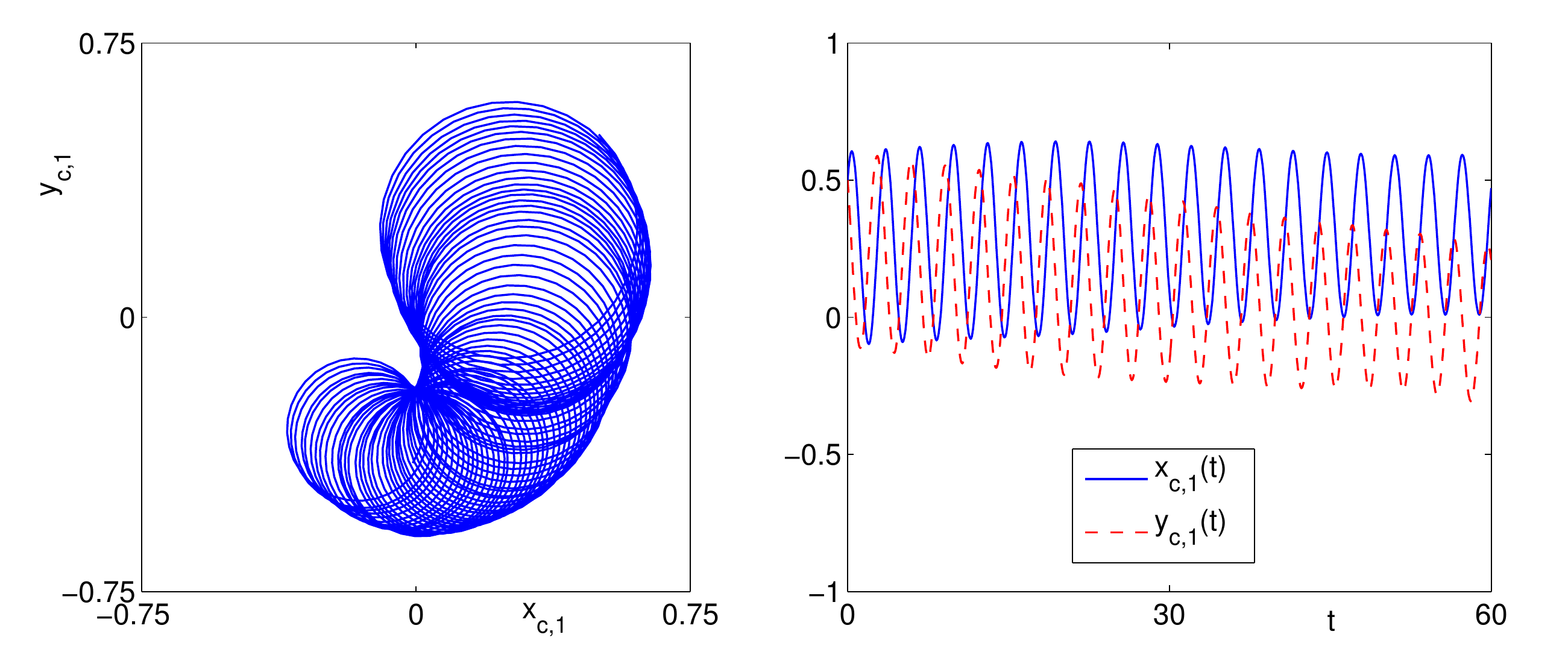,height=3.4cm,width=8.4cm,angle=0}\;
\psfig{figure=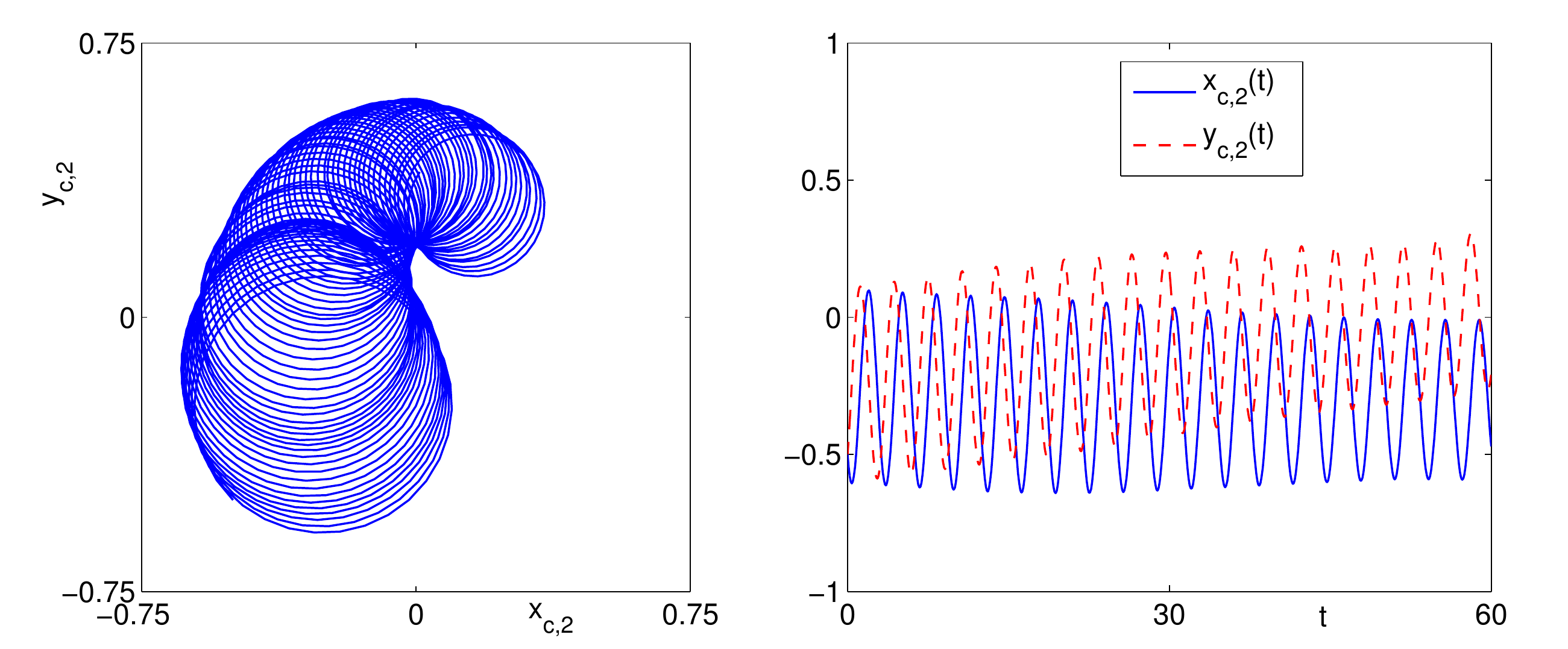,height=3.4cm,width=8.4cm,angle=0}
}
\centerline{(e)
\psfig{figure=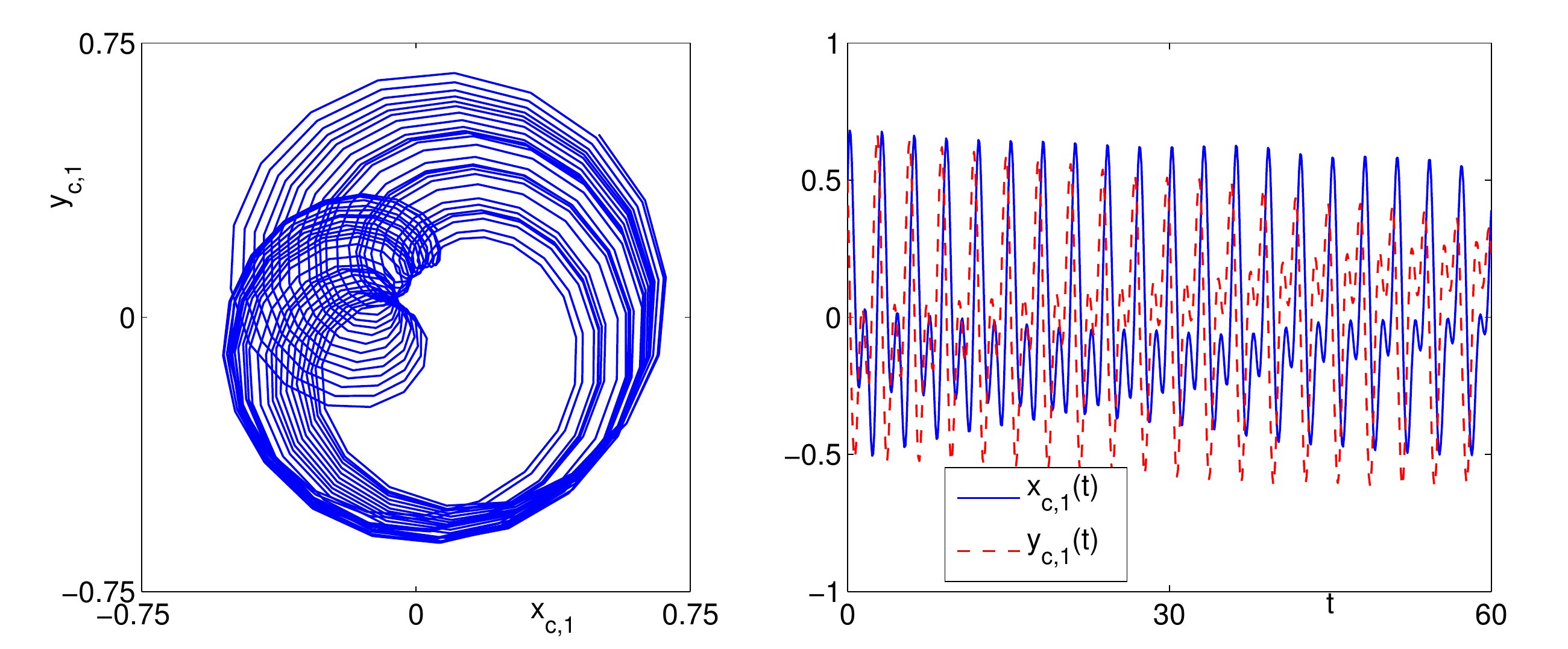,height=3.4cm,width=8.4cm,angle=0}\;
\psfig{figure=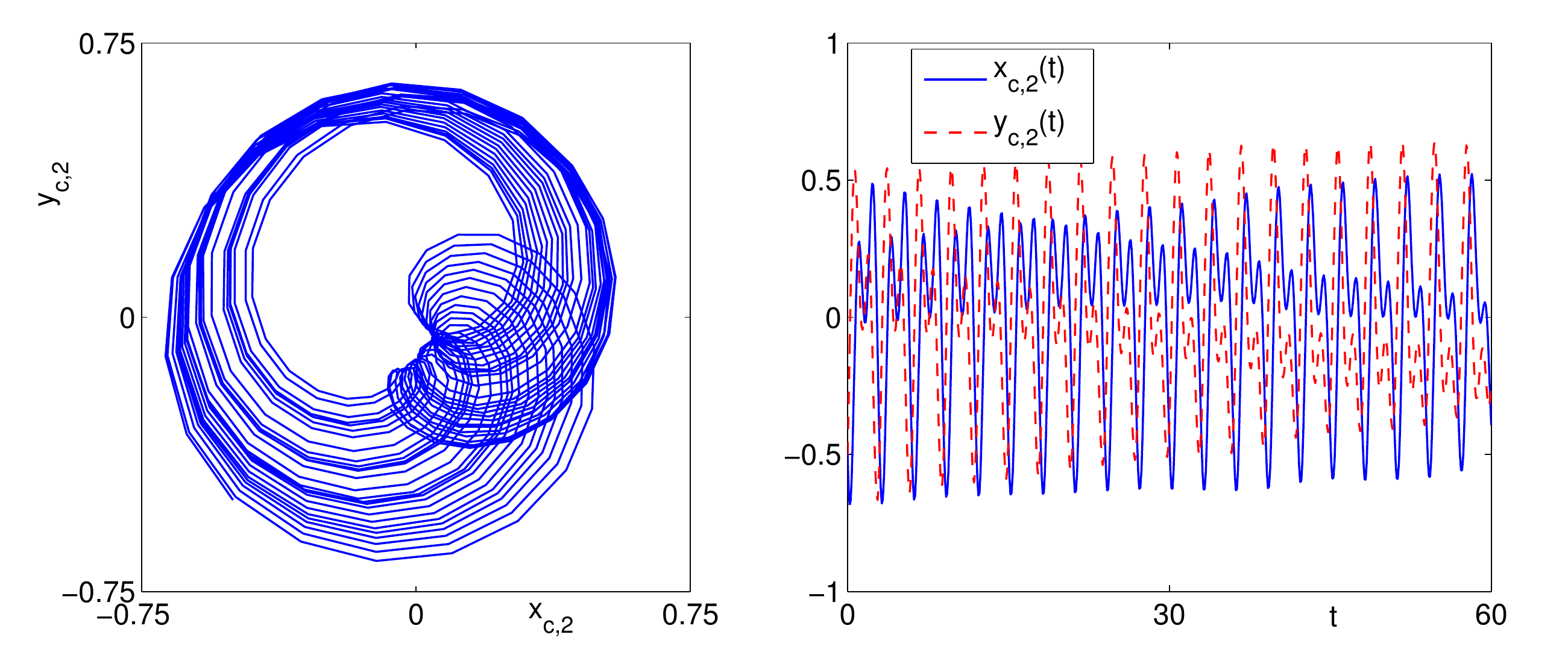,height=3.4cm,width=8.4cm,angle=0}
}
\centerline{(f)
\psfig{figure=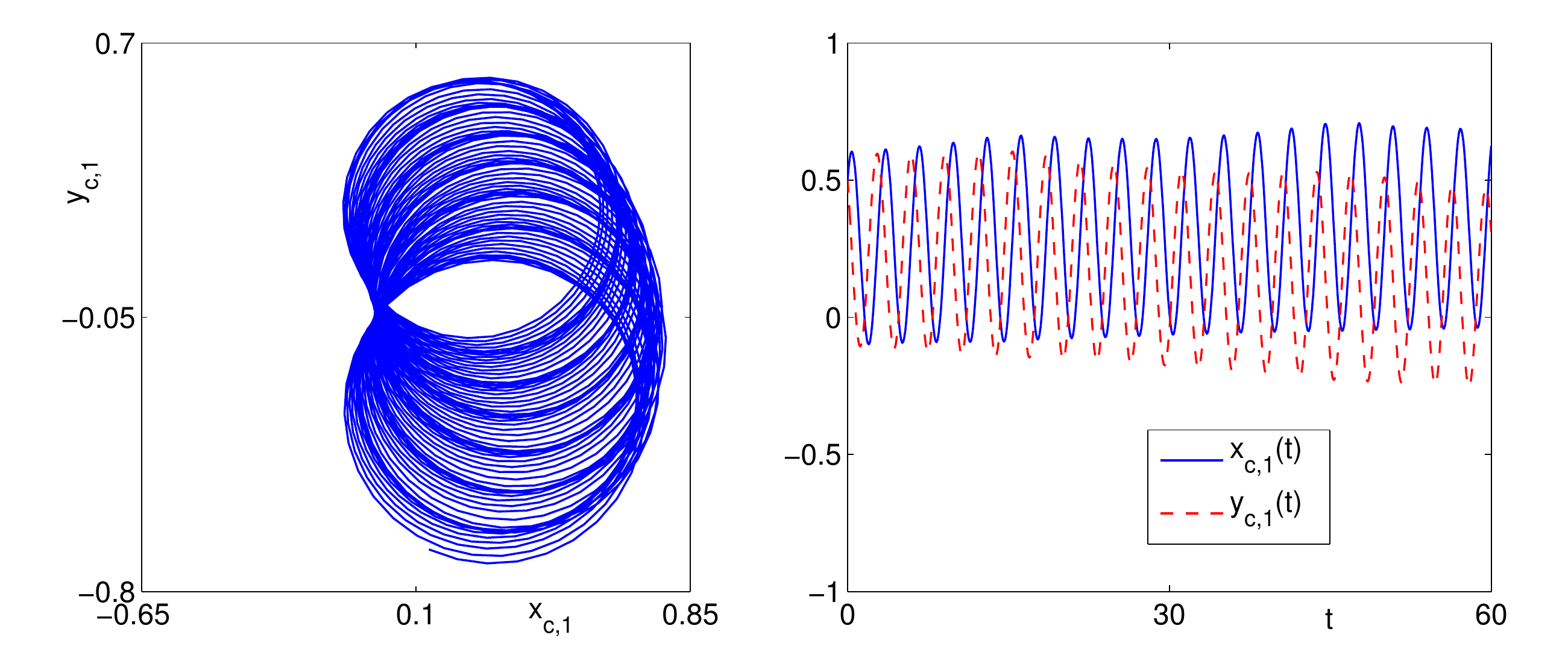,height=3.4cm,width=8.4cm,angle=0}\;
\psfig{figure=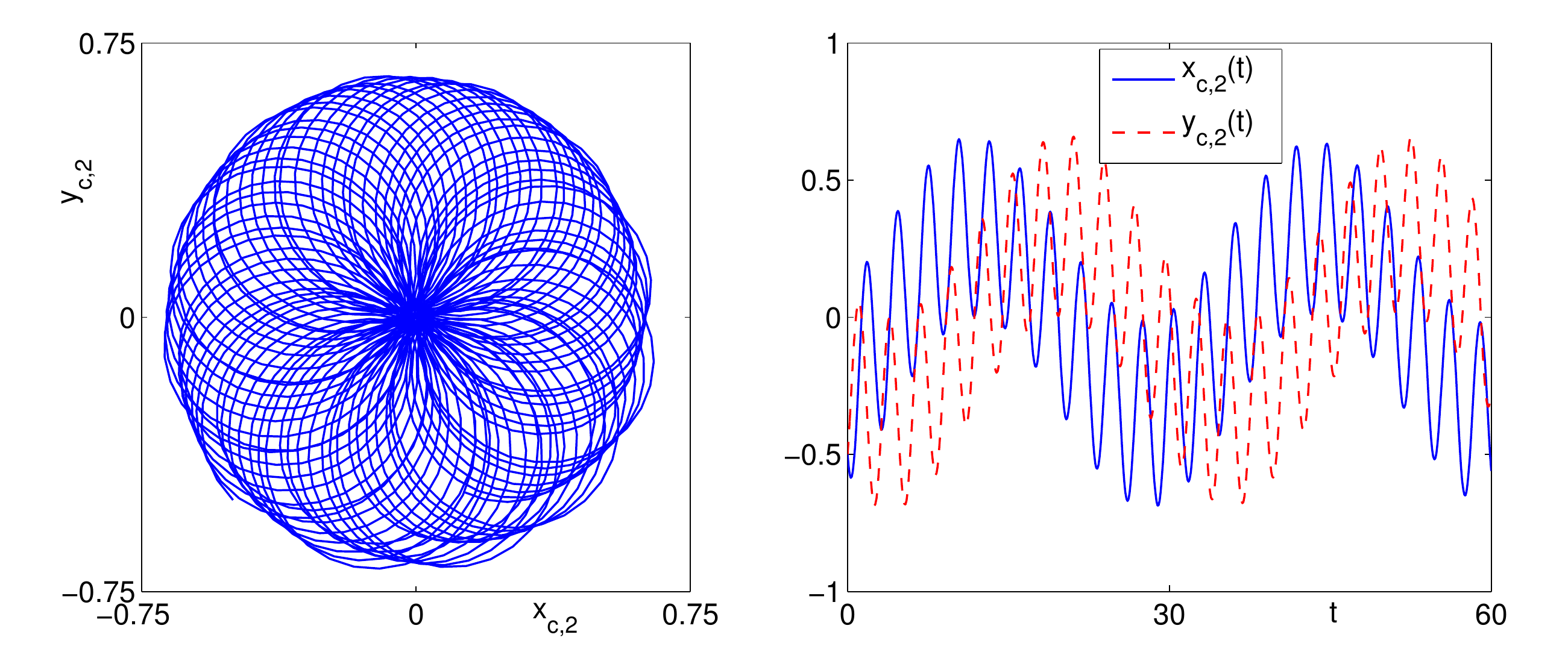,height=3.4cm,width=8.4cm,angle=0}
}
\caption{The dynamics of center of mass and  trajectory for $0\le t\le150$ for  case 4-6 in Section  \ref{sec:DynCoM}:  the first two columns for component one and the last two for component two.}
\label{fig:traj-case4-6}
\end{figure}

%
%

%
%

\subsection{Dynamics of quantized vortex lattices}
\label{sec:DynQVL}
In the following, we  study the dynamics of quantized vortex lattices in the rotating two-component dipolar BECs.  
To this end, we choose $d=2$, $\beta_{11} = \beta_{22}= 100, \beta_{12}=\beta_{21} = 70$ and $\Omega = 0.9$. 
The trapping potentials are chosen as the harmonic ones (\ref{harm_poten})  with $\gamma_{x,j}  = \gamma_{y,j} = 1, j = 1,2$. The initial datum (\ref{nonlocal_inicon}) are chosen as the stationary vortex lattice state computed by the classical gradient-flow method \cite{BC2011,Wang2009JSC} for the chosen 
parameters without DDI, i.e. $\lambda_{11} = \lambda_{22} = \lambda_{12}=\lambda_{21}=0$. 
The dynamics of vortex lattices are studied for the following two  cases: 
\begin{itemize}
\item   Case 1: perturb  the trapping frequency in component one by setting $\gm_{x,1}=\gm_{y,1}=1.5.$ 
\item   Case 2:  turn on the dipolar interaction in component one by setting 
$\bn=(1, 0, 0)^T$  and $\lambda_{11}=10$.
\end{itemize}
\medskip 

In our simulation, we take $\mathcal{D}=[-12,12]^2$, $h_\tx=h_\ty=\fl{1}{8}$ and $\Delta t=0.001.$
 Figure \ref{fig:Dens_VortexLattice} shows the contour plots of the density function $|\psi_j(\bx,t)|^2$
($j=1,2$) at different times for Case 1 and 2, while Fig. \ref{fig:conden-width-2} shows the dynamics
of the angular momentum expectation.  From  these two figures, we can see that: (i) The total angular momentum expectation is  conserved if $\beta_{12}=\beta_{21}$, $\gm_{x,j}=\gm_{y,j}$ and $\lambda_{ij}=0$ ($i,j=1,2$), which agrees with (\ref{AME_Law2}).   
(ii) If there is no DDI and the trapping potentials are symmetric, the lattices  rotate
around the origin and keep a similar symmetry and pattern as the initial ones. Meanwhile, the lattices also
undergo a breather-like dynamics. (iii) The DDI affects the dynamics very much.  Due to the anisotropic nature of DDI, 
the lattices will rotate to some quite different patterns. The vortices will be redistributed during dynamics. 
Unlike the single-component BEC, the redistribution here does not seem to be aligned with the dipole axis because of the interaction between the two components. 
It is interesting to further investigate how the patterns of vortex lattices 
reform and change with respect to the interactions as well as the dipole orientations. Here, we leave it as a further study.
\begin{figure}[h!]
\centerline{
\psfig{figure=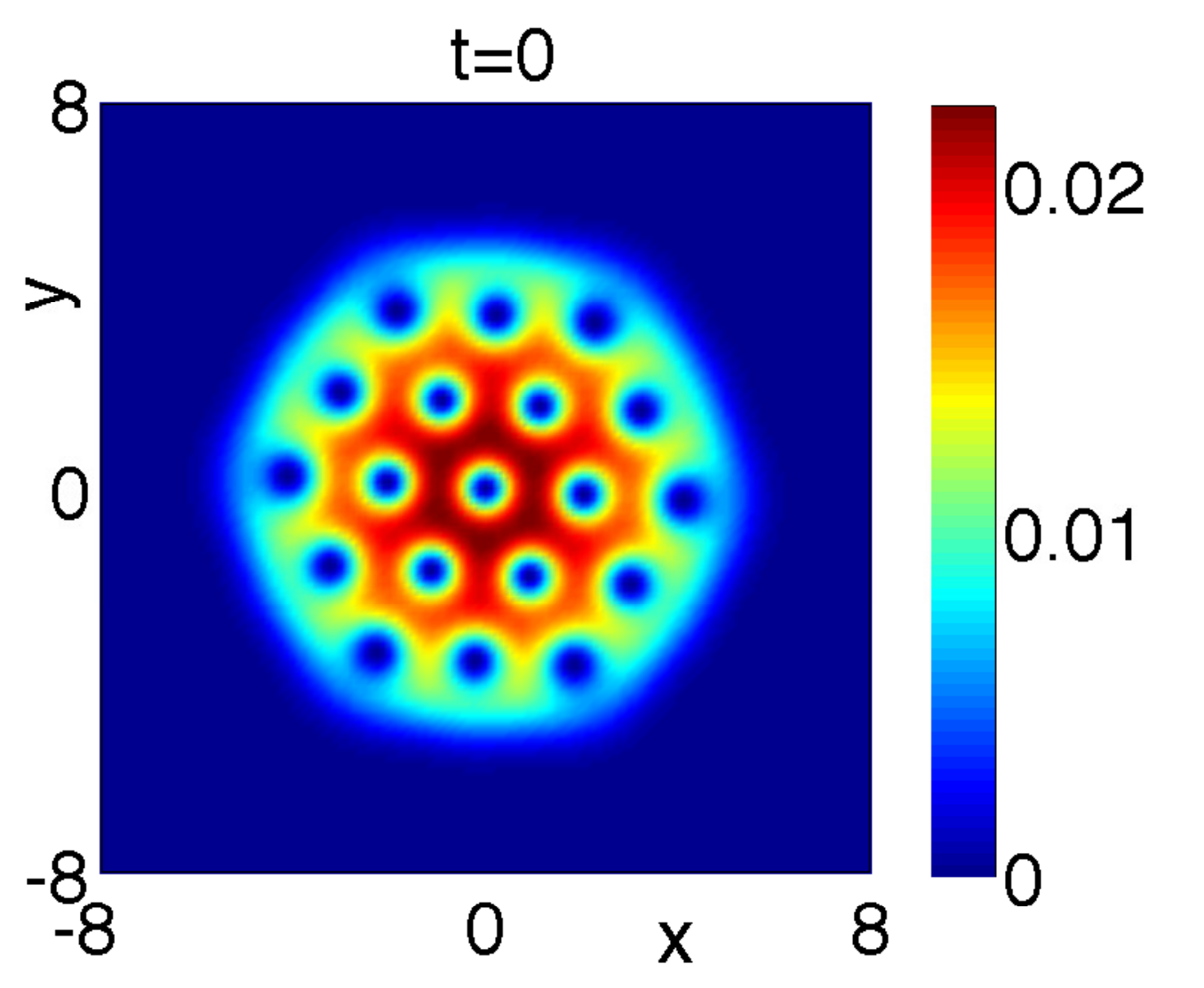,height=2.9cm,width=3.4cm,angle=0}
\psfig{figure=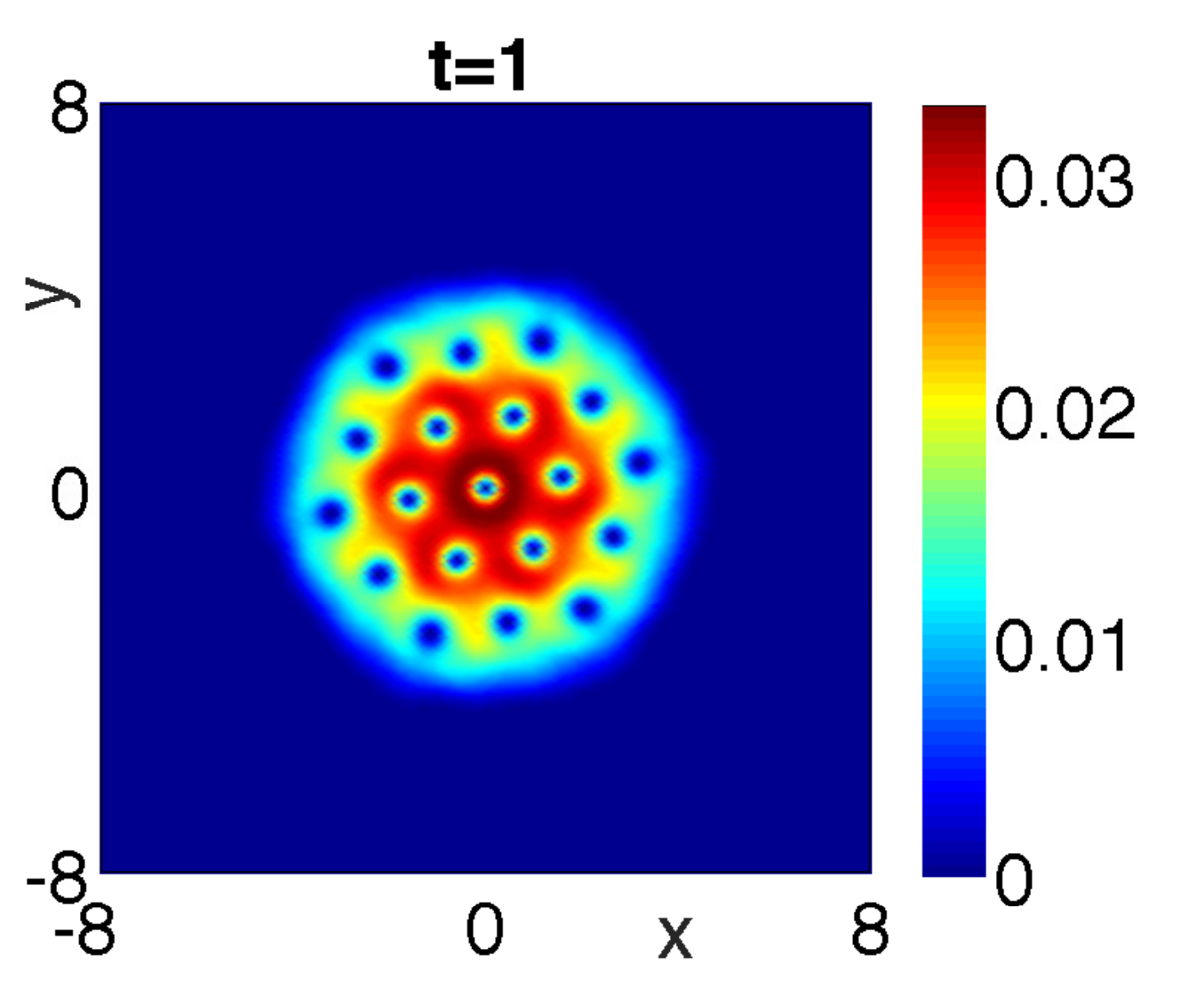,height=2.9cm,width=3.4cm,angle=0}
\psfig{figure=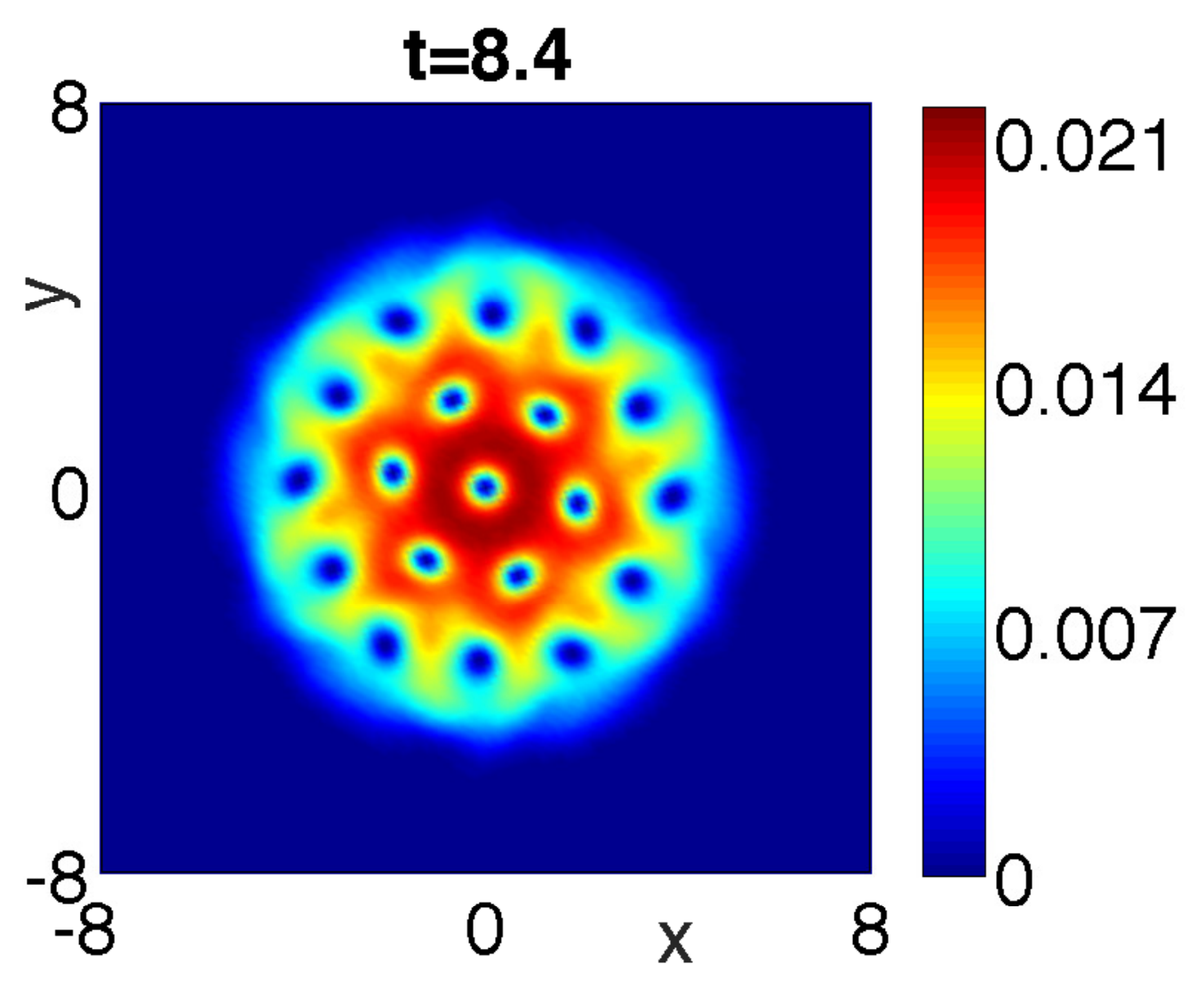,height=2.9cm,width=3.4cm,angle=0}
\psfig{figure=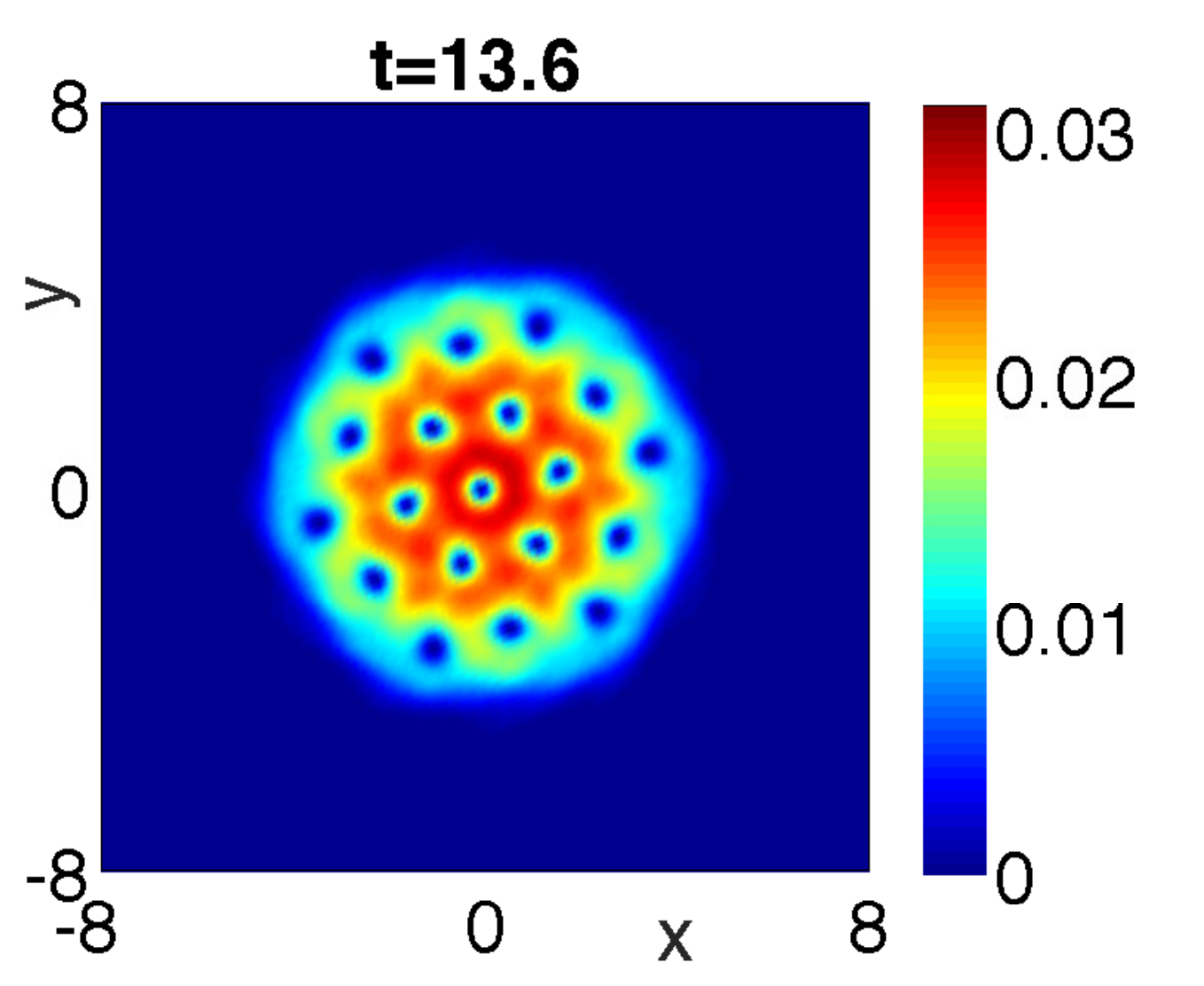,height=2.9cm,width=3.4cm,angle=0}
\psfig{figure=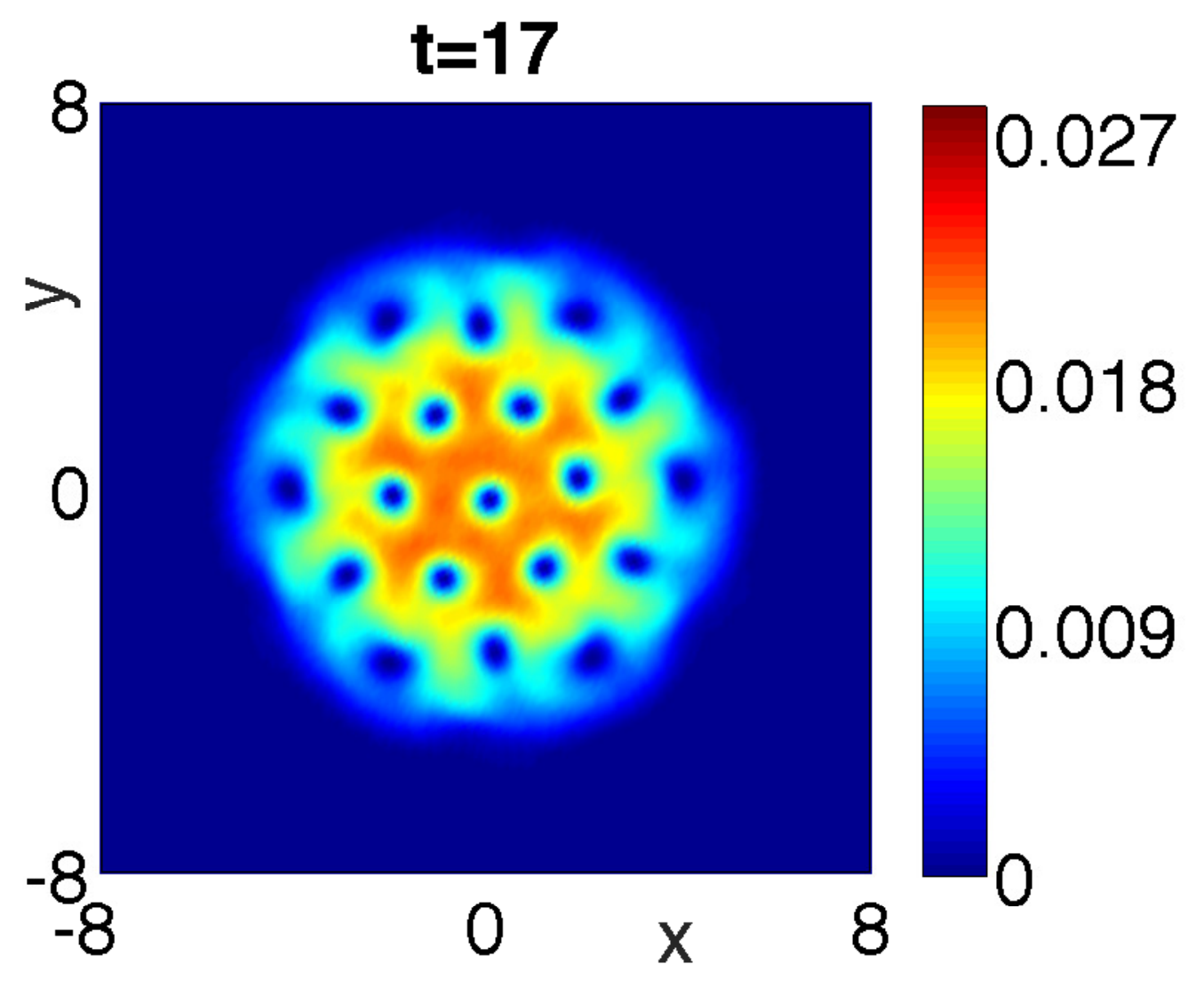,height=2.9cm,width=3.4cm,angle=0}
}
\centerline{
\psfig{figure=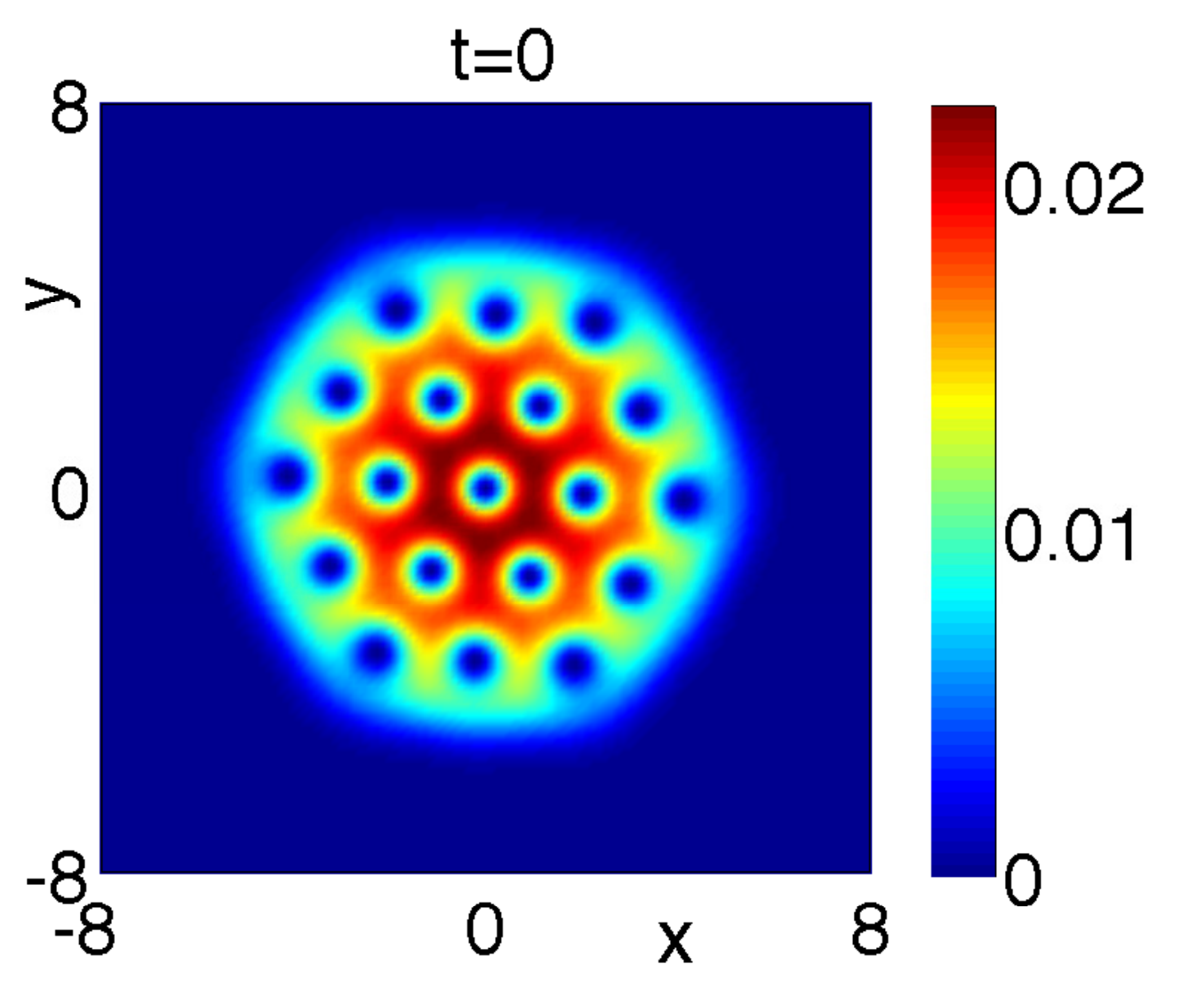,height=2.9cm,width=3.4cm,angle=0}
\psfig{figure=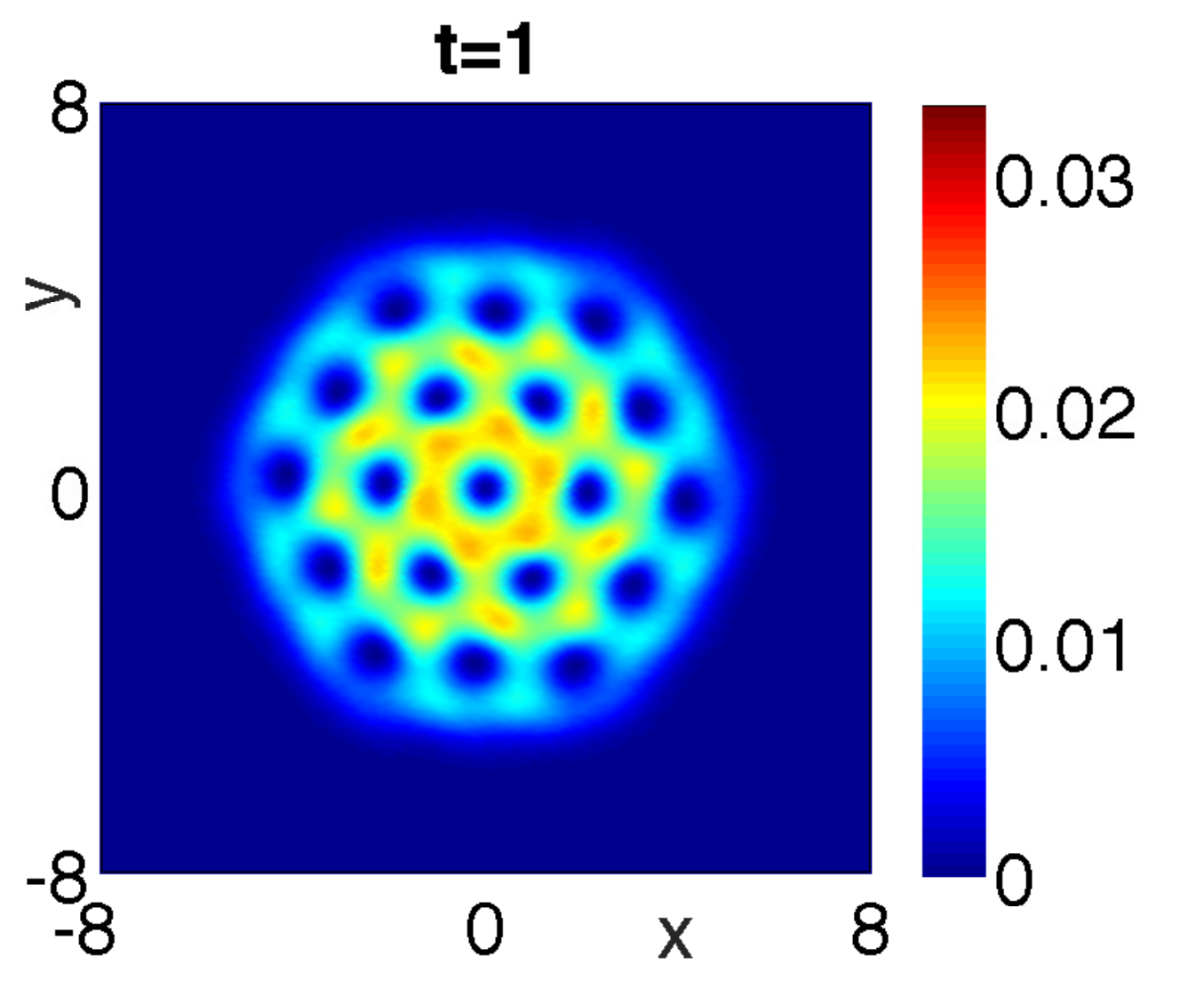,height=2.9cm,width=3.4cm,angle=0}
\psfig{figure=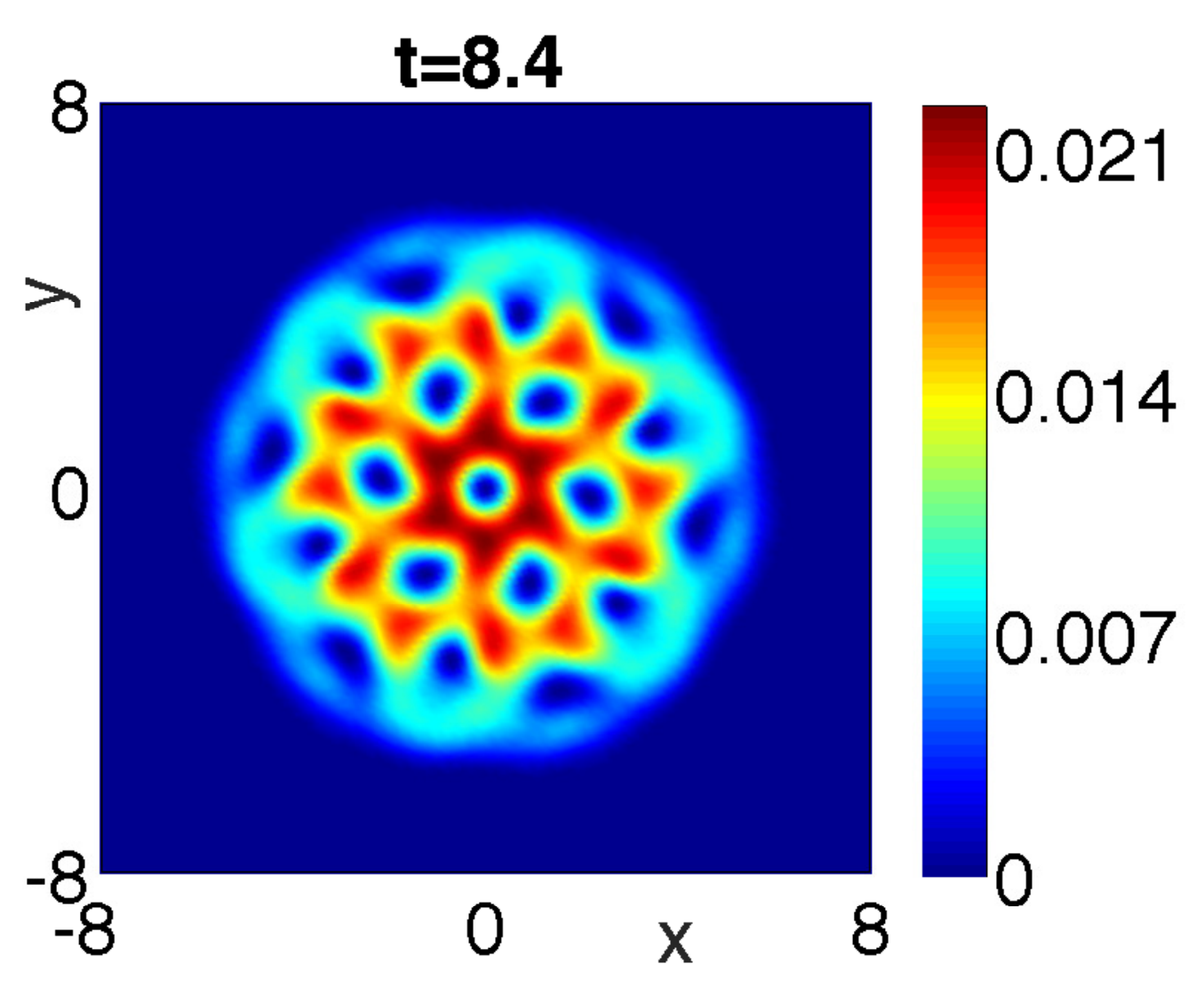,height=2.9cm,width=3.4cm,angle=0}
\psfig{figure=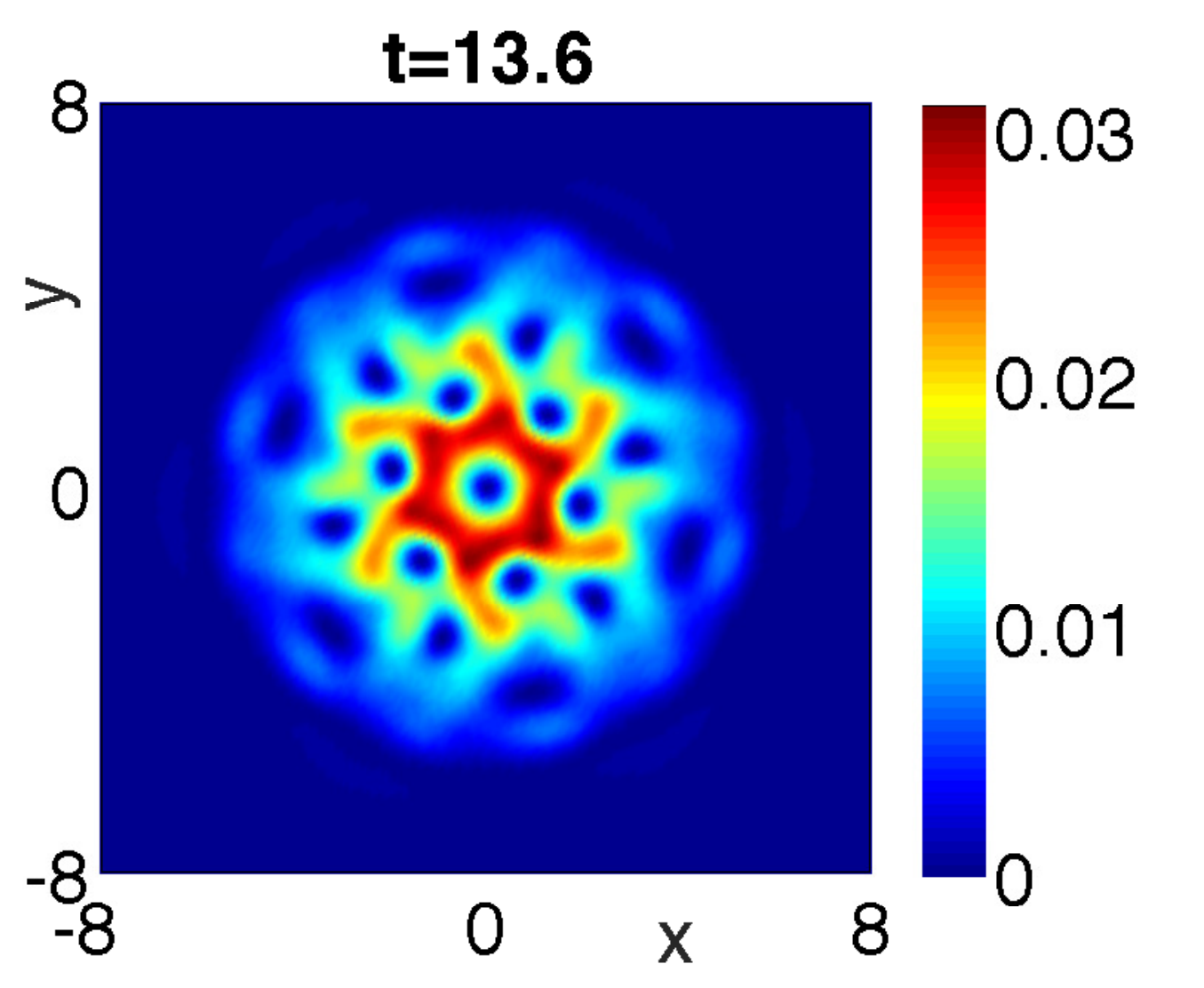,height=2.9cm,width=3.4cm,angle=0}
\psfig{figure=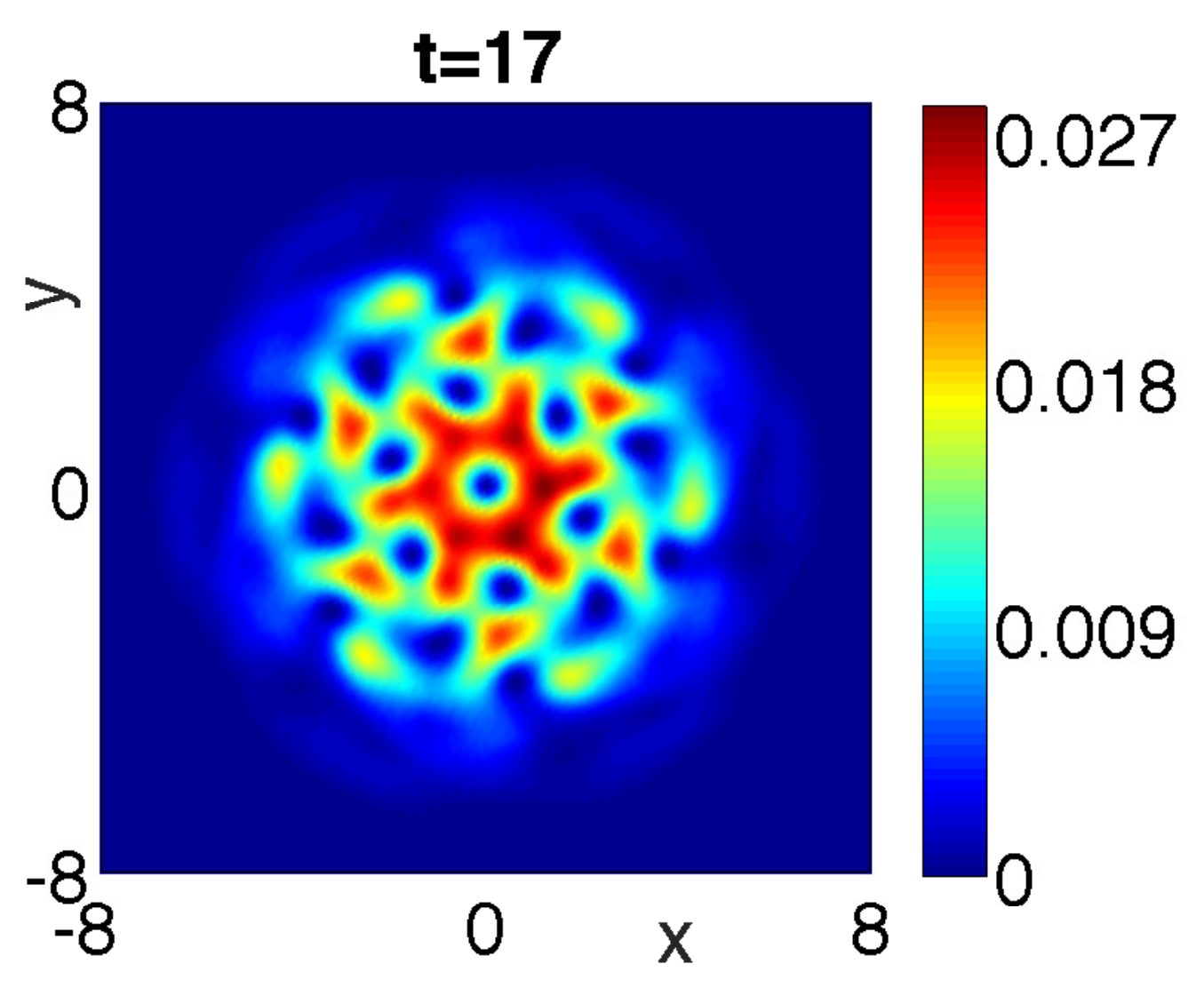,height=2.9cm,width=3.4cm,angle=0}
}
\centerline{
\psfig{figure=VortexDyn_Density_OneComp-eps-converted-to.pdf ,height=2.9cm,width=3.4cm,angle=0}
\psfig{figure=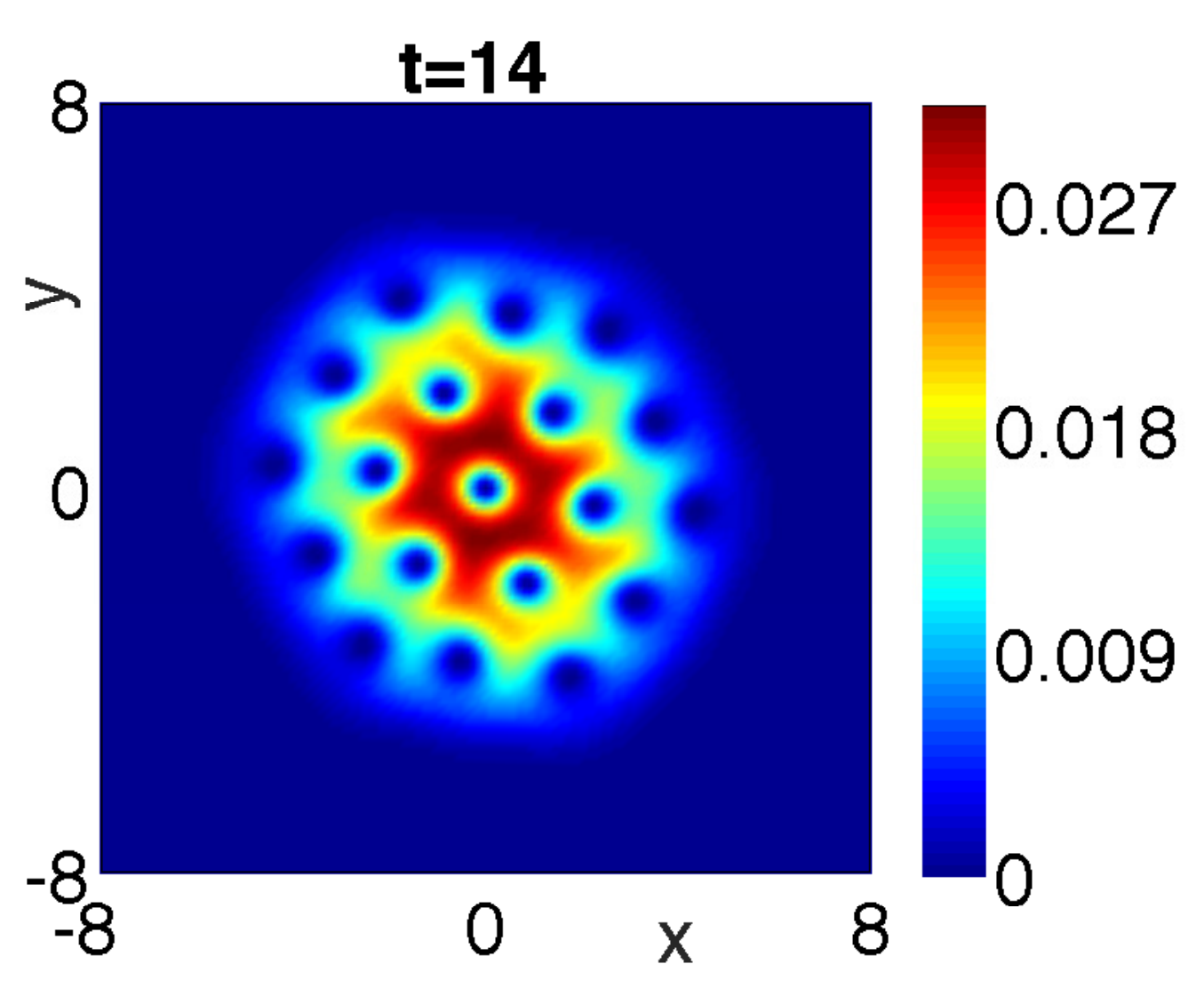,height=2.9cm,width=3.4cm,angle=0}
\psfig{figure=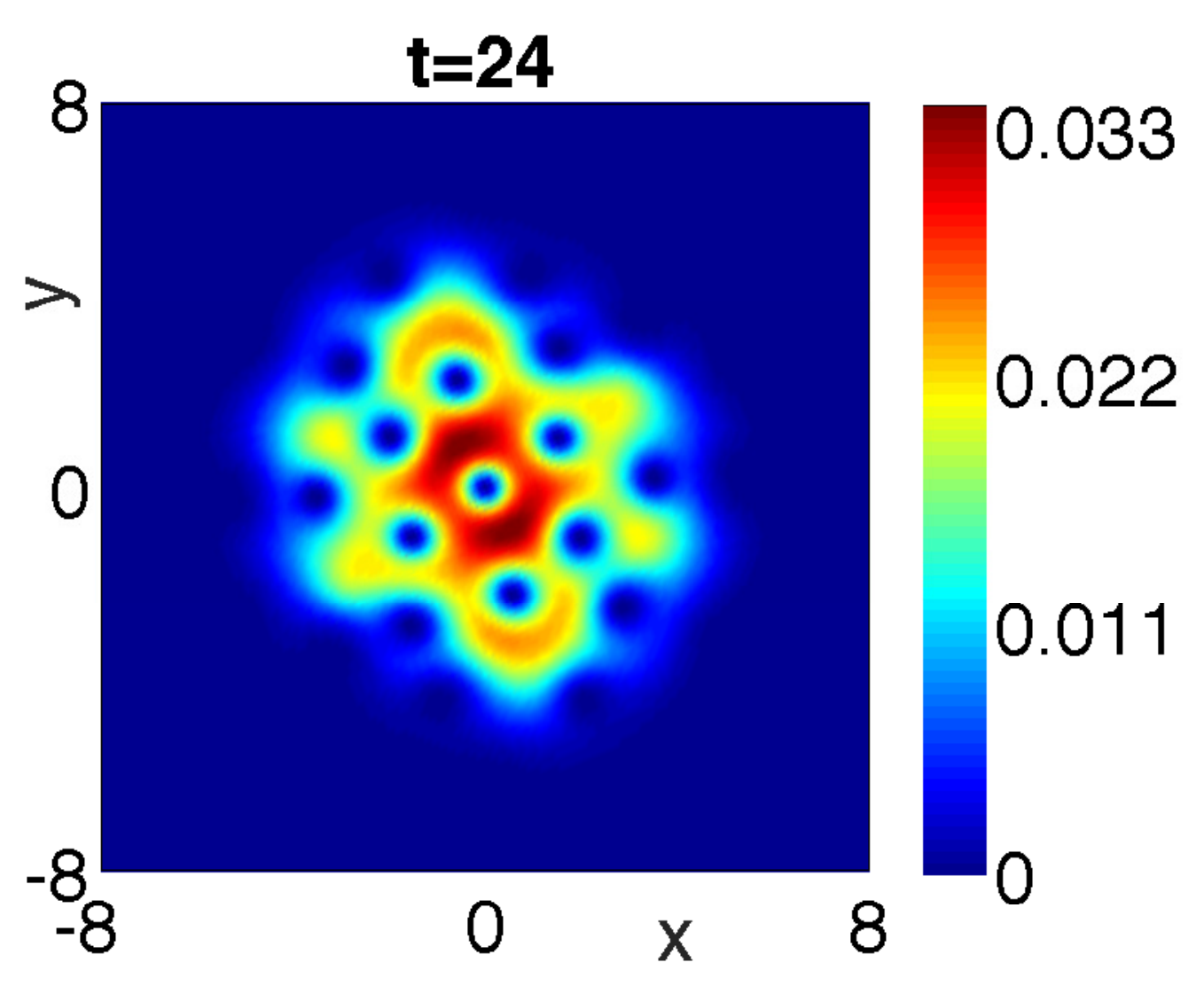,height=2.9cm,width=3.4cm,angle=0}
\psfig{figure=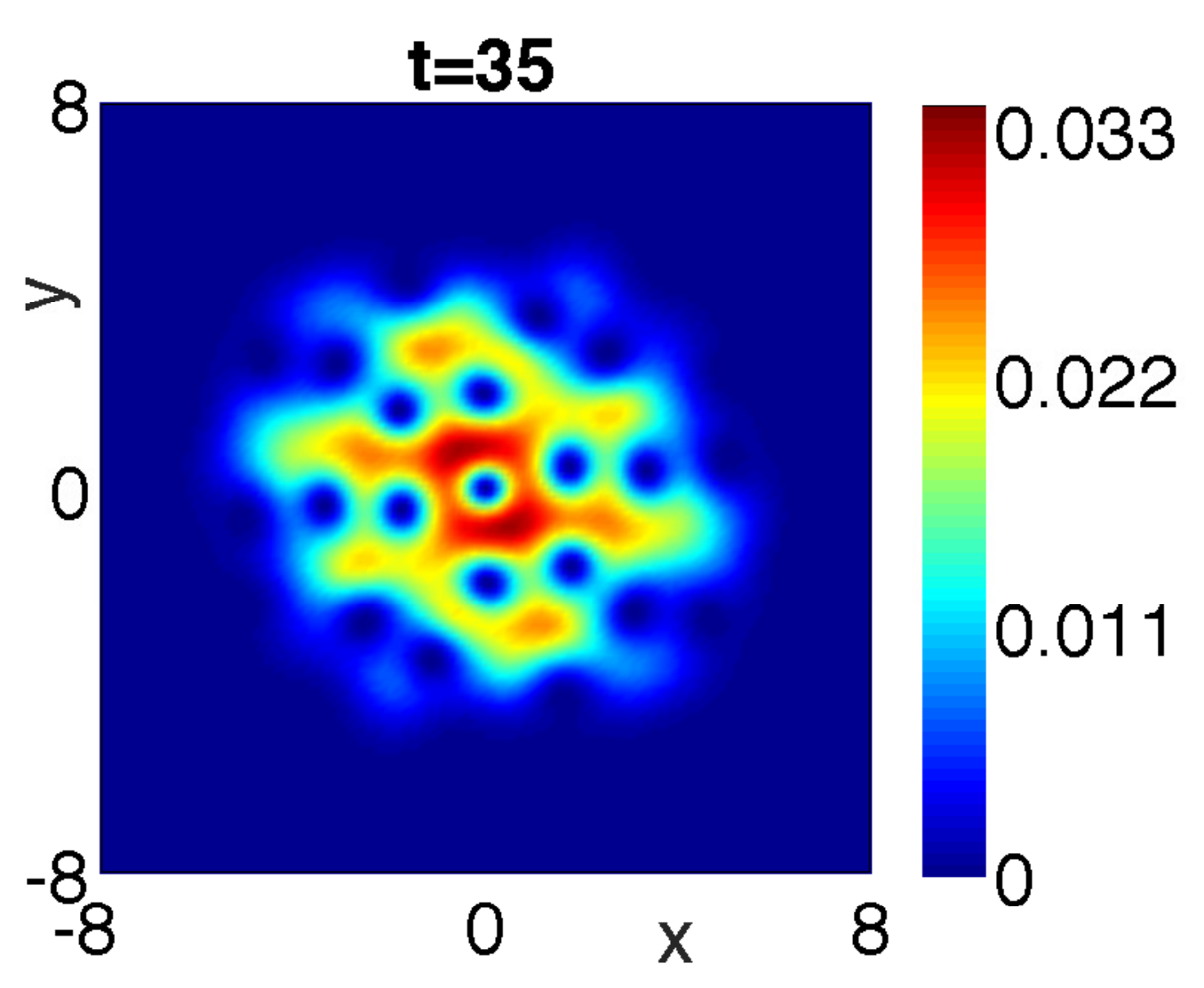,height=2.9cm,width=3.4cm,angle=0}
\psfig{figure=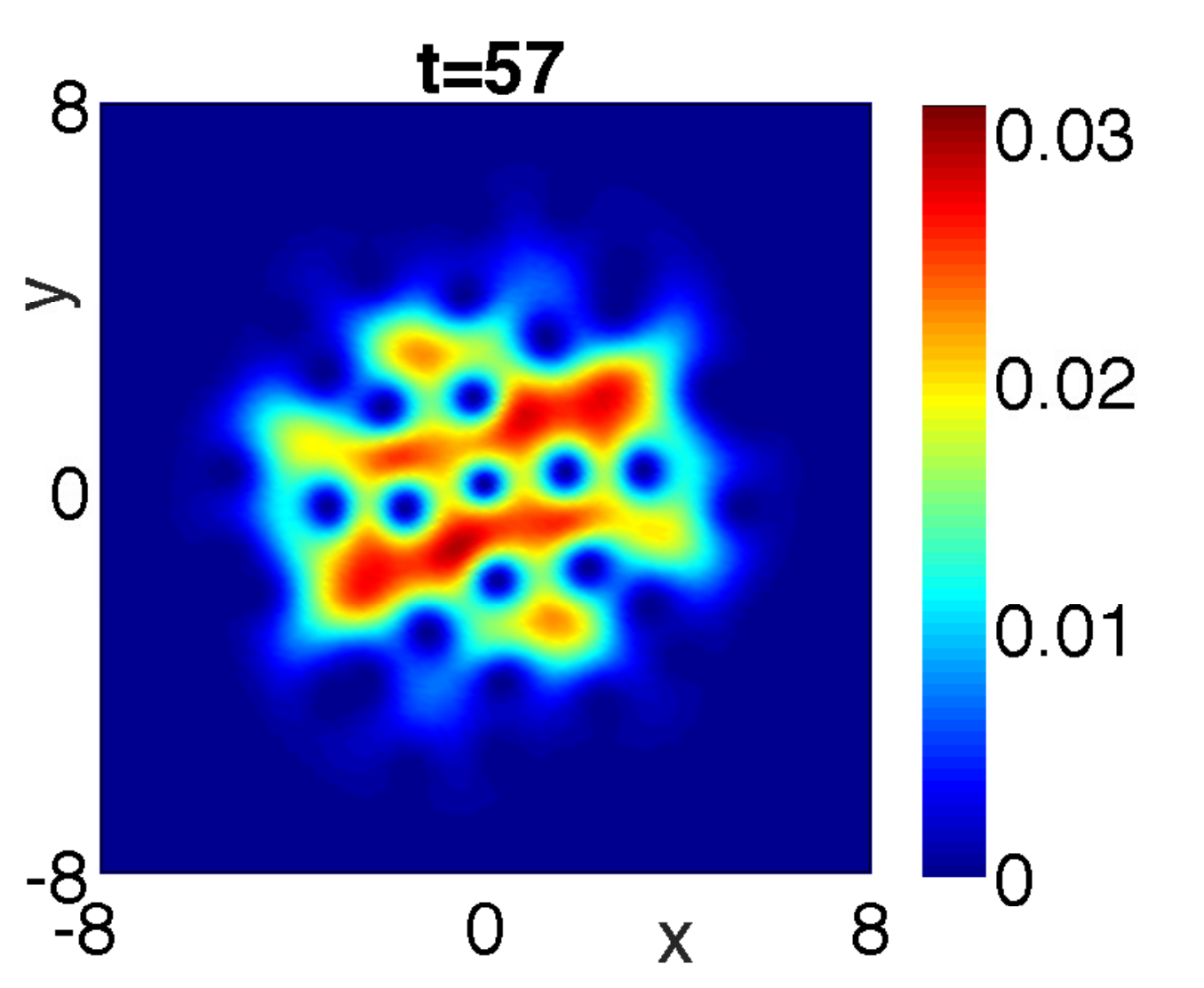,height=2.9cm,width=3.4cm,angle=0}
}
\centerline{
\psfig{figure=VortexDyn_Density_TwoComp-eps-converted-to.pdf,height=2.9cm,width=3.4cm,angle=0}
\psfig{figure=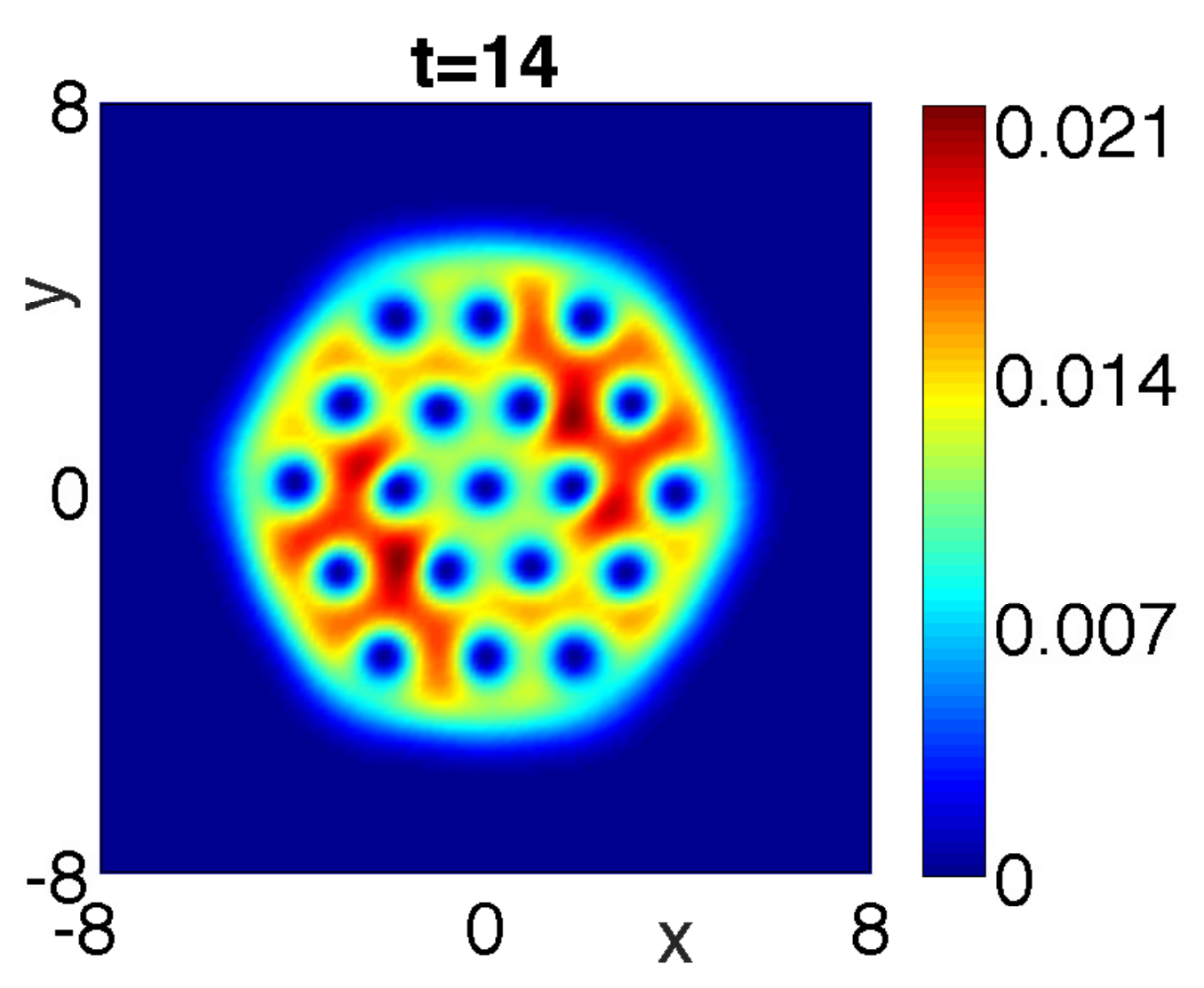,height=2.9cm,width=3.4cm,angle=0}
\psfig{figure=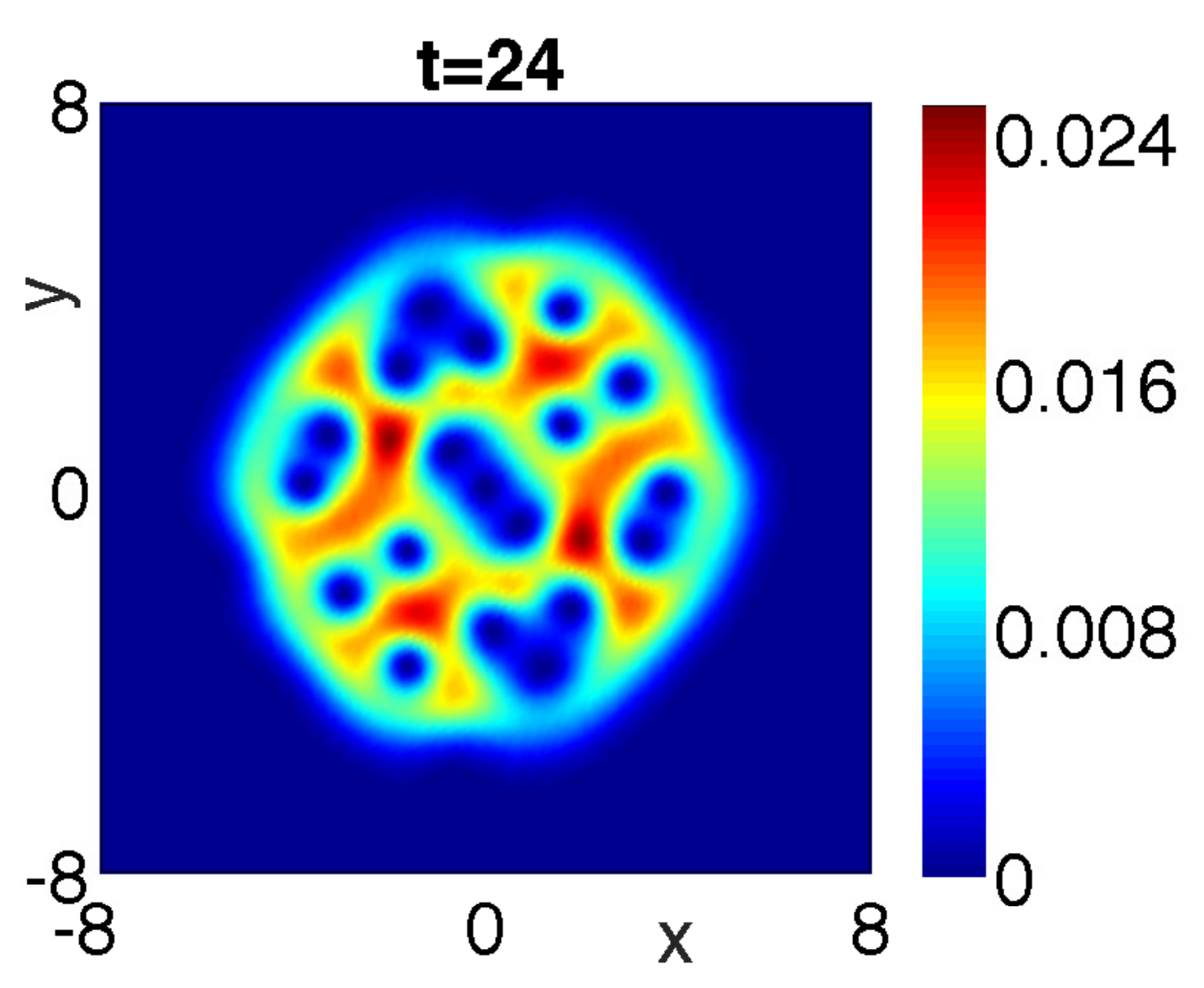,height=2.9cm,width=3.4cm,angle=0}
\psfig{figure=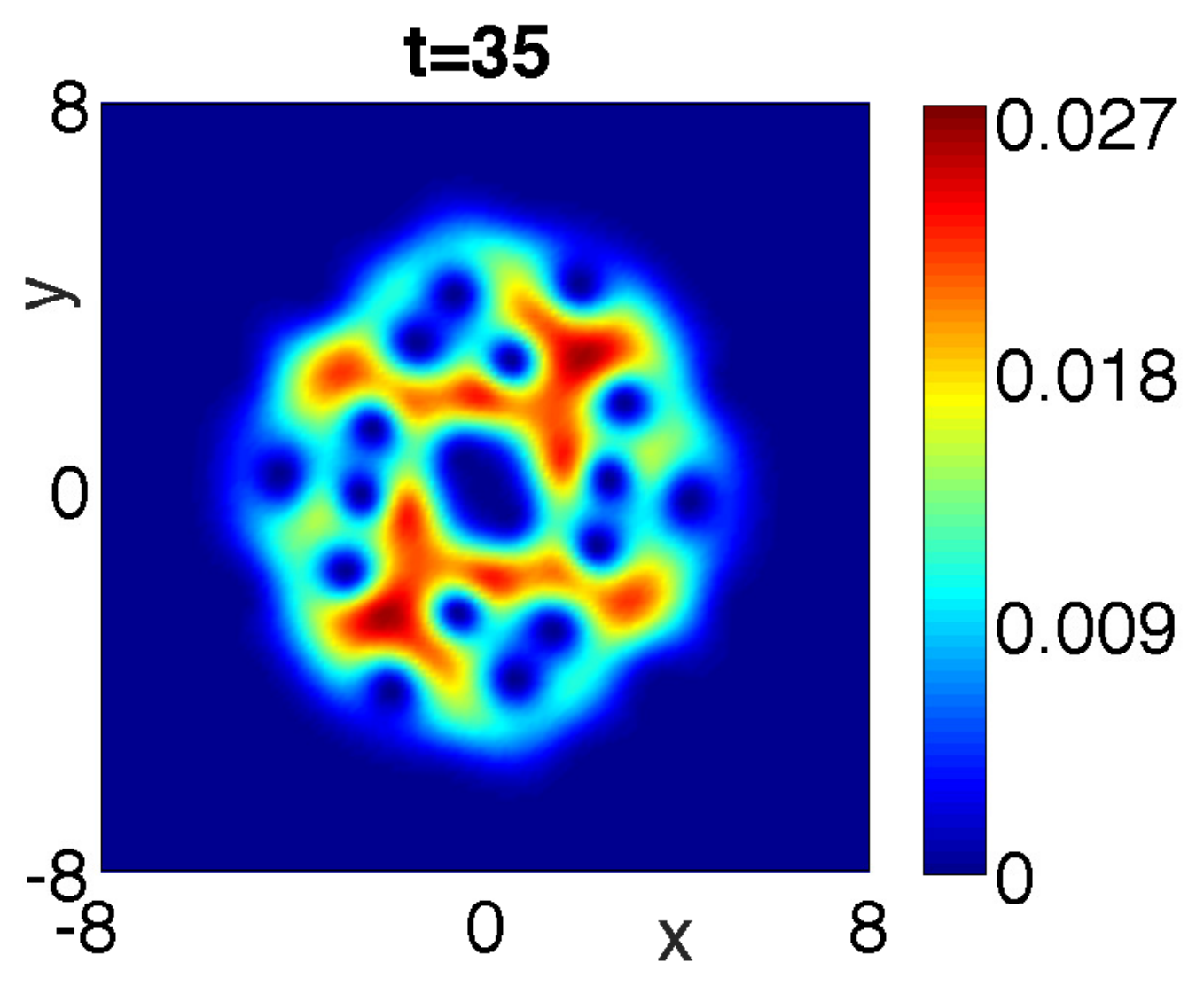,height=2.9cm,width=3.4cm,angle=0}
\psfig{figure=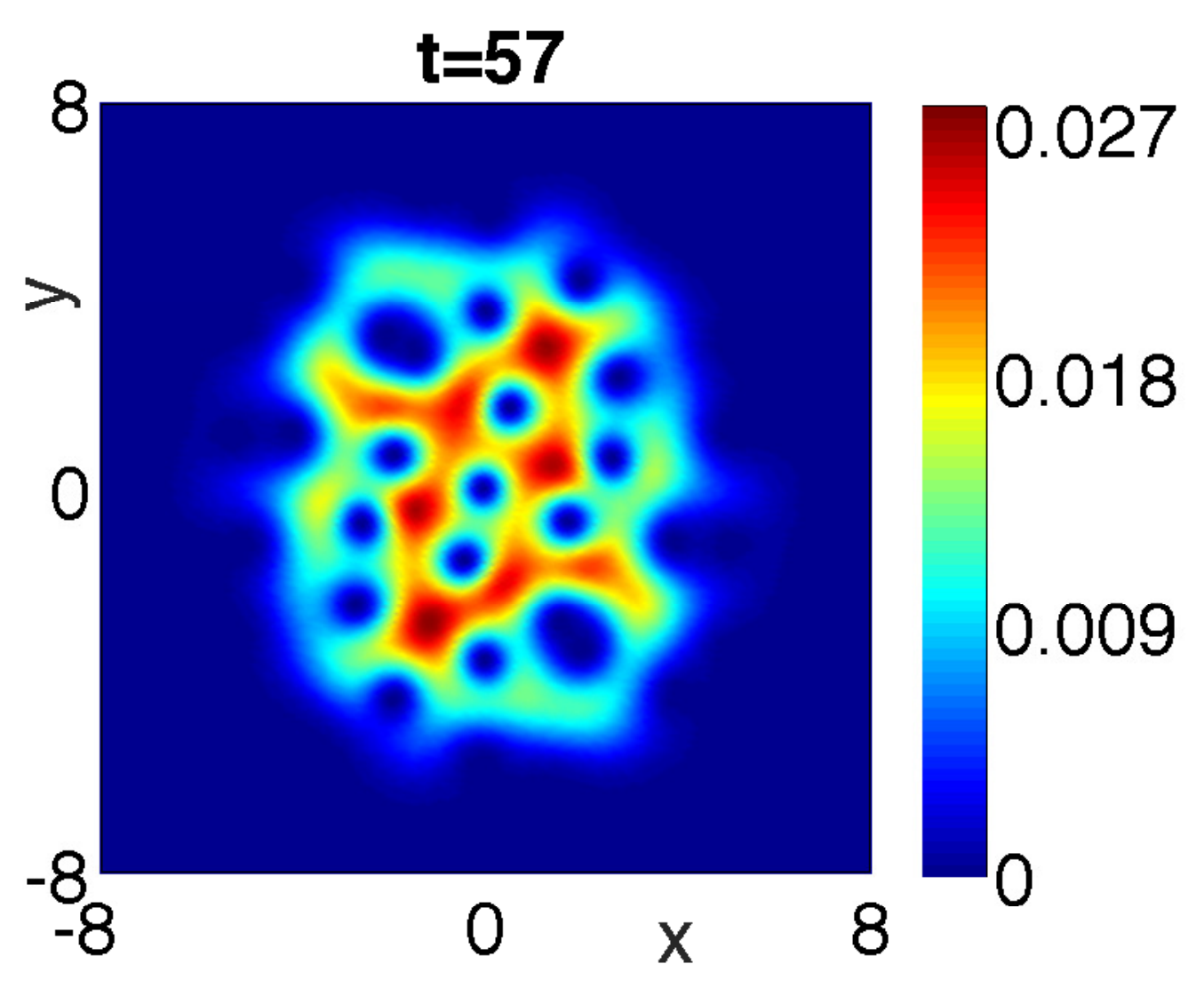,height=2.9cm,width=3.4cm,angle=0}
}
\caption{Contour plots of the densities $|\psi_1(\bx,t)|^2$  and $|\psi_2(\bx,t)|^2$ for Case 1 (top two rows) and Case 2 (bottom) in Section \ref{sec:DynQVL}. }
\label{fig:Dens_VortexLattice}
\end{figure}

\begin{figure}[h!]
\centerline{
a)\psfig{figure=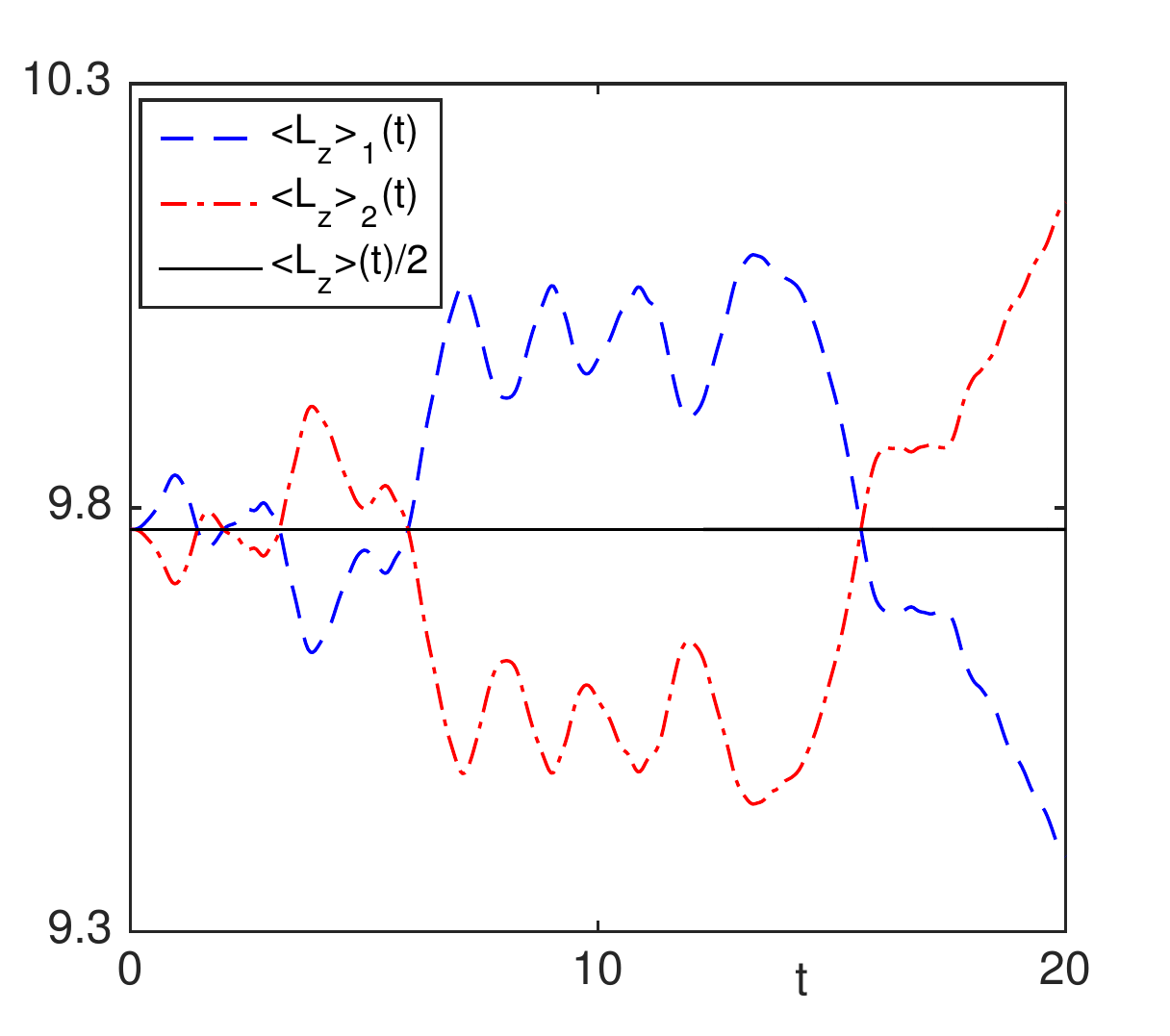,height=5.2cm,width=6.7cm,angle=0}\;\;\;
b)\psfig{figure=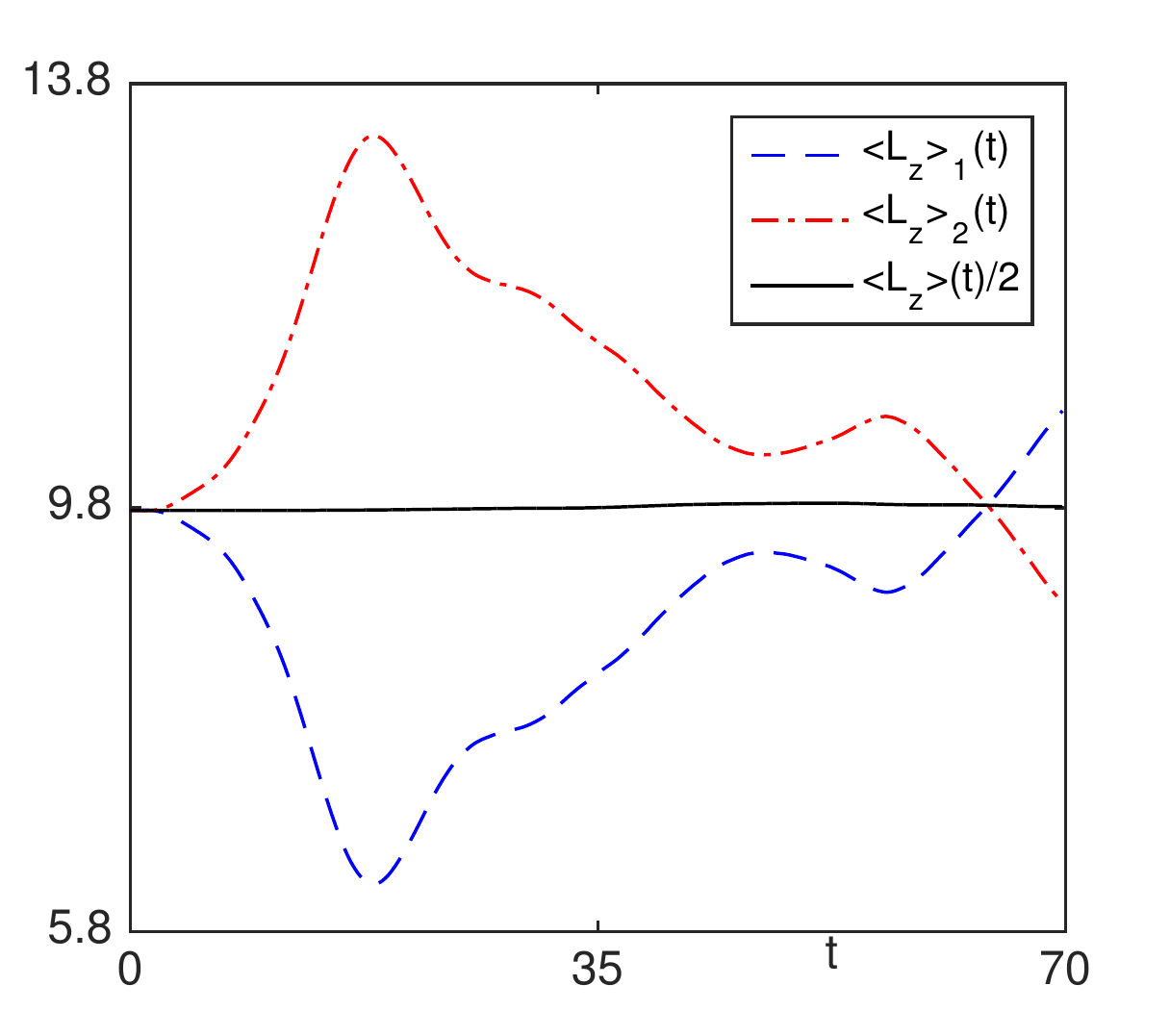,height=5.2cm,width=6.7cm,angle=0}
}
\caption{Dynamics of the angular momentum expectation for Case 1 (a) and 2 (b) in Section \ref{sec:DynQVL}}
\label{fig:conden-width-2}
\end{figure}

\subsection{Numerical results in 3D }

In this subsection, we report the dynamics of  non-rotating two-component dipolar BECs under different 
setups.  To this end, unless stated otherwise,
the trapping potential and  initial datum  are chosen respectively as
\be\label{exam_ini}
V_1(\bx)=V_2(\bx)=\fl{|\bx|^2}{2},\quad
 \psi_1^0(\bx)=\psi_2^0(\bx)=\fl{1}{\sqrt{2}}\,\phi_{\rm gs}(\bx),
 \ee
where $\phi_{\rm gs}(\bx)$ is the ground state of the single-component non-rotating dipolar BEC   
with parameters $\bn=(0,0,1)^T$, $\beta=103.58$ and $\lambda=0.8\,\beta$.  Figure \ref{fig:dens_3Dini_Ex123}
shows the isosurface of the density for the initial datum $|\psi_j^0(\bx)|^2=0.01$ ($j=1,2$). 
The computation domain is taken as $\mathcal{D}=[-8, 8]^3$ and the mesh sizes in spatial and temporal direction
are chosen as $h_\tx=h_\ty=h_\tz=h=\fl{1}{8}$ and $\Delta t=0.001$, respectively. 

\begin{exmp}
\label{eg:3D_eg1}
Let $\beta_{11}=\beta_{22}=\beta$, $\lambda_{11}=\lambda$ and consider the following three cases:
for $j=1,2$, $k_j=3-j$
\begin{itemize}
 \item  Case 1: let $\beta_{jk_j}=100$ and turn off the DDI in  component two, i.e. 
 $\lambda_{22}=\lambda_{jk_j}=0.$  The dipole axis in component one is kept unchanged, i.e. $\bn=(0,0,1)^T$.
\item  Case 2 : change the dipole axis to  $\bn=(1,0,0)^T$ and keep the other parameters the same as in Case 1.
\item Case 3: perturb the interatomic interaction  as well as the DDI strength, i.e. $\beta_{12}=\beta_{21}=50$,
$\lambda_{22}=0.8\beta$ and $\lambda_{jk_j}=0.8\beta_{jk_j}$. The dipole axis is now time-dependent:  $\bn=\big(\sin(t/2), 0, \cos(t/2) \big)^T.$
\end{itemize}

\end{exmp}

\medskip

Figures \ref{fig:dens_3Deg1_case1}-\ref{fig:dens_3Deg1_case3} depict 
the isosurface of the densities  $|\psi_j(\bx, t)|^2=0.01$ ($j=1,2$) at different times.  
From these figures and additional results not shown here for brevity, we can see that: (i) The total energy and mass are conserved well.  
(ii) Phase separation of the two components may come up during dynamics (cf. Figs.~\ref{fig:dens_3Deg1_case1}-\ref{fig:dens_3Deg1_case2}).  
In fact, the BECs would undergo mixing and de-mixing formation  cyclically. 
(iii)  Similar as those shown in the single-component BEC \cite{BTZ2015}, when the trapping potentials are isotropic, 
the shapes of the density profile seem unchanged and keep the same symmetric structure with respect to the dipole orientation 
if the dipole axis rotates slowly (cf. Fig. \ref{fig:dens_3Deg1_case3}).

\begin{figure}[h!]
\centerline{
a)\psfig{figure=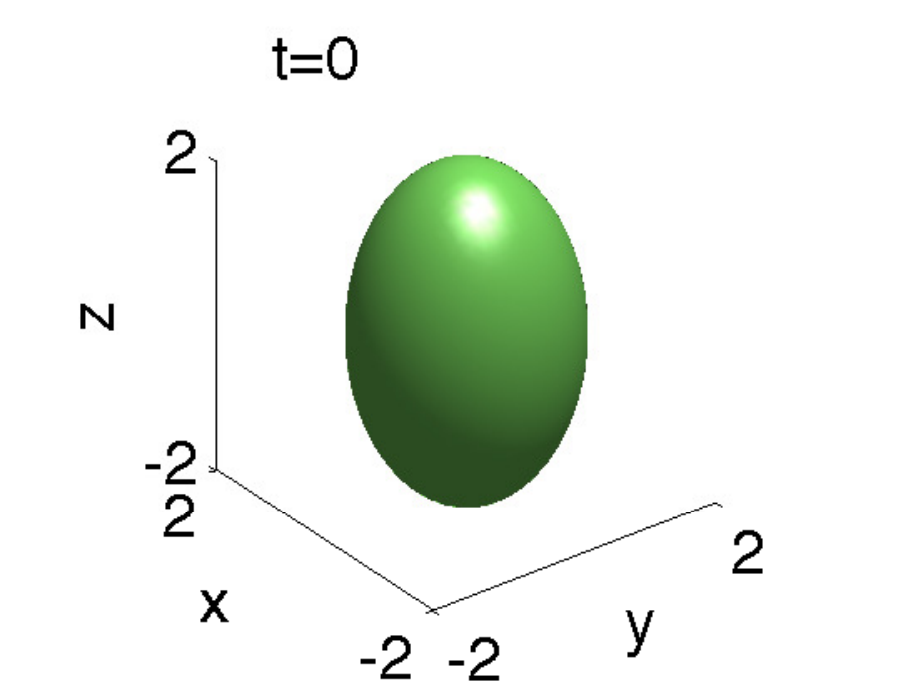,height=5.cm,width=6.2cm,angle=0}\quad
b)\psfig{figure=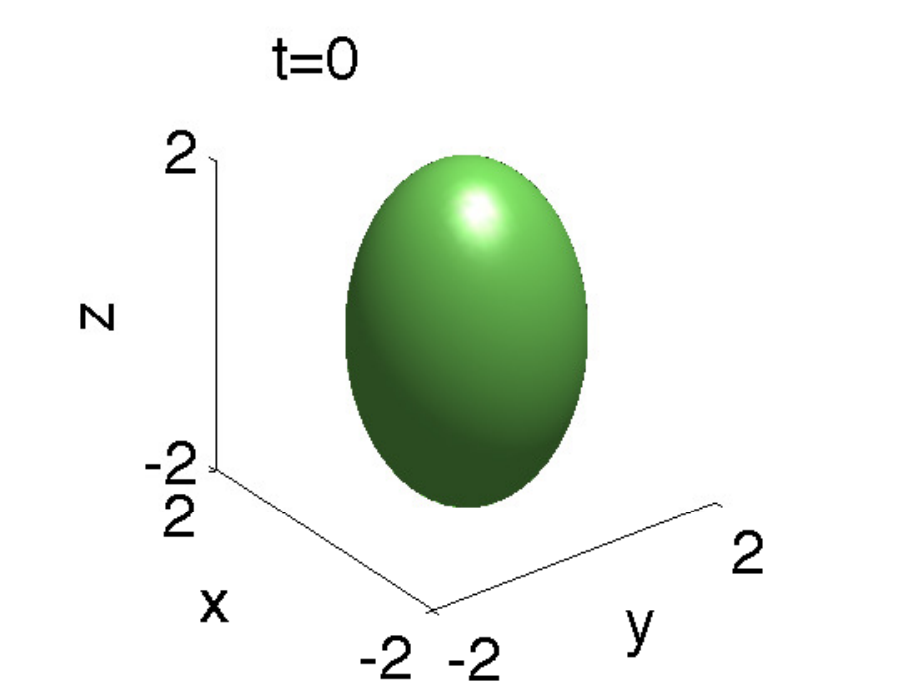,height=5.cm,width=6.2cm,angle=0}
}
\caption{Isosurface of the initial densities $\rho_1^0(\bx)=|\psi_1^0(\bx)|^2=0.01$ (a) and
 $\rho_2^0(\bx)=|\psi_2^0(\bx)|^2=0.01$ (b) in Example \ref{eg:3D_eg1}.}
 \label{fig:dens_3Dini_Ex123}
\end{figure}

\begin{figure}[h!]
\centerline{
\psfig{figure=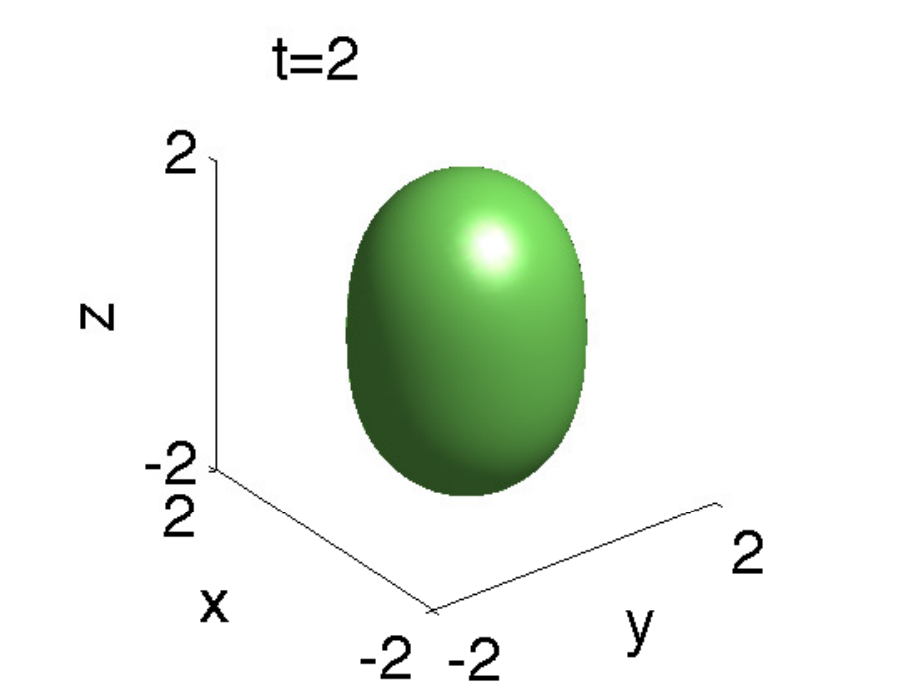,height=3.6cm,width=4.2cm,angle=0}
\psfig{figure=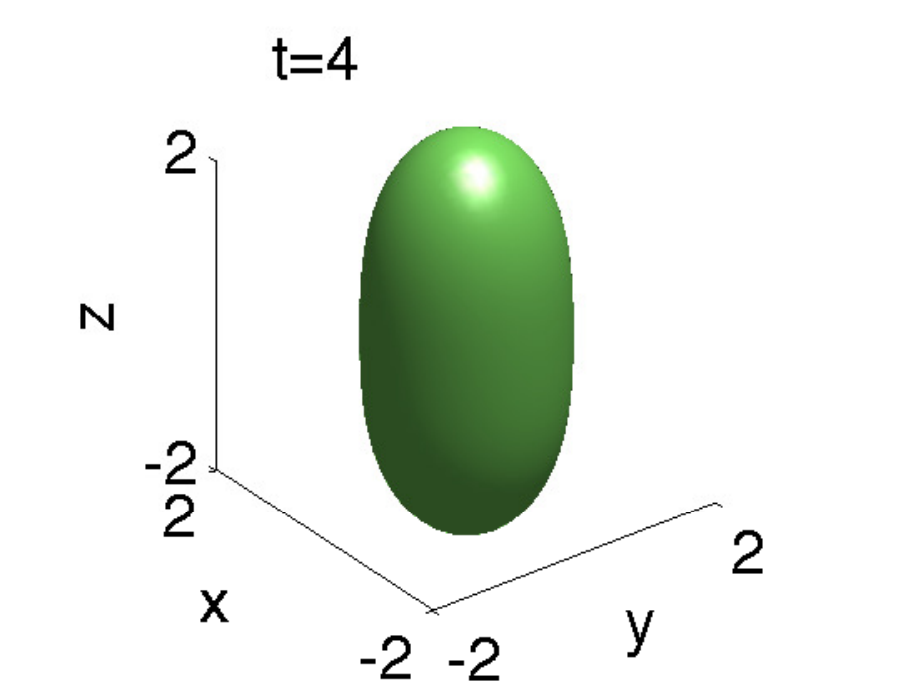,height=3.7cm,width=4.2cm,angle=0}
\psfig{figure=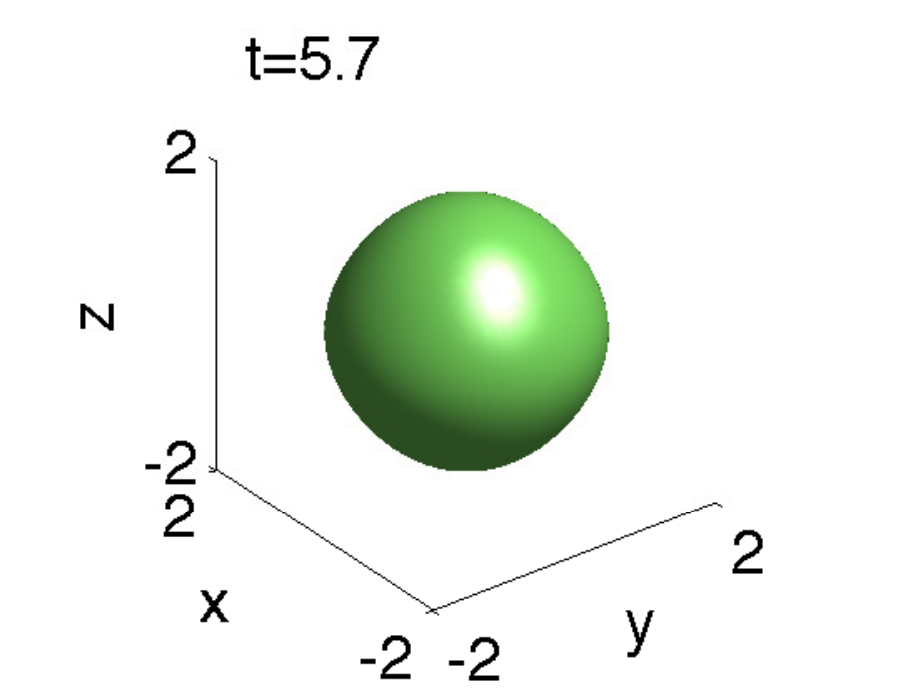,height=3.7cm,width=4.2cm,angle=0}
\psfig{figure=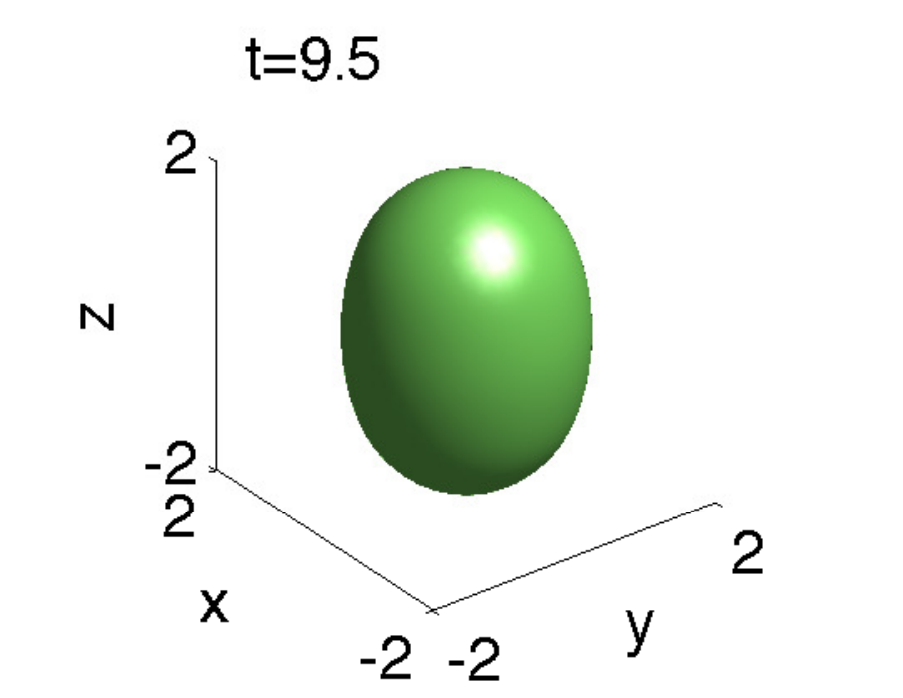,height=3.7cm,width=4.2cm,angle=0}
}
\vspace{0.5cm}
\centerline{
\psfig{figure=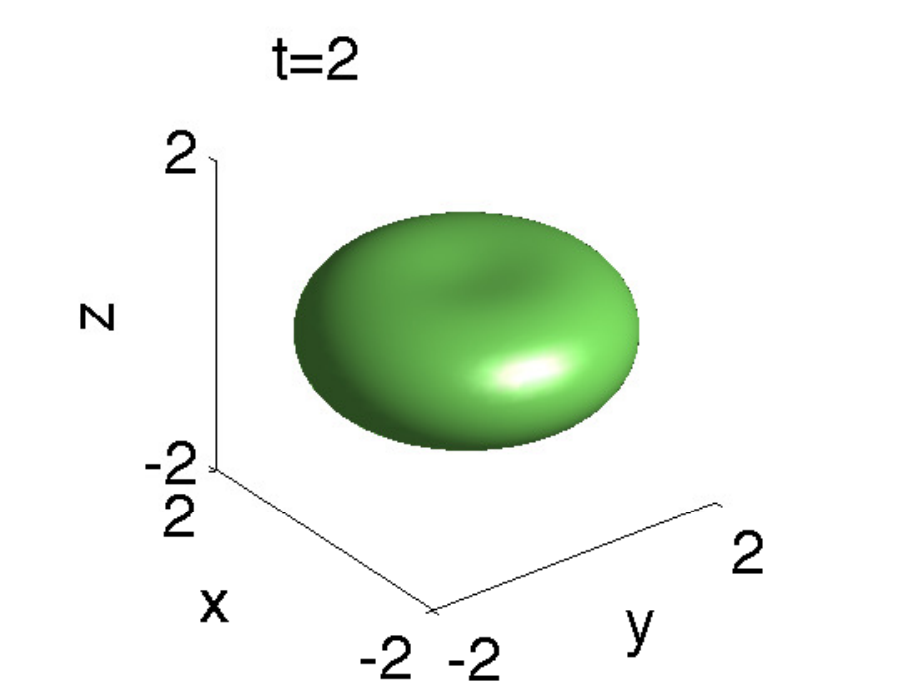,height=3.6cm,width=4.2cm,angle=0}
\psfig{figure=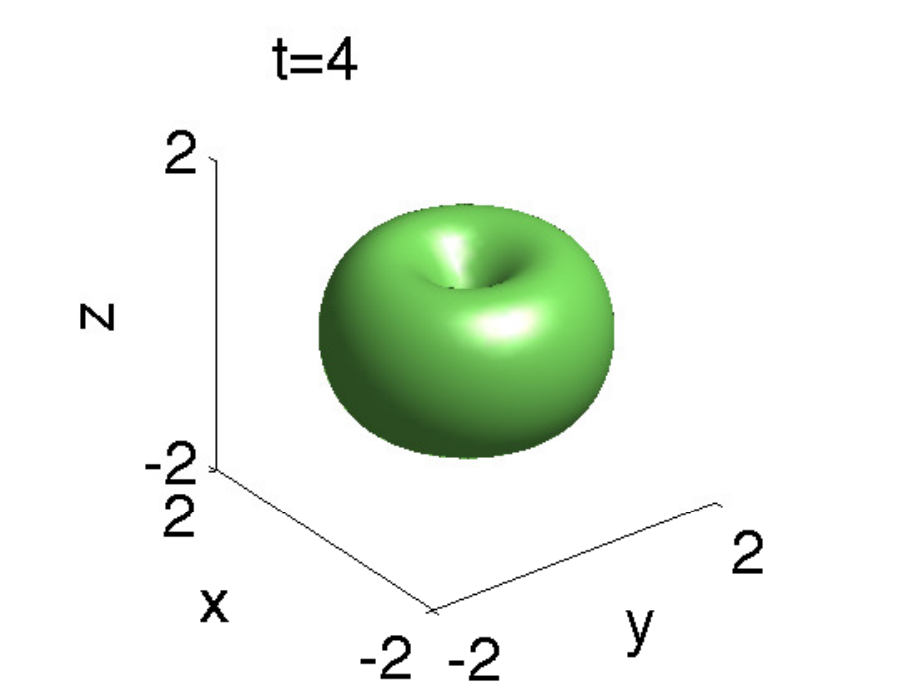,height=3.7cm,width=4.2cm,angle=0}
\psfig{figure=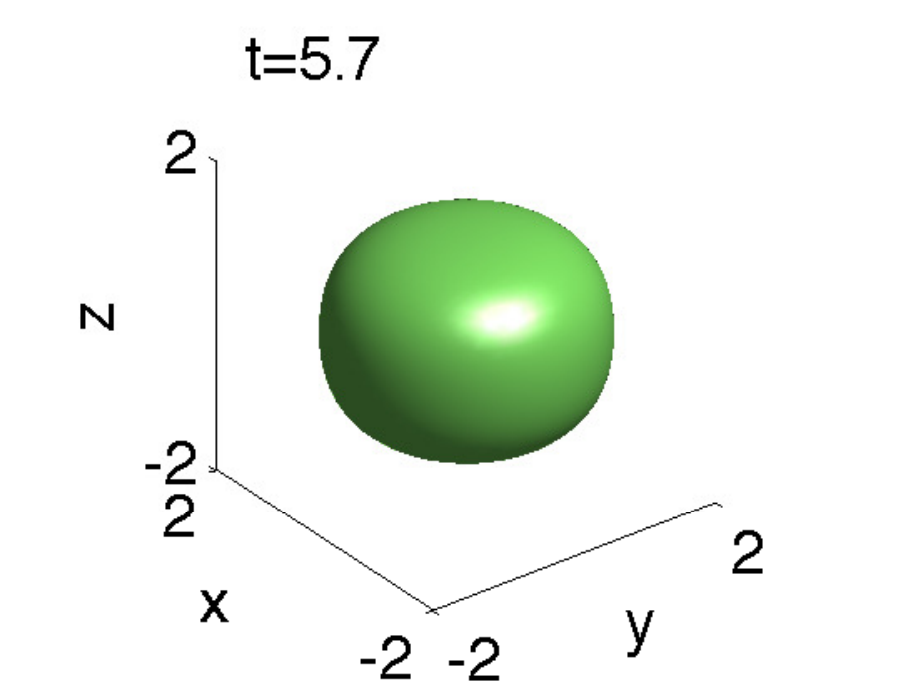,height=3.7cm,width=4.2cm,angle=0}
\psfig{figure=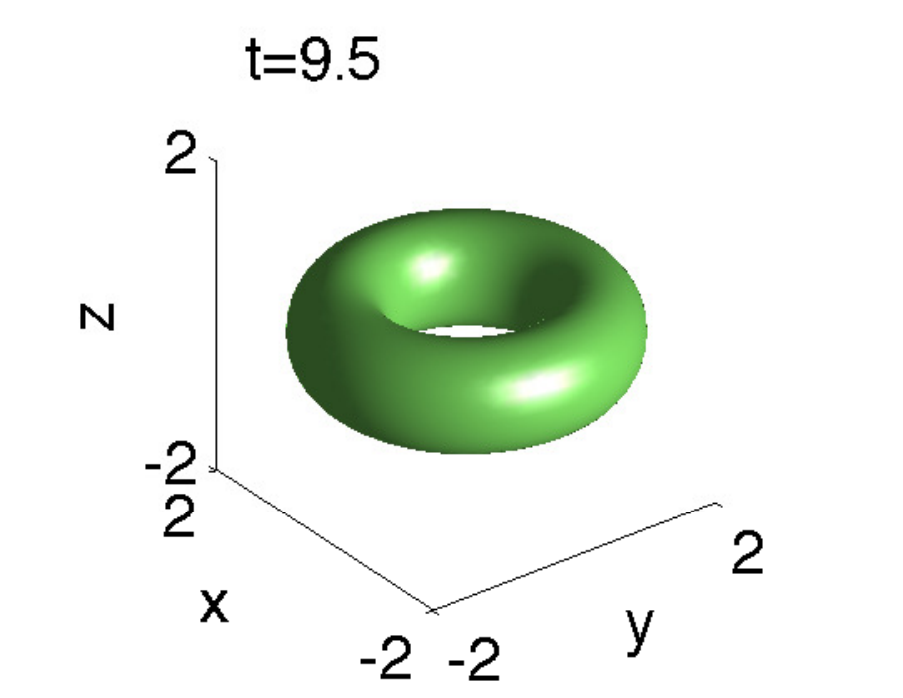,height=3.7cm,width=4.2cm,angle=0}
}
\caption{Isosurface of the densities $\rho_1(\bx,t)=|\psi_1(\bx,t)|^2=0.01$ (first row) and
 $\rho_2(\bx,t)=|\psi_2(\bx,t)|^2=0.01$ (second row) at different times in Example \ref{eg:3D_eg1}: Case I.}
 \label{fig:dens_3Deg1_case1}
\end{figure}

\begin{figure}[h!]
\centerline{
\psfig{figure=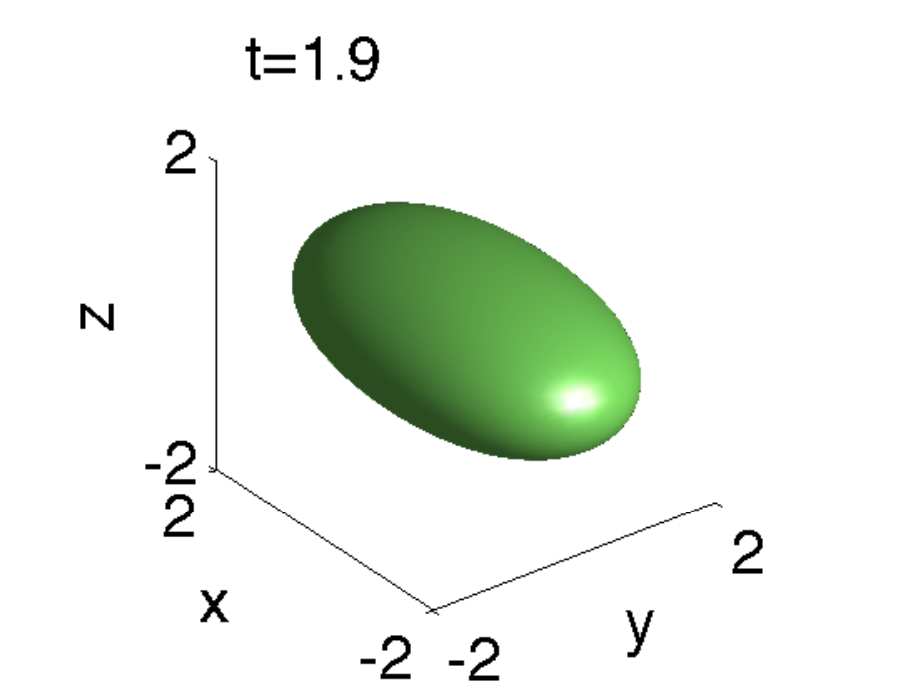,height=3.6cm,width=4.2cm,angle=0}
\psfig{figure=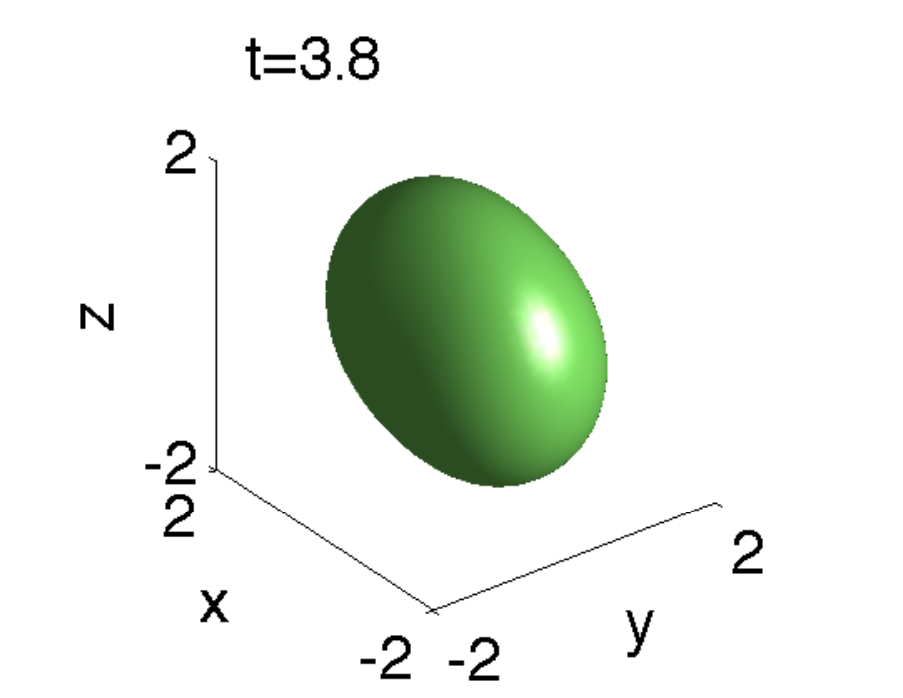,height=3.7cm,width=4.2cm,angle=0}
\psfig{figure=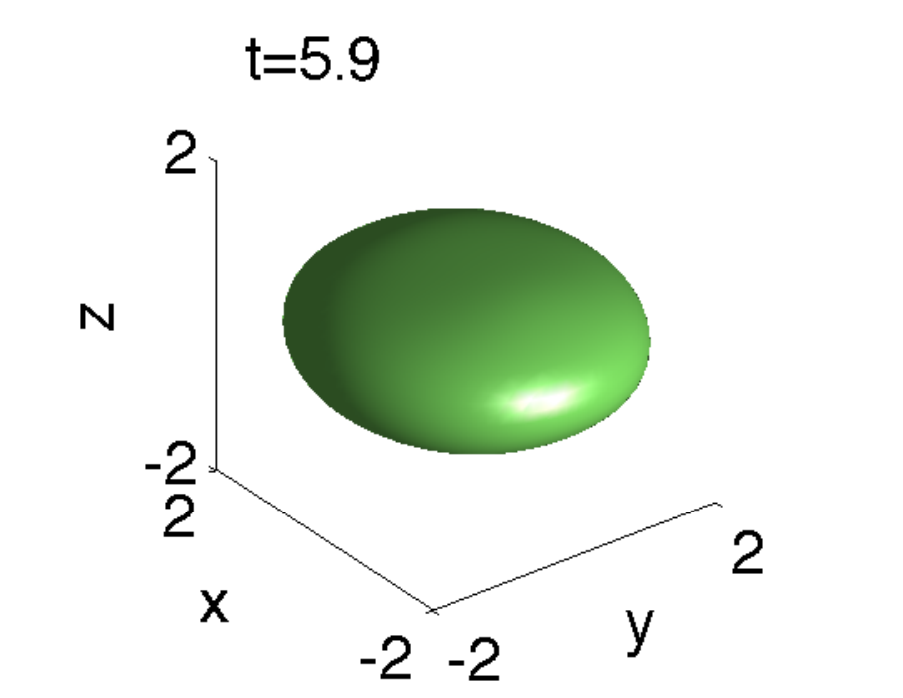,height=3.7cm,width=4.2cm,angle=0}
\psfig{figure=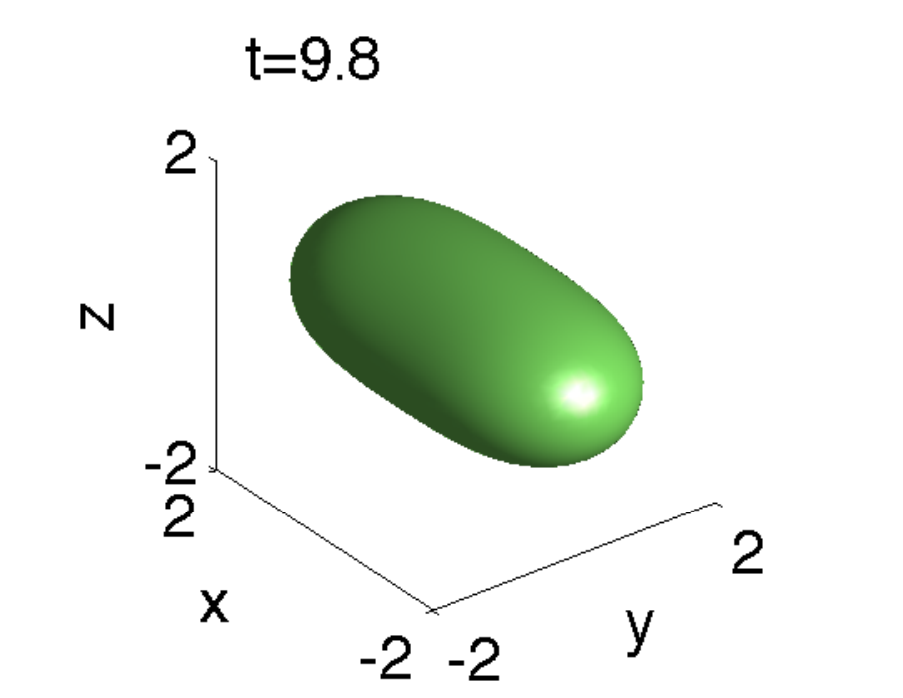,height=3.7cm,width=4.2cm,angle=0}
}
\vspace{0.5cm}
\centerline{
\psfig{figure=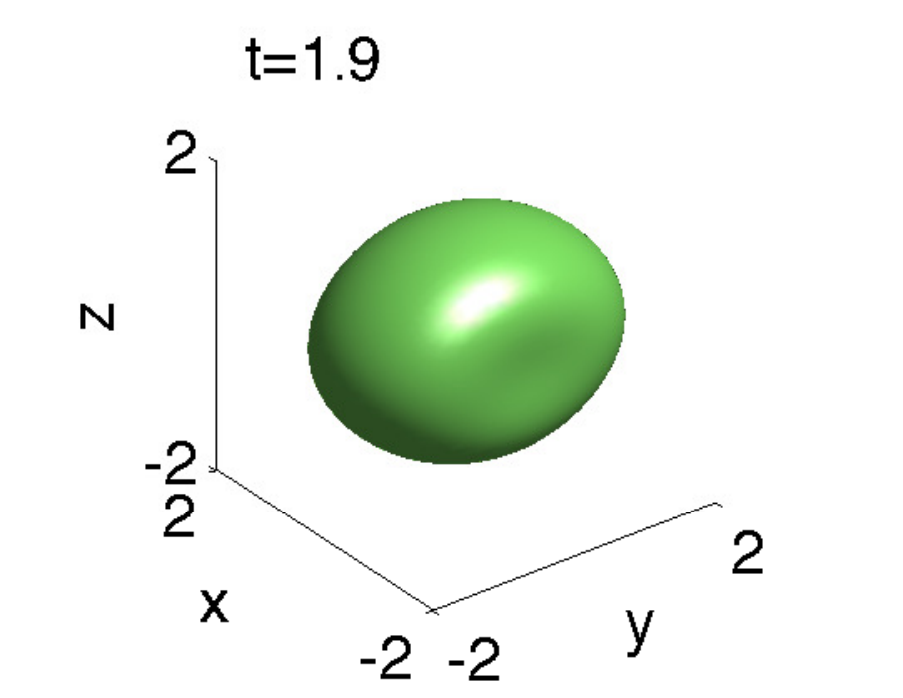,height=3.6cm,width=4.2cm,angle=0}
\psfig{figure=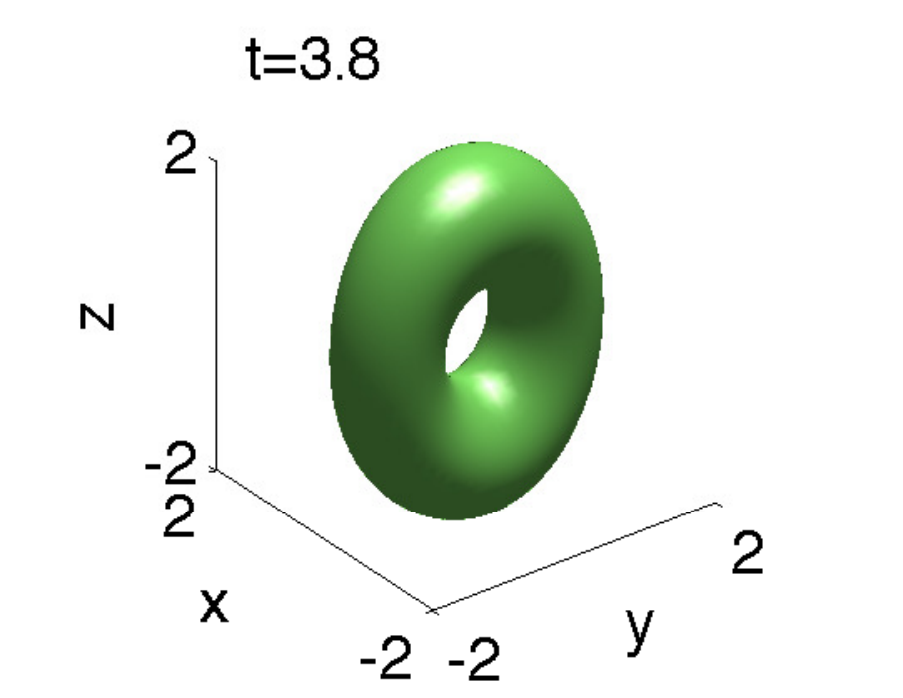,height=3.7cm,width=4.2cm,angle=0}
\psfig{figure=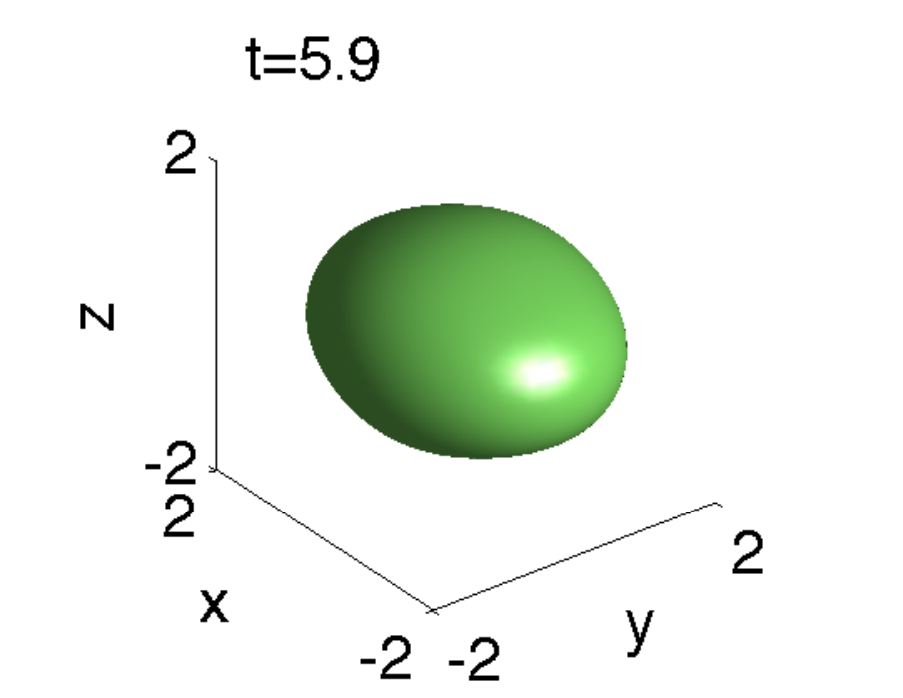,height=3.7cm,width=4.2cm,angle=0}
\psfig{figure=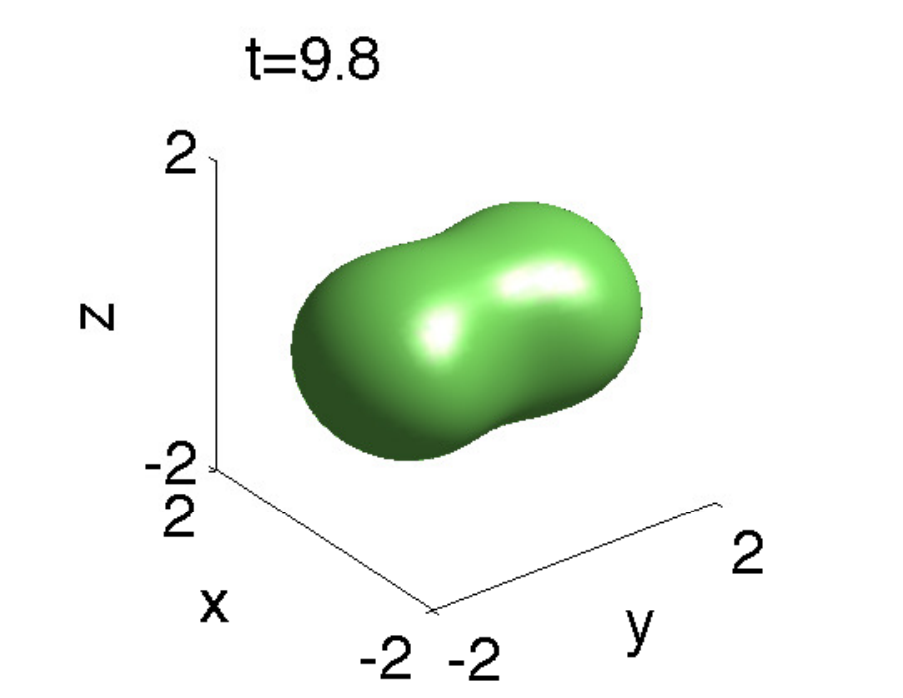,height=3.7cm,width=4.2cm,angle=0}
}
\caption{Isosurface of the densities $\rho_1(\bx,t)=|\psi_1(\bx,t)|^2=0.01$ (first row) and
 $\rho_2(\bx,t)=|\psi_2(\bx,t)|^2=0.01$ (second row) at different times in Example \ref{eg:3D_eg1}: Case II.}
 \label{fig:dens_3Deg1_case2}
\end{figure}

\begin{figure}[h!]
\centerline{
\psfig{figure=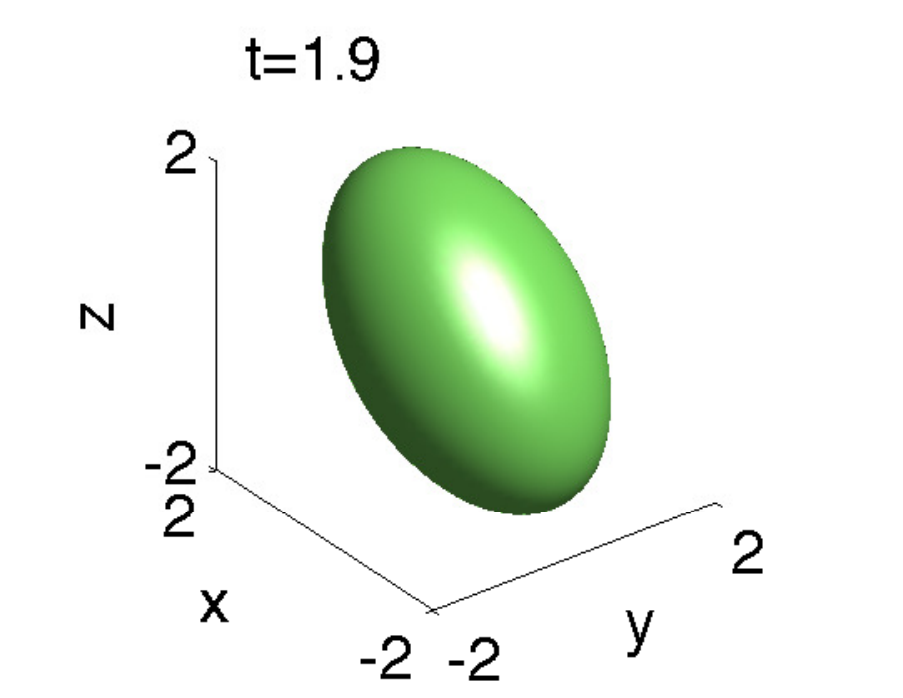,height=3.6cm,width=4.2cm,angle=0}
\psfig{figure=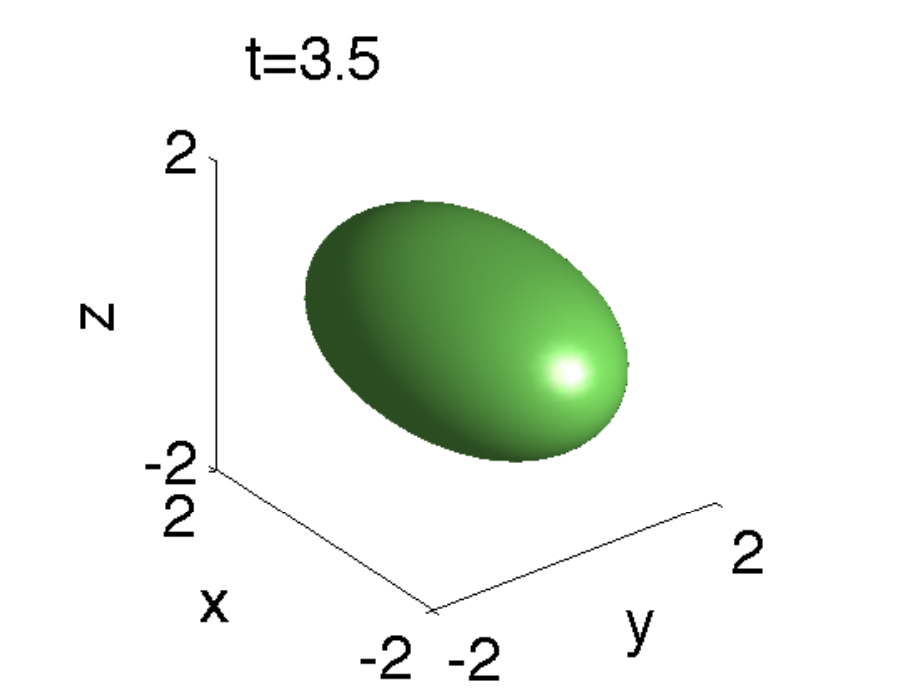,height=3.7cm,width=4.2cm,angle=0}
\psfig{figure=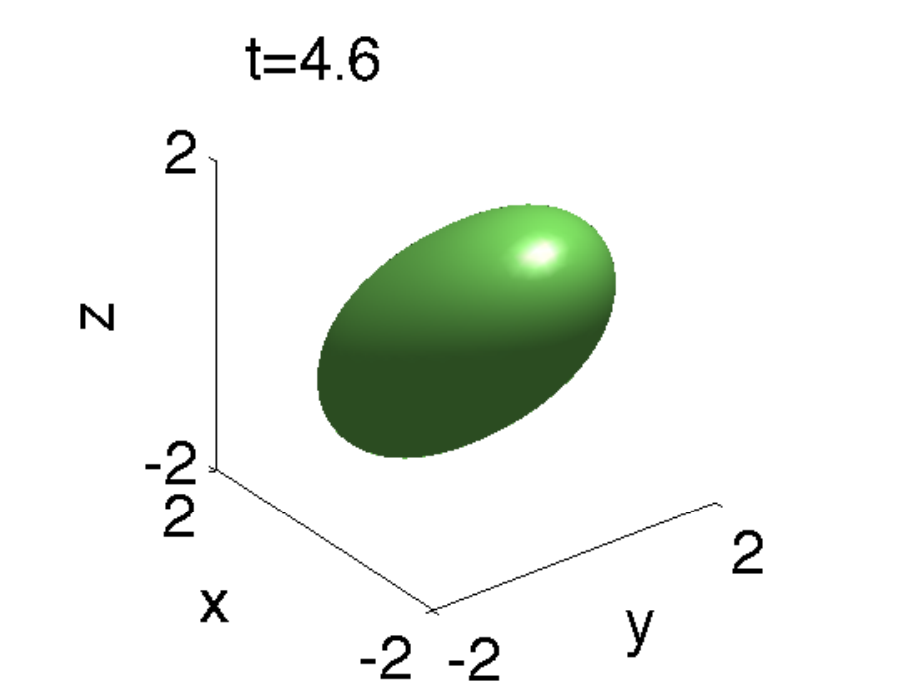,height=3.7cm,width=4.2cm,angle=0}
\psfig{figure=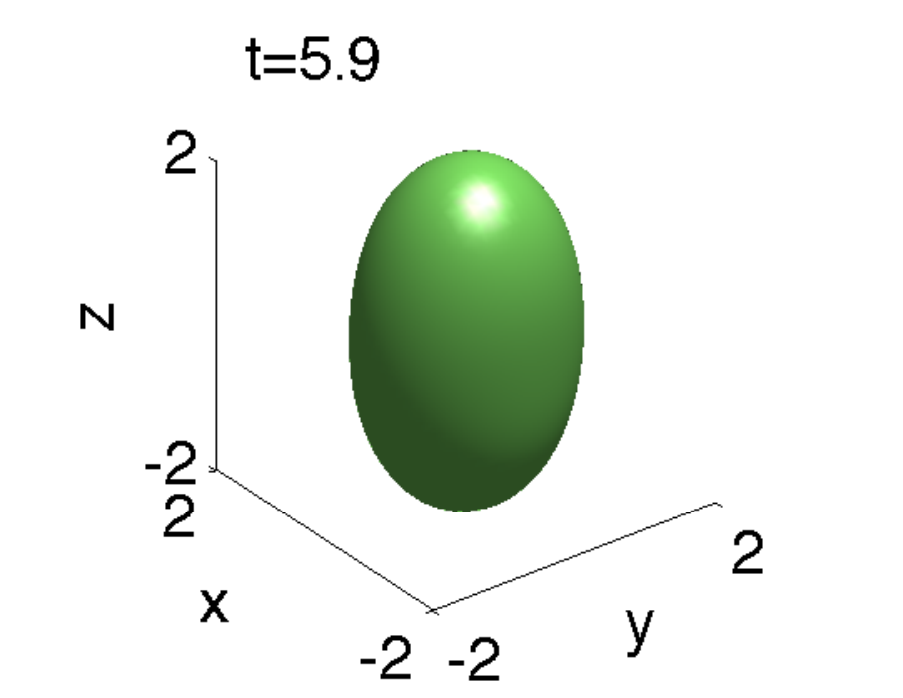,height=3.7cm,width=4.2cm,angle=0}
}
\vspace{0.5cm}
\centerline{
\psfig{figure=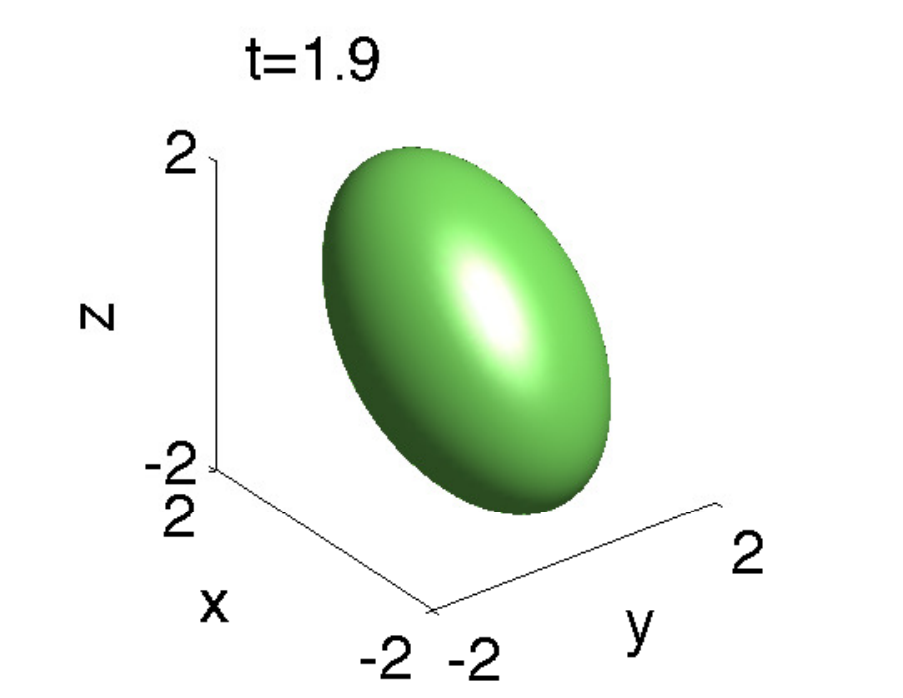,height=3.6cm,width=4.2cm,angle=0}
\psfig{figure=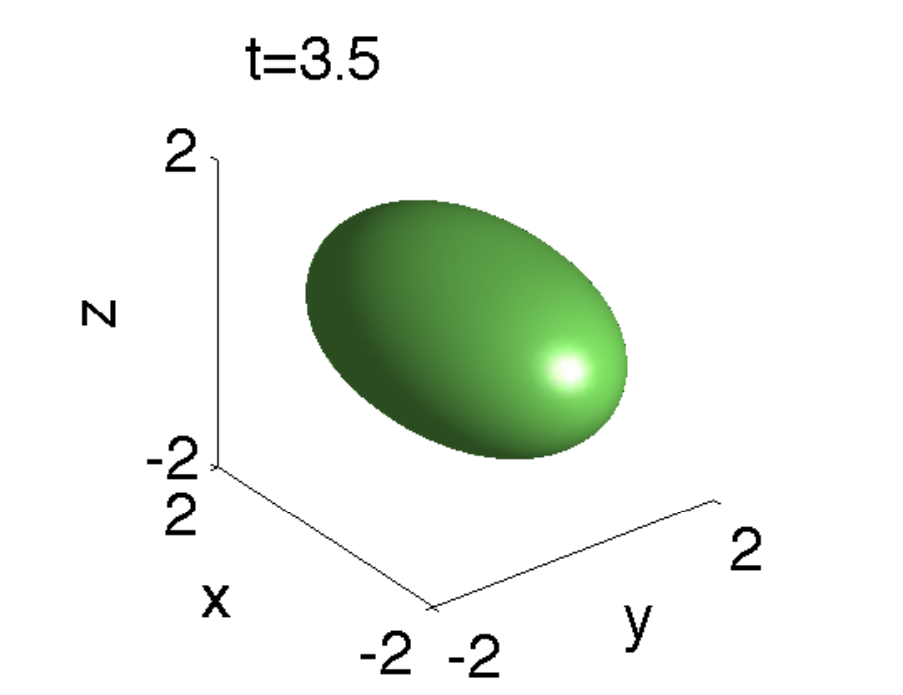,height=3.7cm,width=4.2cm,angle=0}
\psfig{figure=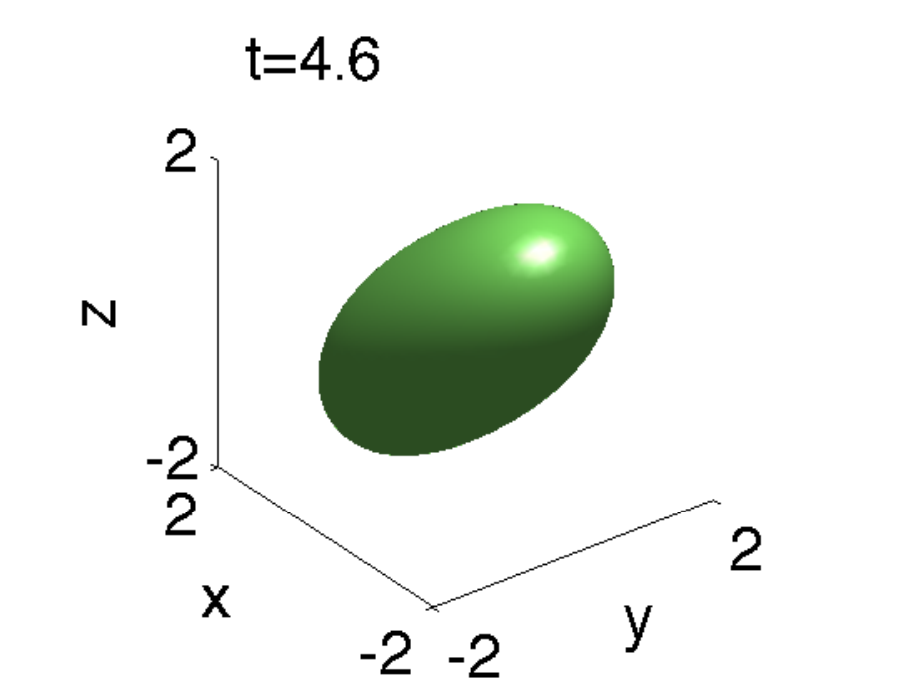,height=3.7cm,width=4.2cm,angle=0}
\psfig{figure=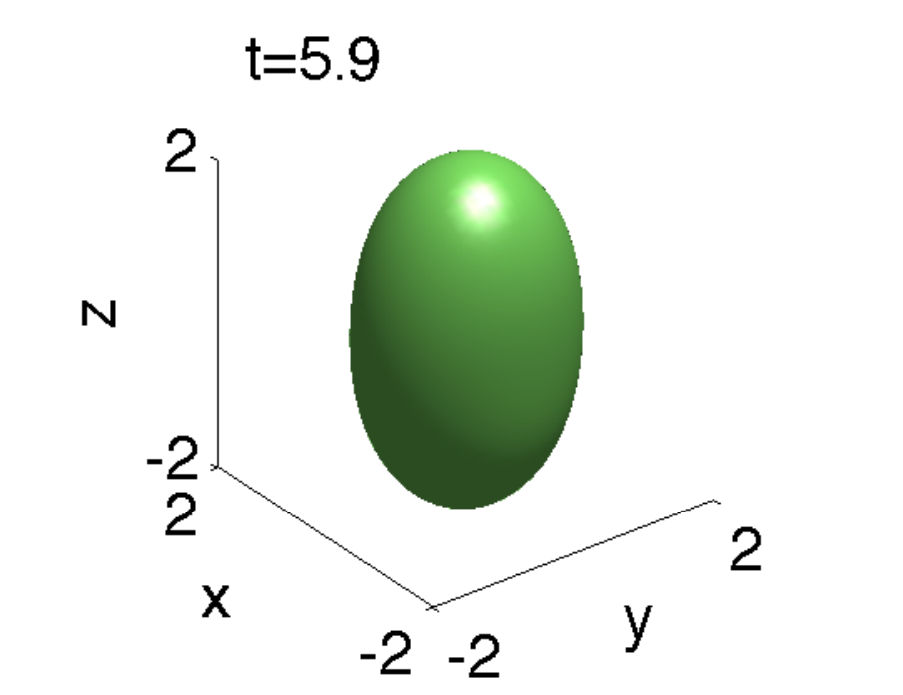,height=3.7cm,width=4.2cm,angle=0}
}
\caption{Isosurface of the densities $\rho_1(\bx,t)=|\psi_1(\bx,t)|^2=0.01$ (first row) and
 $\rho_2(\bx,t)=|\psi_2(\bx,t)|^2=0.01$ (second row) at different times in Example \ref{eg:3D_eg1}: Case III.}
 \label{fig:dens_3Deg1_case3}
\end{figure}

\bigskip

 \begin{exmp}
 \label{eg:3D_eg2}
 Here we study the collapse dynamics  of the dipolar BEC. To this end, we take 
 initial data as (\ref{exam_ini})   with same interaction parameters and dipole axis $\bn$ under
 trapping potential $V_1(\bx)=V_2(\bx)=\fl{x^2+y^2+25z^2}{2}.$
Figure~\ref{fig:dens_3Dini_collapse12} shows the isosurface of the densities for the initial datum $|\psi_j^0(\bx)|^2=0.002$ ($j=1,2$).
The computational domain and time step  are chosen as
  ${\mathcal D}=[-8,8]\times[-8,8]\times[-4,4]$ and $\Delta t=0.0001$, respectively. 
 We consider two cases of collapse dynamics: for  $i,j=1,2,$  $k_j=3-j$

  \begin{itemize} 
 \item  Case 1:  let $\beta_{ij}=\beta$ and  change the DDI strength from $\lambda_{ij}=\lambda$  to 
 $ \lambda_{11}= \lambda_{jk_j}=2 \lambda_{22}=10\lambda$.

 \item Case 2: let $ \lambda_{jj}= \lambda$, $\lambda_{jk_j}= 0$ and  change $\beta_{ij}=\beta=103.58$ to 
 $  \beta_{ij}= -600$. 
 \end{itemize}
 
\end{exmp}

\medskip

Figures \ref{fig:dens_3Deg2_collapse1}-\ref{fig:dens_3Deg2_collapse2} depict 
the isosurface of the densities for  $|\psi_j(\bx, t)|^2=0.002$ ($j=1,2$) at different times,
while Fig.~\ref{fig:collapse_energy} shows the dynamics of energies. 
From these figures, we can see that: (i)  The densities of the dipolar BECs collapse at finite time during the dynamics,
i.e. the finite time blow-up of the solution is observed. This is especially clear for case one where the contact short-range interaction 
are all repulsive. This reveals clearly the partial-attractive/partial-repulsive property of the DDI. 
 (ii) The total energy and mass are conserved well before the blow-up time. They are not conserved near or after the blow-up time 
 since the solution can no longer be resolved  with a fixed mesh size and time steps.

To sum up, Examples \ref{eg:3D_eg1} and \ref{eg:3D_eg2} show that the dynamics of the dipolar BECs are  interesting and also very much complicated. Different structure formations occur during dynamics and they depend heavily on the dipole orientation and the ratio between the DDI  and  contact interaction strength. Moreover, the  global existence  and finite-time blow-up of the solution depend on those interaction parameters, which we leave it as future consideration. 
 
\begin{figure}[h!]
\centerline{
a)\psfig{figure=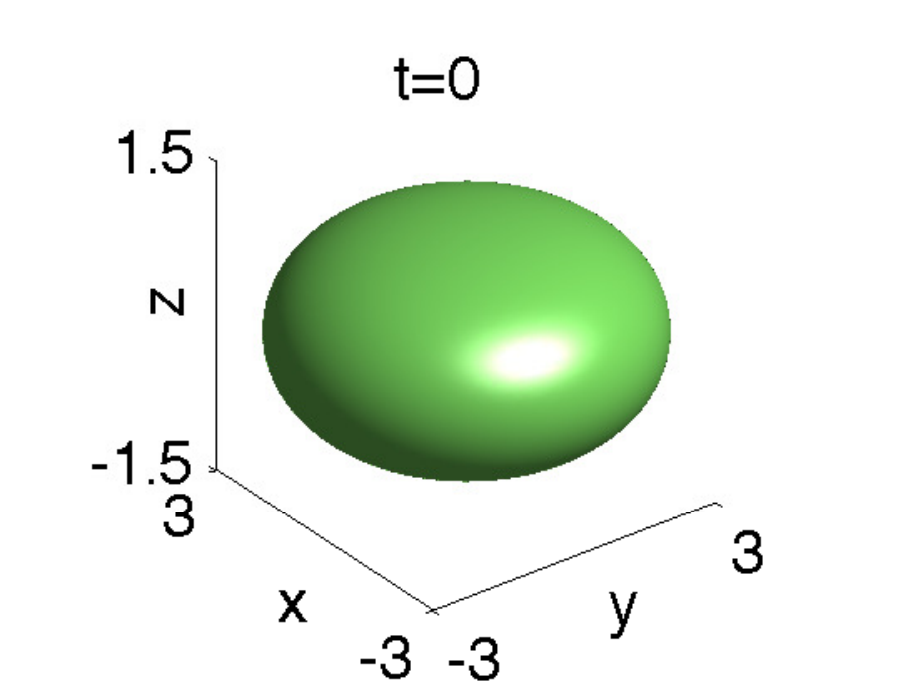,height=5.cm,width=6.2cm,angle=0}\quad
b)\psfig{figure=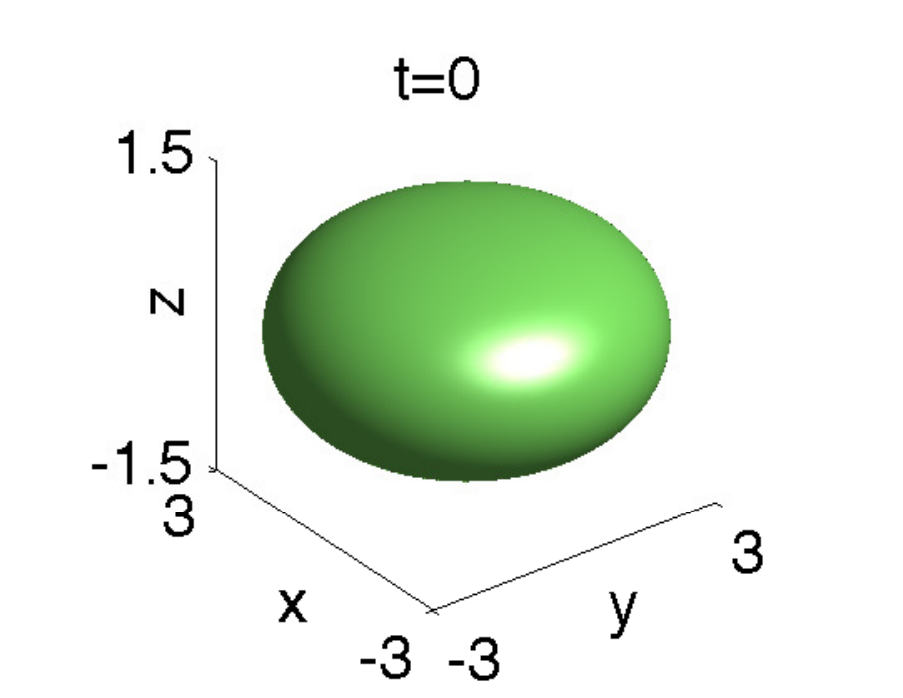,height=5.cm,width=6.2cm,angle=0}\;
}
\caption{Isosurface of the initial densities $\rho_1^0(\bx)=|\psi_1^0(\bx)|^2=0.002$ (a) and
 $\rho_2^0(\bx)=|\psi_2^0(\bx)|^2=0.002$ (b) in Example \ref{eg:3D_eg2}.}
 \label{fig:dens_3Dini_collapse12}
\end{figure}

\begin{figure}[h!]
\centerline{
\psfig{figure=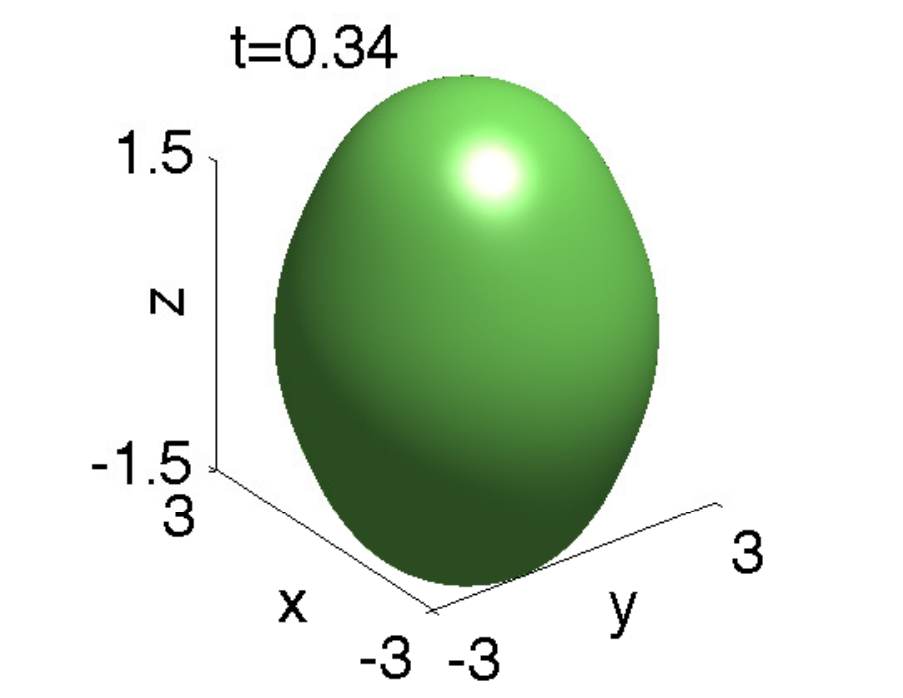,height=3.6cm,width=4.2cm,angle=0}
\psfig{figure=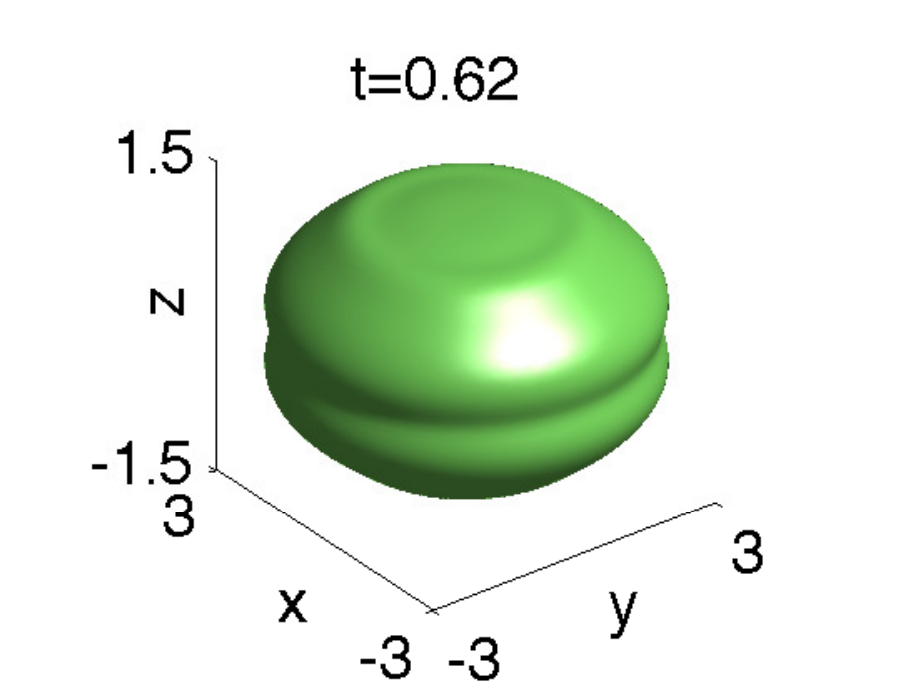,height=3.7cm,width=4.2cm,angle=0}
\psfig{figure=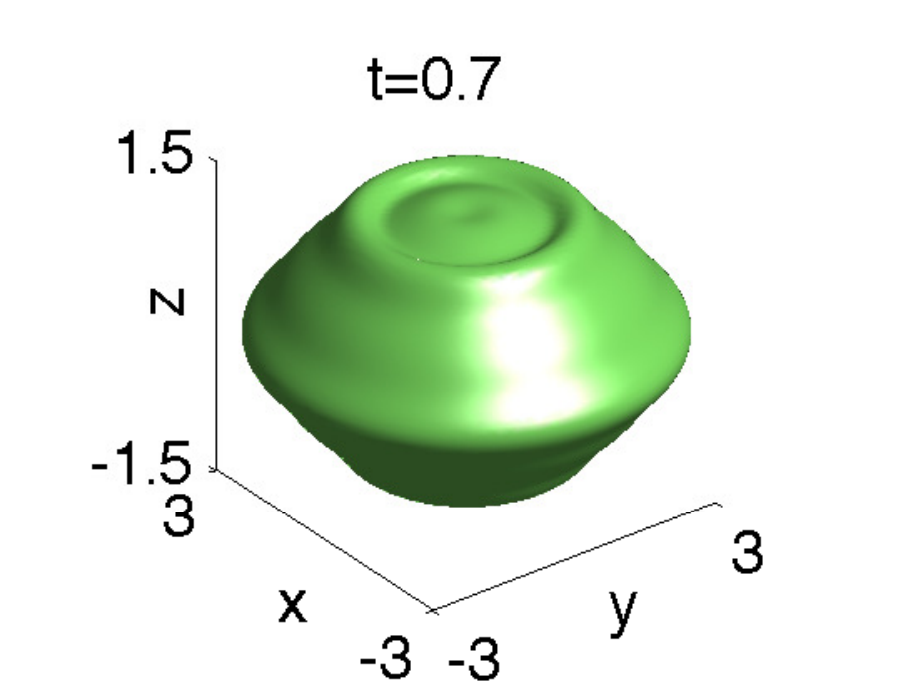,height=3.7cm,width=4.2cm,angle=0}
\psfig{figure=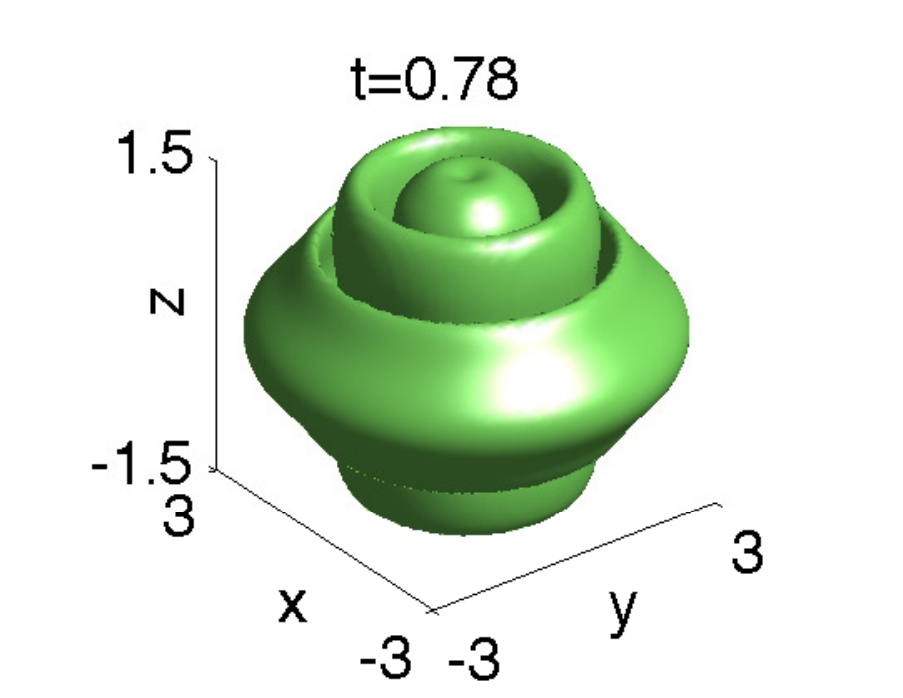,height=3.7cm,width=4.2cm,angle=0}
}
\vspace{0.5cm}
\centerline{
\psfig{figure=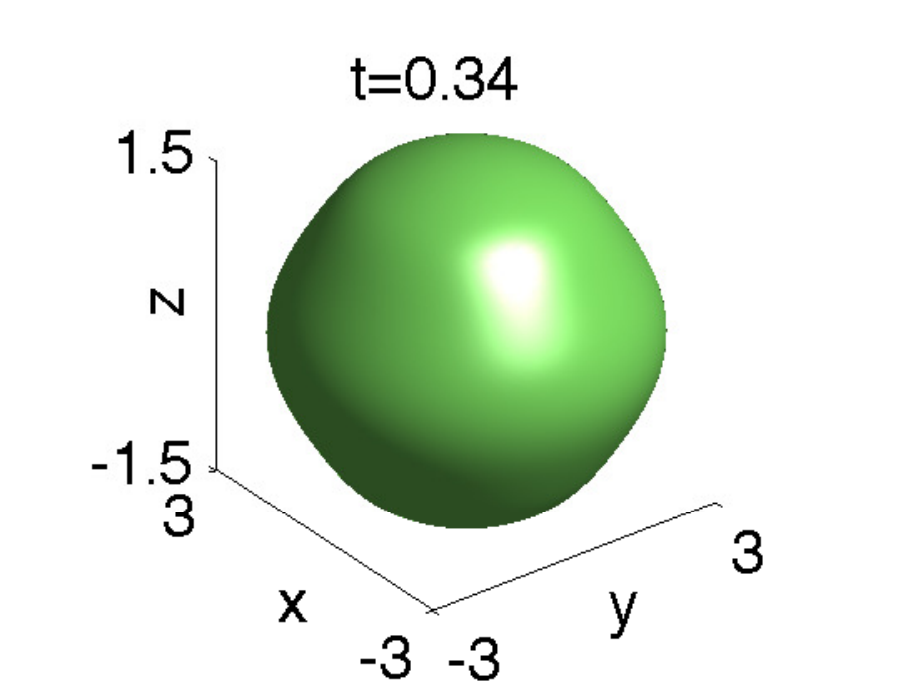,height=3.6cm,width=4.2cm,angle=0}
\psfig{figure=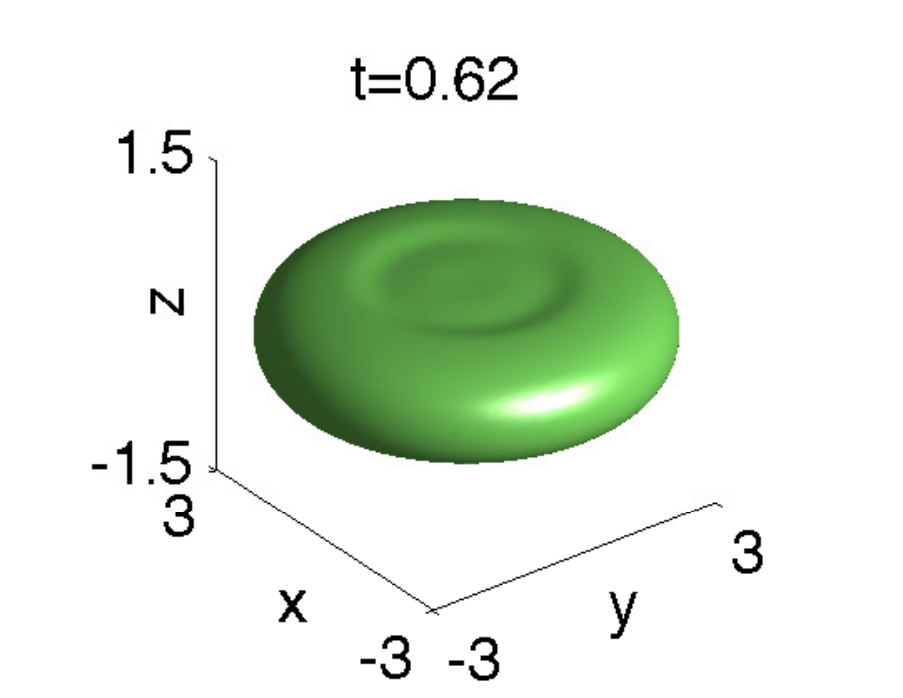,height=3.7cm,width=4.2cm,angle=0}
\psfig{figure=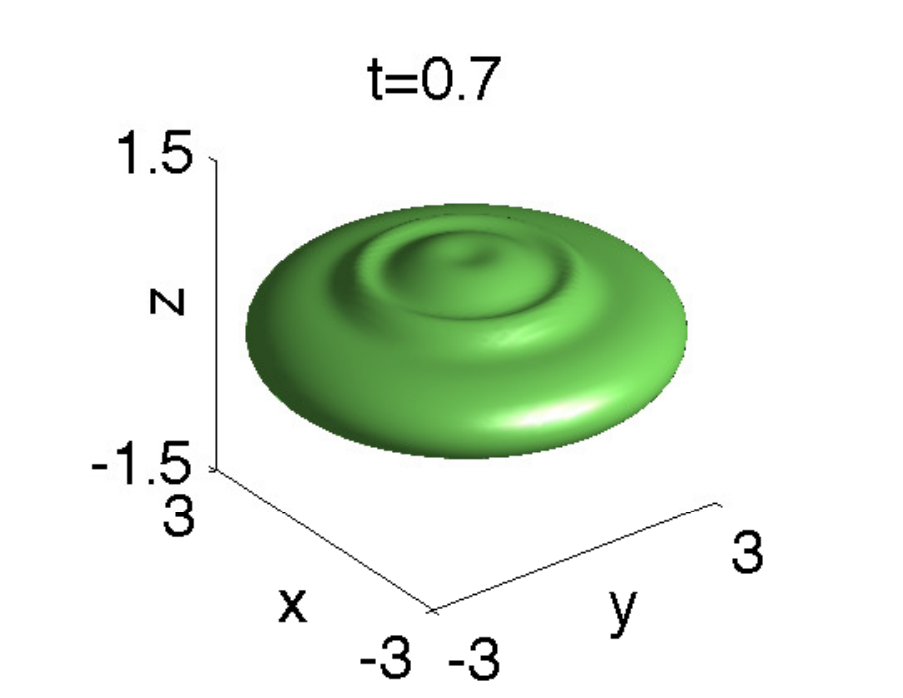,height=3.7cm,width=4.2cm,angle=0}
\psfig{figure=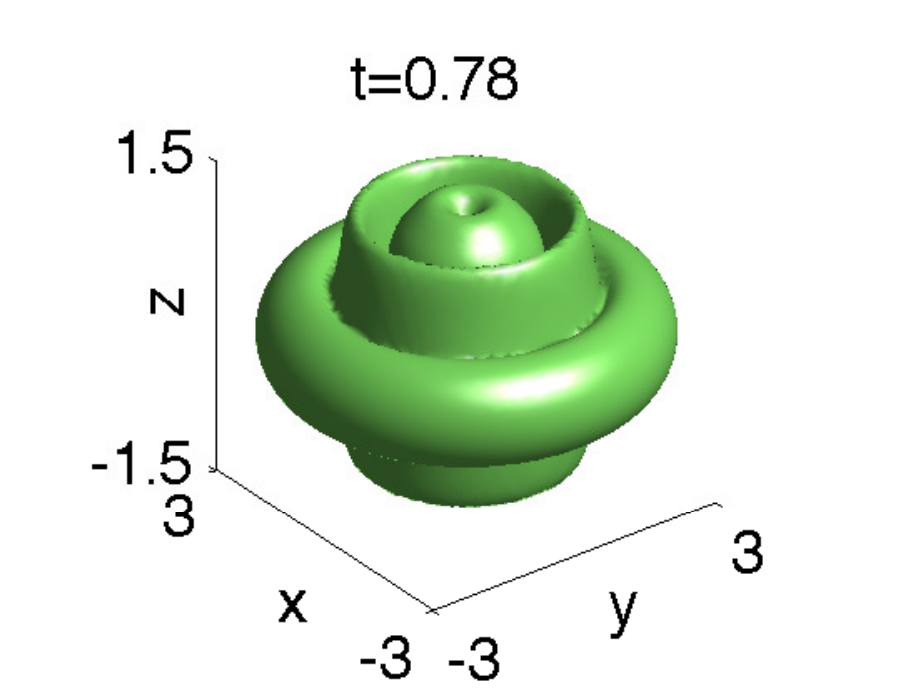,height=3.7cm,width=4.2cm,angle=0}
}
\caption{Isosurface of the densities $\rho_1(\bx,t)=|\psi_1(\bx,t)|^2=0.002$ (first row) and
 $\rho_2(\bx,t)=|\psi_2(\bx,t)|^2=0.002$ (second row) at different times for collapse of Case 1 in Example \ref{eg:3D_eg2}.}
 \label{fig:dens_3Deg2_collapse1}
\end{figure}

\begin{figure}[h!]
\centerline{
\psfig{figure=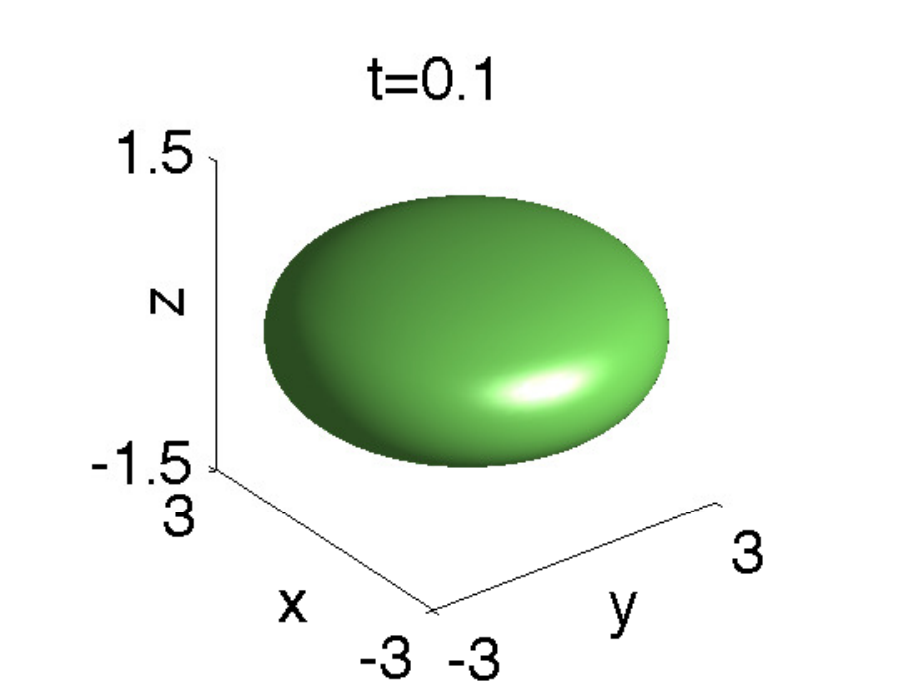,height=3.6cm,width=4.2cm,angle=0}
\psfig{figure=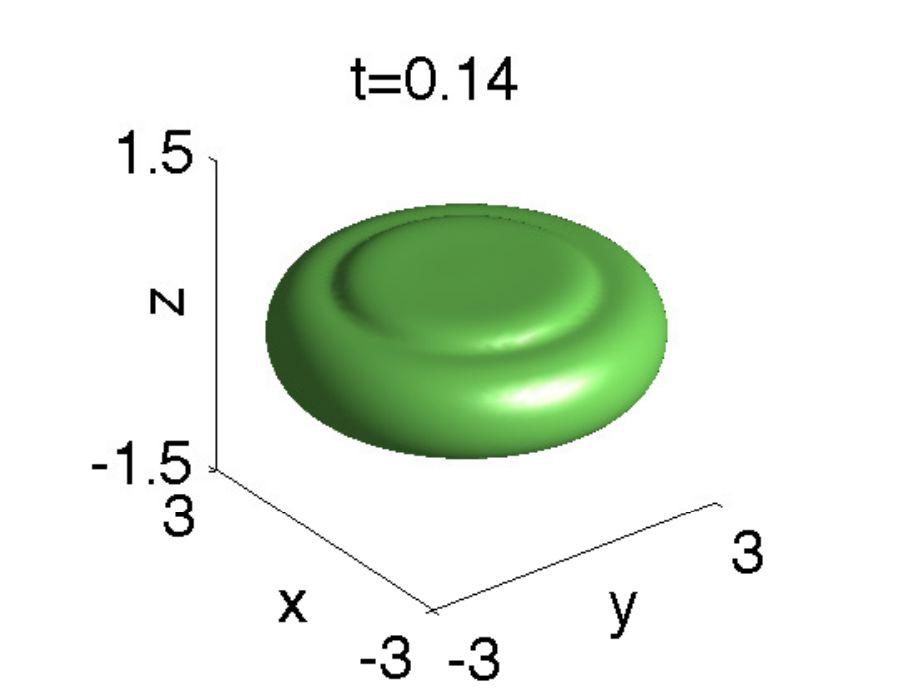,height=3.7cm,width=4.2cm,angle=0}
\psfig{figure=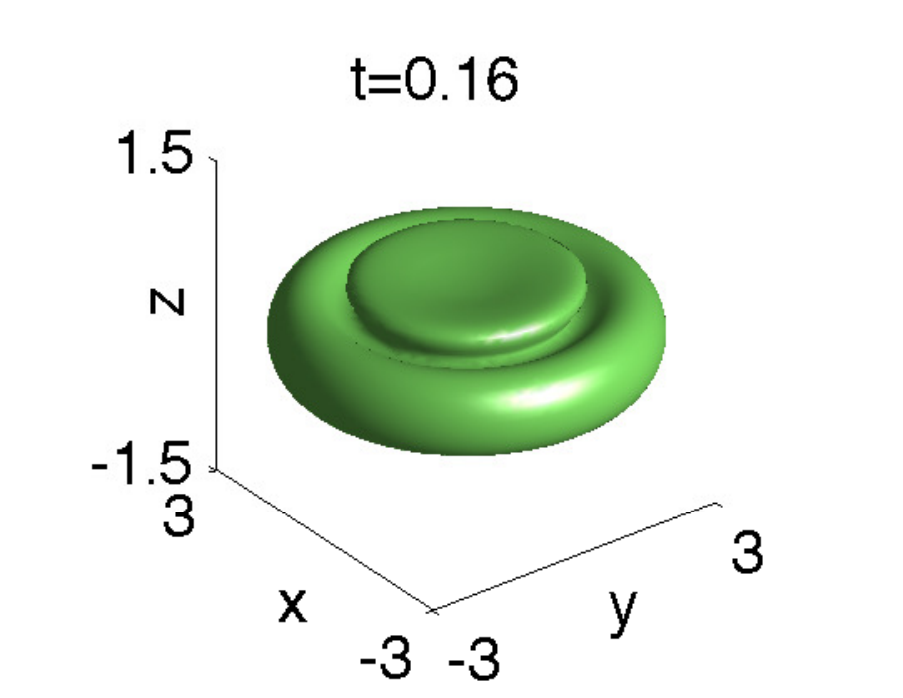,height=3.7cm,width=4.2cm,angle=0}
\psfig{figure=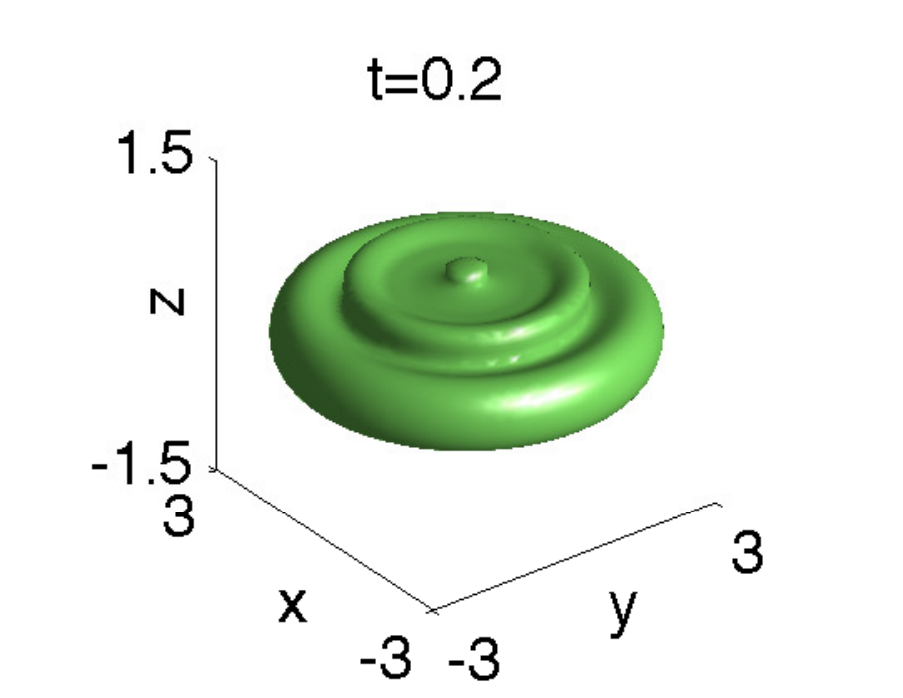,height=3.7cm,width=4.2cm,angle=0}
}
\vspace{0.5cm}
\centerline{
\psfig{figure=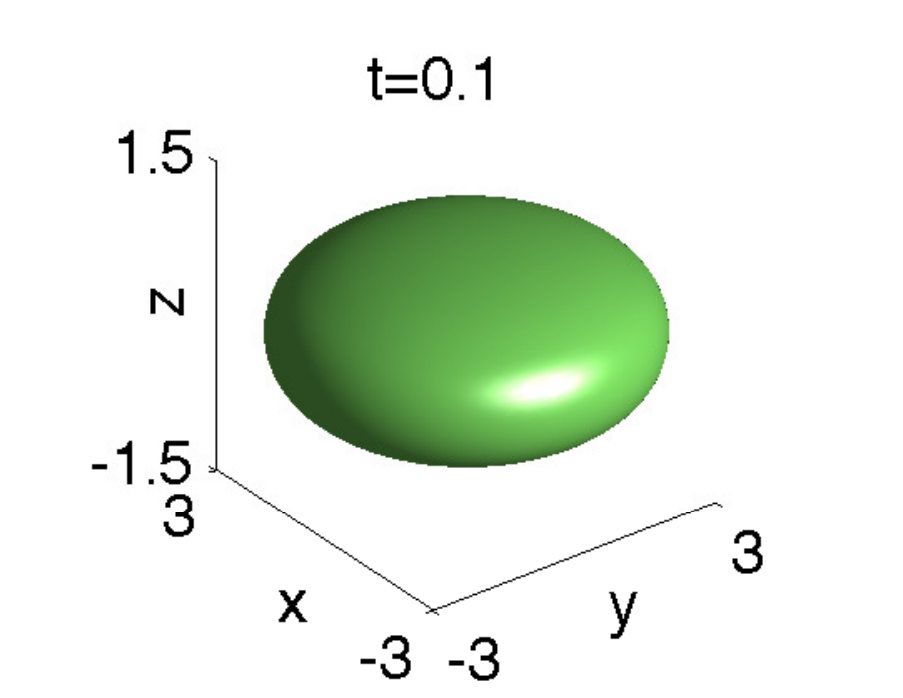,height=3.6cm,width=4.2cm,angle=0}
\psfig{figure=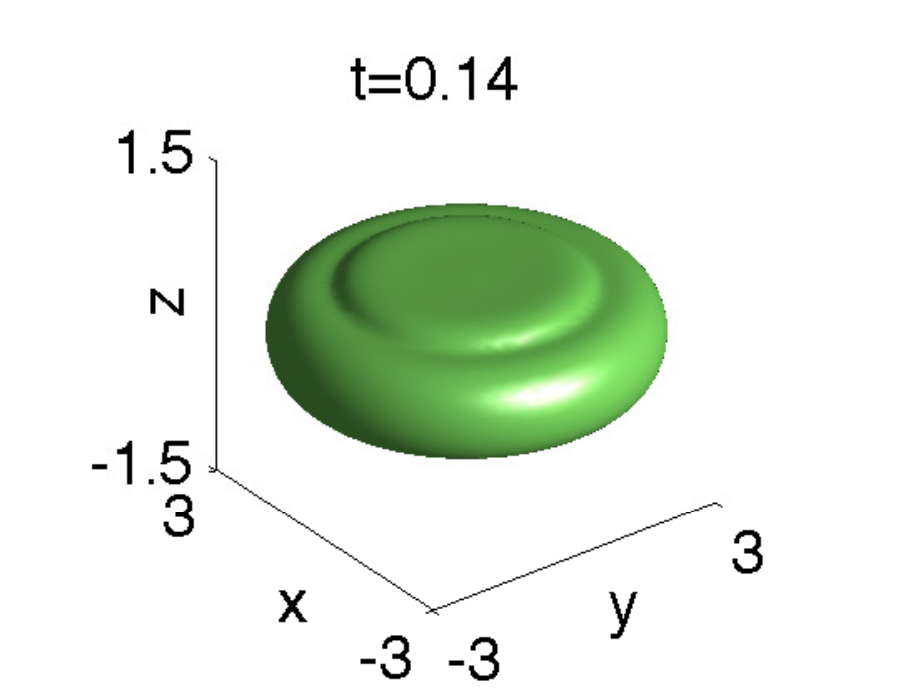,height=3.7cm,width=4.2cm,angle=0}
\psfig{figure=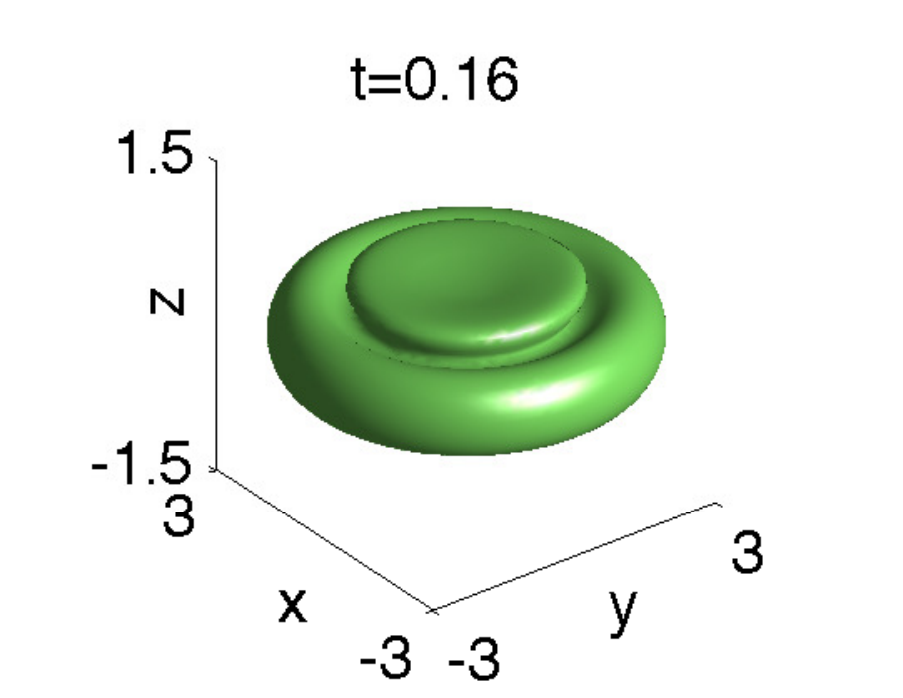,height=3.7cm,width=4.2cm,angle=0}
\psfig{figure=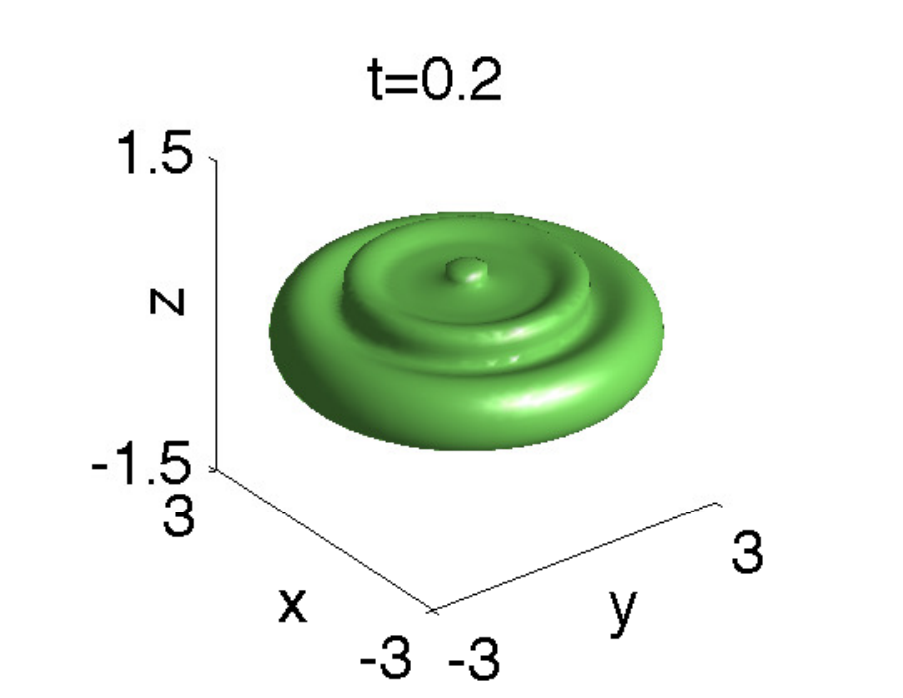,height=3.7cm,width=4.2cm,angle=0}
}
\caption{Isosurface of the densities $\rho_1(\bx,t)=|\psi_1(\bx,t)|^2=0.002$ (first row) and
 $\rho_2(\bx,t)=|\psi_2(\bx,t)|^2=0.002$ (second row) at different times for collapse of Case 2 in Example \ref{eg:3D_eg2}.}
 \label{fig:dens_3Deg2_collapse2}
\end{figure}

 \begin{figure}[h!]
\centerline{
(a). \psfig{figure=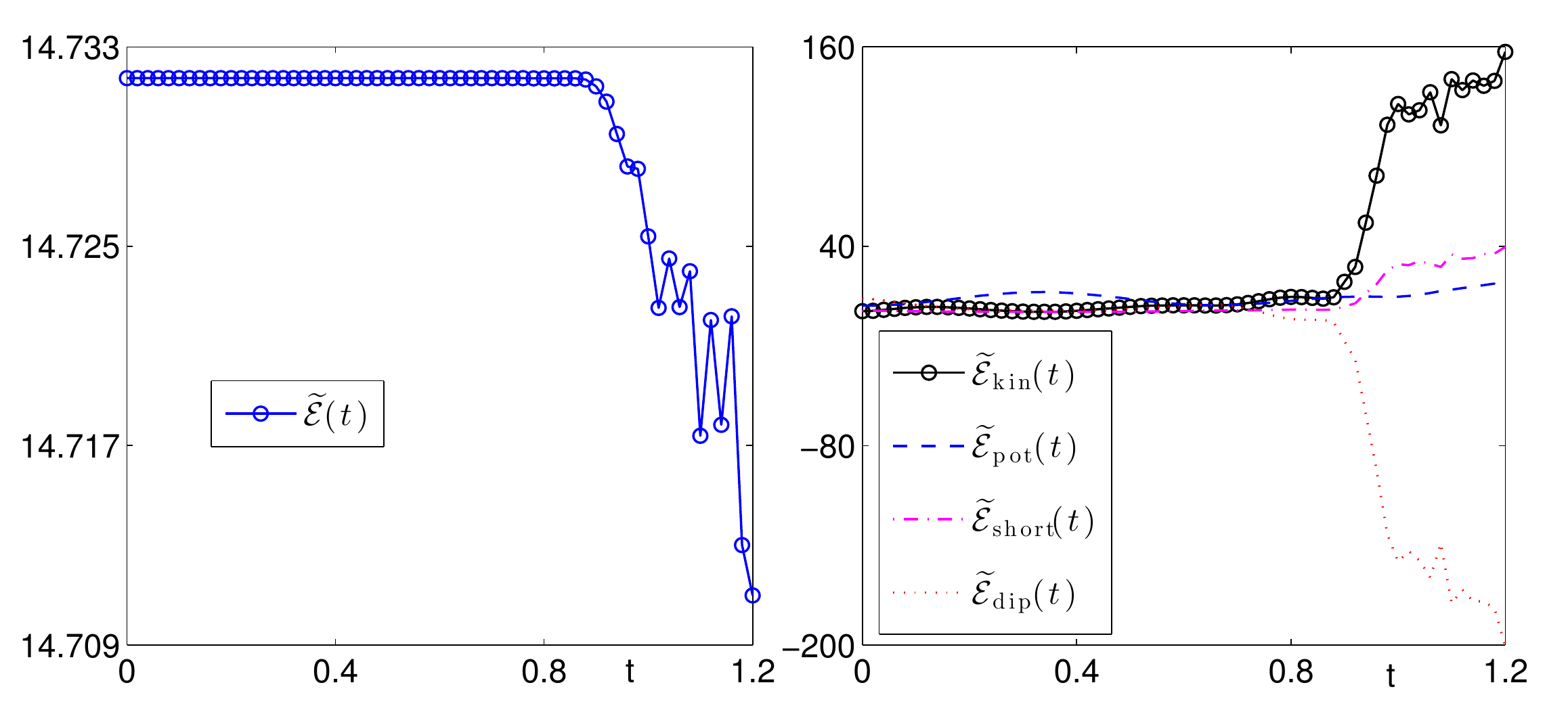,height=3.8cm,width=8.5cm,angle=0}\;
(b). \psfig{figure=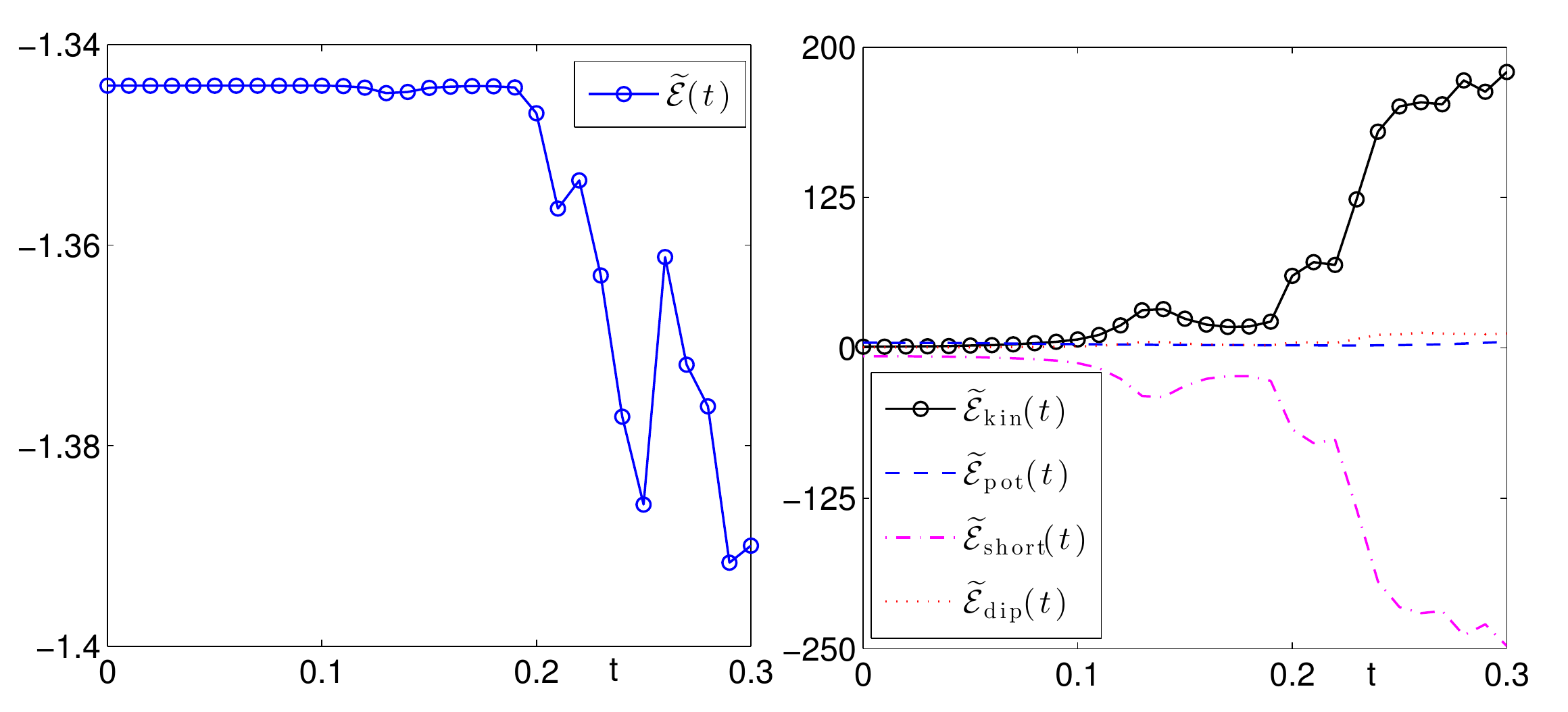,height=3.8cm,width=8.5cm,angle=0}
}
 \caption{Collapse energies for Case 1 (a) and Case 2 (b) in Example \ref{eg:3D_eg2}.}
 \label{fig:collapse_energy}
\end{figure}

\section{Conclusions}
We proposed a robust and accurate numerical scheme to compute the dynamics of the rotating two-component  dipolar Bose-Einstein condensates (BEC).
In rotating Lagrangian coordinates, the original coupled Gross-Pitaevskii equations (CGPE) were reformulated into new
equations where the rotating term vanishes.  
We then developed a new time splitting Fourier pseudospectral method to simulated the dynamics of the new equations.
The nonlocal Dipole-Dipole Interactions (DDI) were evaluated with the Gaussian-sum (GauSum) solver \cite{EMZ2015}, 
which help achieve spectral accuracy within $O(N\log N)$ operations, where $N$ is total number of grid points. 
Our method is proved to be robust and efficient, and it has spectral accuracy in space and second order accuracy in time.
Dynamical laws of total mass, energy, center of mass and  angular momentum expectation are derived and confirmed numerically.
We then applied the scheme to study the dynamics of quantized vortex lattices, the collapse dynamics of 3D dipolar BECs and 
identified some phenomena that are peculiar to the rotating two-component dipolar BECs.

\section*{Acknowledgements}
We acknowledge the support from the  ANR project BECASIM
ANR-12-MONU-0007-02 (Q. Tang), the Schr\"{o}dinger Fellowship J3784-N32, the ANR project Moonrise ANR-14-CE23-0007-01 and the Natural Science Foundation of China grants 11261065, 91430103 and 11471050 (Y. Zhang).
The authors would like to acknowledge the stimulating and helpful discussions with Prof. Weizhu Bao on the topic. 
The computation results presented have been achieved in part by using the Vienna Scientific Cluster.


\bigskip

\end{document}